\newtheorem{defi}{Definition}[section]
\theoremstyle{remark}
\newtheorem*{rem}{Remark}
\theoremstyle{plain}
\newcommand\ddfrac[2]{\frac{\displaystyle #1}{\displaystyle #2}}
\newtheorem{thm}[defi]{Theorem}
\newtheorem{prop}[defi]{Proposition}
\newtheorem{cor}[defi]{Corollary}
\newtheorem{lemma}[defi]{Lemma}
\renewcommand{\theta}{\vartheta}
\DeclareMathOperator{\cn}{cn}
\DeclareMathOperator{\sn}{sn}
\DeclareMathOperator{\ns}{ns}
\DeclareMathOperator{\cs}{cs}
\DeclareMathOperator{\dn}{dn}
\DeclareMathOperator{\sgn}{sgn}
\begin{document}
\begin{titlepage} 
\vspace{-4cm}
		\begin{centering}
			
			\begin{figure}[!h]

				\begin{minipage}{0.25\linewidth}
					\begin{center}
						\includegraphics[scale=.5]{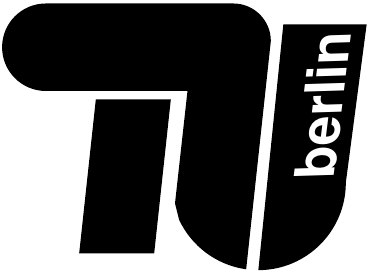} 
					\end{center}  
				\end{minipage}
				\begin{minipage}{0.4\linewidth}
					\text{Technische Universit\"at Berlin }\\
					\text{Institut f\"ur Mathematik}
				\end{minipage}
				\hfill
			\end{figure}
			
			\LARGE
			
			Masterarbeit\\[.7cm]
			
			\textbf{Geometry of the discrete time Euler top and related 3-dimensional birational maps}\\[.7cm]

			\large


					\includegraphics[width = .8\linewidth]{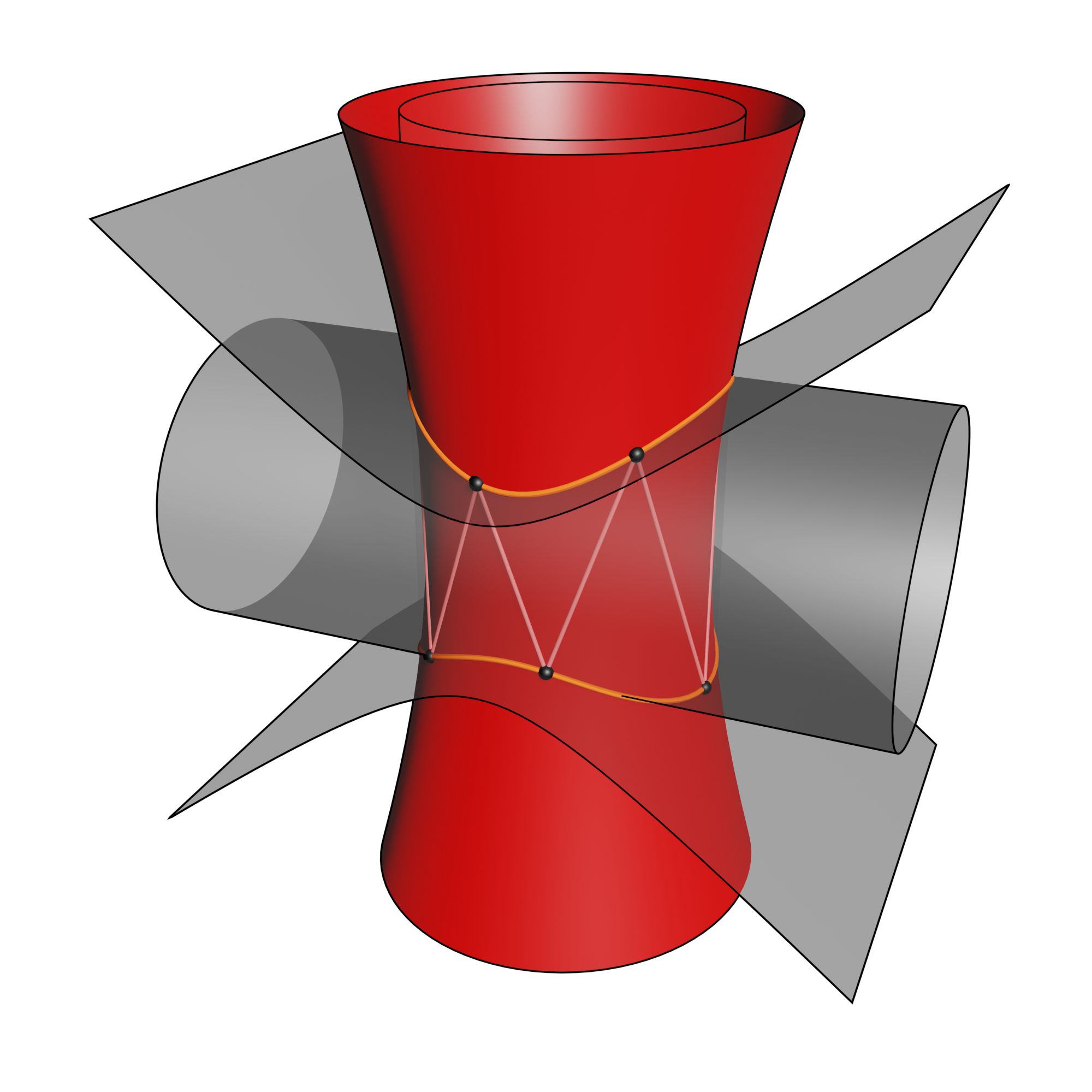}\\
			Nina Smeenk, Matrikelnummer 366480\\
			Berlin, den 27. November 2020\\
			[.7cm]
			\begin{minipage}{\linewidth} 
				\begin{tabbing}
					Erstgutachter:\quad \=  Prof. Dr. Yuri B. Suris, TU Berlin\\
					Zweitgutachter:\quad \= Prof. Dr. Alexander I. Bobenko, TU Berlin \\
				\end{tabbing}
			\end{minipage}

		\end{centering}
	\end{titlepage}
	
\newpage\null\thispagestyle{empty}    
\setcounter{page}{2}



\newpage\null\thispagestyle{empty}
\section*{Zusammenfassung}
Der Euler Kreisel ist ein bedeutendes und umfangreich erforschtes integrables dynamisches System. Diskretisierungen, welche die Integrabilit\"at des stetigen Systems erhalten, wurden erforscht und viele weisen, neben der Integrabilit\"at, weitere bedeutende Eigenschaften auf. Mit einer dieser integrablen Diskretisierungen nach Hirota und Kimura werden wir uns geometrisch auseinandersetzten. Die expliziten L\"osungen dieser Diskretisierung k\"onnen mit Hilfe einer birationalen Abbildung oder aber mit Hilfe Jacobi elliptischer Funktionen beschrieben werden. 
Die zugeh\"origen Erhaltungsgr\"o{\ss}en k\"onnen als Zylinder im 3-dimensionalen reellen Raum dargestellt werden. Die L\"osungen des Systems beschr\"anken sich dann auf die Schnittkurve der Zylinder. Mit Hilfe von Quadriken, in der von den Zylindern aufgesapnnten Schar, werden wir Involutionen definieren, sodass wir den diskreten Euler Kreisel als Komposition zweier dieser Involutionen beschreiben k\"onnen. Einerseits betrachten wir diese Abbildungen als komplexe Involutionen, beschrieben durch Jacobi elliptische Funkionen, andererseits als reelle Involutionen. Fallen die beiden Involutionen im reellen Fall zusammen, wird gezeigt, dass es sich hierbei (bis auf Vorzeichen) um die Wurzel des diskreten Euler Kreisels handelt. 
Au{\ss}erdem wird untersucht, wann die reellen Involutionen rational sind, d.h. wann es sich um 3-dimensionale, birationale Abbildungen handelt.

\newpage\newpage\null\thispagestyle{empty}\newpage\tableofcontents
\newpage\null\thispagestyle{empty}\newpage
\pagenumbering{arabic}
\vspace{.3cm}
\section{Introduction}
\vspace{.4cm}
As one of the most famous integrable systems the Euler top and its integrable discretizations have been studied by many authors. Despite the great variety of known results, these systems continue to show remarkable new features when investigating them in various new settings.\\ \\
In this thesis we will study the geometry of an explicit discretization of the continuous equations of motions of the Euler top introduced by Hirota and Kimura in \cite{hirota2000discretization}, denoted by the HK-type discretization of the Euler top.
Among other significant features it has been shown that this system is integrable with three conserved quantities, two of which are independent.
These conserved quantities describe three cylinders in 3-dimensional real space. Any two such cylinders define a pencil consisting of quadrics defined by linear combination of the given two. Since only two of the cylinders are independent the third one is also a member of the pencil. All quadrics in a pencil intersect in the same 3-dimensional space curve, hence also the three cylinders. This specific intersection curve consists of two real connected components. Determined by the orientation of the conserved quantities, there are two, projectively equivalent, cases that are shown in Figure \ref{Fig:case_a_b}. \\ \\
We are interested in maps that interchange points on the two components of the intersection curve.
Geometrically these maps can be described in the following way: We fix a ruled quadric in the pencil given by the conserved quantities and a point on the base curve of the pencil, i.e., the intersection curve. The rulings of the quadric through the given point intersect the base curve in exactly one other point, lying on the other component of the intersection curve. The maps will be defined by interchanging these two points.\\ \\
After introducing the discrete system and its properties we will review how to parameterize the two real components of the intersection curve in terms of Jacobi elliptic functions. By discretizing this parametrization with a fixed time step, that can be found from the initial parameters of the system, the explicit solutions of the discrete system can also be stated in terms of Jacobi elliptic functions.
Using this description of the solutions we can find two ruled quadrics such that the composition of the previously described geometric maps, that interchange points on the intersection curve components along generators of the quadrics, matches the solutions of the HK-type discretization of the Euler top. The Figure on the title page shows iterations of these maps for the case where the two quadrics coincide.\newpage
\noindent Furthermore we extend the parametrization of the real components of the intersection curve to the complex numbers. In this case the intersection curve constitutes an embedding of a torus in $\mathbb{C}^3$ and we find that the geometric maps can explicitly be described as complex involutions. 
Therefore we find that the birational map describing the given discretization of the Euler top can be expressed as the composition of two complex involutions in terms of Jacobi elliptic functions. \\ \\
The same geometric maps can also be described with real parameters omitting Jacobi elliptic functions. By  first stating them for a general one-sheeted hyperboloid and a cylinder we already see that these maps are real involutions.\\ \\
We then specify these involutions to the cylinders given by the conserved quantities and ruled quadrics in their pencil.
Therefore we also find that the birational map describing the solutions of the given discretization of the Euler top can expressed as the composition of two real involutions.\\ \\ In the remaining part we will first investigate the case where the two involutions coincide. This case is in close correspondence to the square root of the discrete time Euler top. Subsequently we will give explicit examples where the involutions turn into birational 3-dimensional real maps.\\ \\ The visualizations in the thesis are done in Blender 2.8 using the python ddg library 'pyddglab' created at the Geometry Group at TU Berlin.
\newpage

\section{Geometry of the discrete time Euler top}
\vspace{.4cm}
As one of the most famous examples of integrable systems the Euler top and its integrable discretizations have been studied extensively, see for example \cite{adler2013algebraic, audin1999spinning}. Spinning tops, including the Euler top, are dynamical systems that describe the motions of rigid bodies with a fixed point under the influence of a constant gravity field. In general these systems are not integrable, however it has been shown that exactly three holonomic integrable tops, namely the Euler, Lagrange and Kovalevskaya top, exist, see \cite[pp 419-468]{adler2013algebraic}. These tops can be distinguished by their symmetry and position of their center of gravity. The Euler top does not admit any symmetry in its principal axis and its fixed point coincides with its center of gravity. The continuous equations of motion read 
\begin{align}
\label{contin_eq_of_motion}
\begin{cases} 
\dot{x}_1 = \alpha_1x_2x_3\\
\dot{x}_2 = \alpha_2x_1x_3\\
\dot{x}_3 = \alpha_3x_1x_2 
\end{cases}
\end{align}
with real parameters $\alpha_i$ encoding the moments of inertia in the body frame.  Even though many results are known, new results can be shown when investigating the system, as well as its integrable discretizations, in various new settings. Therefore we will take a closer look at the geometric background of an explicit discretization of the equations
of motion of the Euler top introduced by Hirota and Kimura in \cite{hirota2000discretization}. This discrete system is given by 
\begin{align}
\begin{cases} \label{equations_of_motion}
\tilde{x}_1 - x_1 =  \frac{\varepsilon \alpha_1}{2}(\tilde{x}_2x_3 + x_2\tilde{x}_3)\\
\tilde{x}_2 - x_2 =  \frac{\varepsilon \alpha_2}{2}(\tilde{x}_3x_1 + x_3\tilde{x}_1) \\
\tilde{x}_3 - x_3 = \frac{\varepsilon \alpha_3}{2}(\tilde{x}_1x_2 + x_1\tilde{x}_2)
\end{cases}
\end{align}
with real parameters $\alpha_i$ and $\epsilon$ where the latter denotes the discrete time step. For the parameters $\alpha_i$ we will always assume \begin{align} \label{sign_alphs}\alpha_1 < 0, \qquad \alpha_2 > 0, \qquad \alpha_3 < 0. \end{align}
For simplicity we introduce $$\delta_i := \frac{\epsilon \alpha_i}{2}.$$
Throughout this thesis when referring to the discrete time Euler top (dEt) the explicit discretization given by (\ref{equations_of_motion}) is meant.\newpage \noindent Explicitly the solutions of (\ref{equations_of_motion}) are given by the birational map \begin{align}
\label{explicit_f}
f(\cdot,\delta) \colon \mathbb{R}^3 \rightarrow \mathbb{R}^3, \quad f(x,\delta)=\tilde{x} \end{align}
with $\delta=(\delta_1,\delta_2,\delta_3)$ and 
{\setstretch{2.0}\begin{align*}
\begin{cases}\tilde{x}_1&=
\ddfrac{x_1+2\delta_1x_2x_3+x_1(-\delta_2\delta_3x_1^2+\delta_1\delta_3x_2^2+\delta_1\delta_2x_3^2)}{1-\delta_2\delta_3x_1^2-\delta_1\delta_3x_2^2-\delta_1\delta_2x_3^2 - 2\delta_1\delta_2\delta_3x_1x_2x_3}\\
\tilde{x}_2 &=
\ddfrac{x_2+2\delta_2x_1x_3+x_2(\delta_2\delta_3x_1^2-\delta_1\delta_3x_2^2+\delta_1\delta_2x_3^2)}{1-\delta_2\delta_3x_1^2-\delta_1\delta_3x_2^2-\delta_1\delta_2x_3^2 - 2\delta_1\delta_2\delta_3x_1x_2x_3}\\
\tilde{x}_3&= 
\ddfrac{x_3+2\delta_3x_1x_2+x_3(\delta_2\delta_3x_1^2+\delta_1\delta_3x_2^2-\delta_1\delta_2x_3^2)}{1-\delta_2\delta_3x_1^2-\delta_1\delta_3x_2^2-\delta_1\delta_2x_3^2 - 2\delta_1\delta_2\delta_3x_1x_2x_3}.
\end{cases}
\end{align*}}
\noindent \hspace{-.7em} The inverse $f^{-1}(\cdot,\delta)=f(\cdot,-\delta)$ can be found by inverting the signs of the $\delta_i$ and therefore yields the solutions of the dEt with negative time step $\epsilon$.\\ \\
This discretization of the Euler top has the conserved quantities \begin{align}
\label{F_i}F_1 = \frac{1-\delta_3\delta_1x_2^2}{1-\delta_1\delta_2x_3^2} \qquad F_2 = \frac{1-\delta_1\delta_2x_3^2}{1-\delta_2\delta_3x_1^2} \qquad F_3 = \frac{1-\delta_2\delta_3x_1^2}{1-\delta_3\delta_1x_2^2}
\end{align}
of which two are independent since $F_1F_2F_3=1$ (see \cite{pfadler2011bilinear}, also for more on their correspondence to the conserved quantities of the continuous case).
Geometrically these integrals describe the cylinders
\begin{equation}
\label{C_i}
\begin{aligned}
\mathcal{C}_1&\colon \quad \delta_1 \delta_3x_2^2 - F_1\delta_1 \delta_2x_3^2= 1-F_1\\
\mathcal{C}_2&\colon -F_2\delta_2 \delta_3x_1^2 + \delta_1 \delta_2x_3^2= 1-F_2\\ \text{ and } \
\mathcal{C}_3&\colon \quad \delta_2 \delta_3x_1^2 - F_3\delta_1 \delta_3x_2^2= 1-F_3. 
\end{aligned}
\end{equation}
Starting with a point $X = (X_1,X_2,X_3)\in \mathbb{R}^3$ and fixed parameter $\delta$, the values $F_1,F_2$ and $F_3$ and therefore the cylinders $\mathcal{C}_1,\mathcal{C}_2$ and $\mathcal{C}_3$ are uniquely defined. The linear combination of any two such cylinders define a pencil of quadrics. With the dependency of the third cylinder we find that it is included in the pencil. Since all quadrics in a pencil intersect in a common intersection curve, called the \emph{base curve} of the pencil, also $\mathcal{C}_1,\mathcal{C}_2$ and $\mathcal{C}_3$ have a common intersection curve. This curve consists of two real connected components. The solutions of the difference equations (\ref{equations_of_motion}) will lie on the common intersection curve of $\mathcal{C}_1,\mathcal{C}_2$ and $\mathcal{C}_3$. They even restrict to one of the components that is determined by the position of the initial condition $X$.\\ \\ By the previous assumption on the signs of $\alpha$ and hence $\delta$ in (\ref{sign_alphs}) we find, for small enough $\epsilon>0$, that $F_1 \in (0,1)$ and $F_3>1$, which implies that $\mathcal{C}_1$ and $\mathcal{C}_3$ are elliptic cylinders. \\ \\The geometry of the system splits into two projectively equivalent cases:  
\begin{enumerate}
\item[case (a):] Let $F_2>1$ or equivalently $ F_1 < F_3^{-1} < 1$. Then $F_2$ defines a hyperbolic cylinder of two sheets where the sheets are separated by the x,y-plane. Hence the two components of the intersection curve are mirrored on the x,y-plane. For a given parametrization of one component, negation of the z-coordinate yields a parametrization of the other component, see Figure \ref{Fig:case_a_b} (left).
\item[case (b):]Let $F_2<1$ or equivalently $ F_3^{-1}< F_1 < 1$. Then $F_2$ defines a hyperbolic cylinder of two sheets where the sheets are separated by the y,z-plane. Hence the two components of the intersection curve are mirrored on the y,z-plane. For a given parametrization of one component negation of the x-coordinate yields a parametrization of the other component, see Figure \ref{Fig:case_a_b} (right).
\end{enumerate}
For both cases we will now review the parametrization of the intersection curve in terms of elliptic functions, the corresponding discretization that yields the solutions of the dEt and introduce hyperboloids in the pencil of $\mathcal{C}_1,\mathcal{C}_2$ and $\mathcal{C}_3$.
On these hyperboloids we will geometrically find two maps such that their composition matches the solutions of the dEt. For this let $\sigma \in \{a, b\}$ denote any of the two given cases.
\begin{figure}[t]
\begin{minipage}{0.53\linewidth}
\begin{overpic}[width=.9\linewidth]
{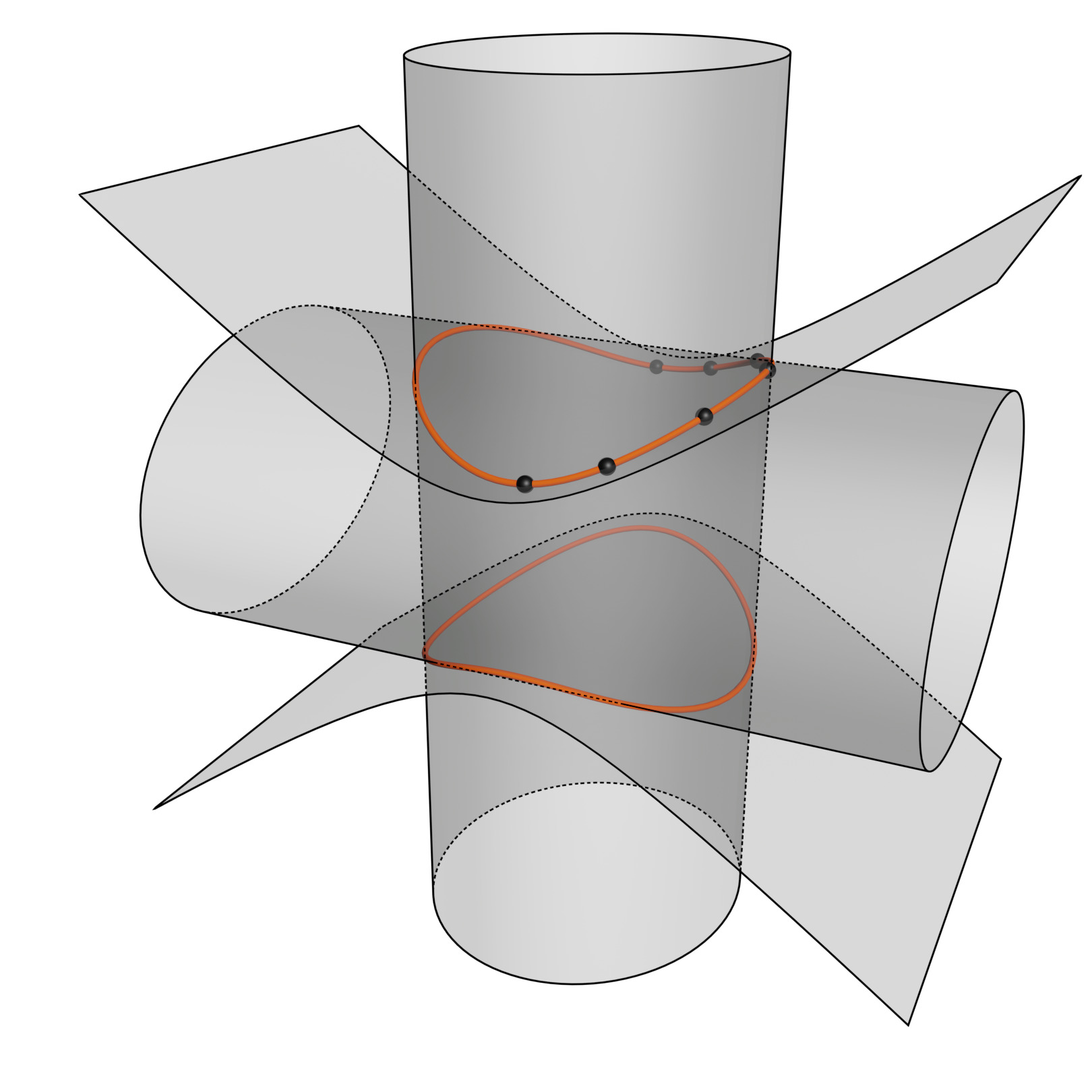}
\put(46,57){$\scriptstyle X$}
\end{overpic}
\end{minipage}
\hfill
\begin{minipage}{0.48\linewidth}
\begin{overpic}[width=\linewidth]
{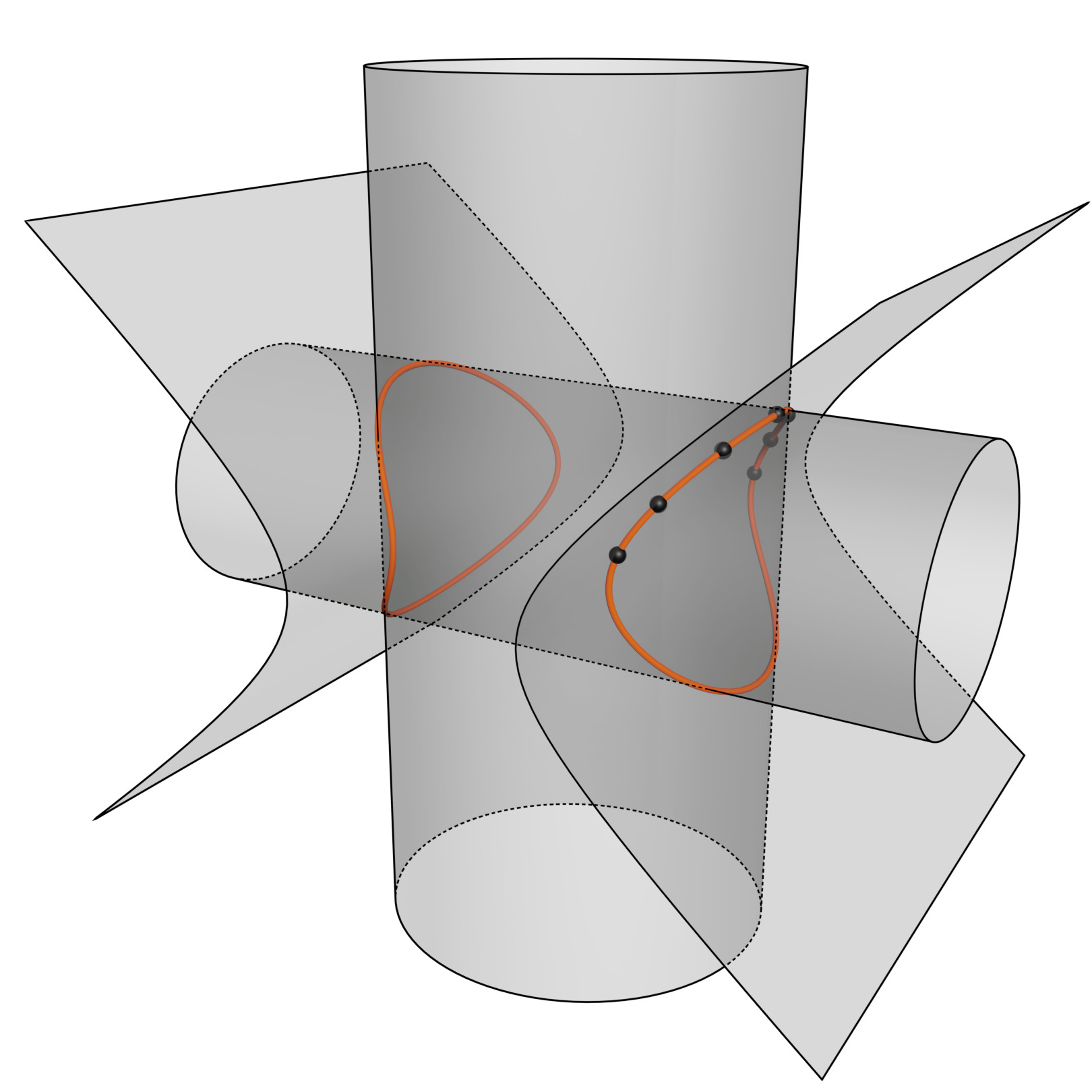} 
\put(58,46){$\scriptstyle X$}
\end{overpic}
\end{minipage}
\caption{The HK-type discretization of the Euler top: The quadrics symbolize the conserved quantities of the discrete systems and solutions of the system lie on their common intersection curve. The left picture corresponds to case (a) and the right one to case (b).}
\label{Fig:case_a_b}
\end{figure}

\subsection{Explicit solutions of the discrete time Euler top in terms of elliptic functions}
\vspace{.4cm}
\label{Sec:case_a}
Explicit solutions of the discrete system (\ref{equations_of_motion}) in terms of elliptic functions have been investigated in \cite{kimura2000discretization} and extended in 
\cite{petrera2010hamiltonian}.
This section will provide a summary of the results shown in both. We will use the Jacobi elliptic functions $\sn(u,k)$, $\cn(u,k)$ and $\dn(u,k)$ with the argument $u \in \mathbb{C}$ and the parameter $k \in [0,1]$. These functions generalize the trigonometric functions $\sin$ and $\cos$ for a circle to trigonometric functions for conic sections. The parameter $k$ is called the \emph{modulus}. Beside the definition of $k$, definitions of $\sn(u,k)$, $\cn(u,k)$ and $\dn(u,k)$ in terms of theta functions can be found in \cite{Nist}. Additionally, \cite{Nist} contains a great overview of elementary identities, addition theorems and much more, which we will often refer to. For fixed modulus $k$ the Jacobi elliptic functions, as functions in $u$, are doubly periodic, meromorphic functions with simple poles and zeros. For $u\in\mathbb{R}$ the functions take on real values. Since we are interested in parameterizing the intersection curve in 3-dimensional real space we generally restrict to $u\in \mathbb{R}$, except in Section \ref{Sec:complex}, where we will extend to complex arguments.
Throughout the thesis we will omit the modulus $k$ as a second parameter.
The moduli for the cases (a) and (b) will be given in (\ref{case_a_A_k}) and (\ref{case_b_B_k}) and stay fixed throughout the rest of the thesis. \\ \\
In case (a) we can parameterize the intersection curve of the cylinders $\mathcal{C}_1,\mathcal{C}_2$ and $\mathcal{C}_3$ by 
\begin{equation*}
v^{a}_\pm\colon \mathbb{R}\rightarrow \mathbb{R}^3, \quad u \mapsto \begin{pmatrix}
A_1 \cn(u)\\
\pm A_2 \sn(u)\\
\pm A_3 \dn(u)
\end{pmatrix}
\end{equation*}
and in case (b) by 
\begin{equation*}
v^{b}_\pm\colon \mathbb{R}\rightarrow \mathbb{R}^3, \quad u \mapsto \begin{pmatrix}
\pm B_1 \dn(u)\\
\pm B_2 \sn(u)\\
B_3 \cn(u)
\end{pmatrix}
\end{equation*}
where $A_i, B_i \in\mathbb{R}$ and the moduli $k_a,k_b\in [0,1]$ are to be determined. The sign $\pm$ distinguishes the two real components of the intersection curve and the orientation of their parametrization. The given formulas describe the two real components of the intersection curve with opposite orientations. To determine the parameters $A_i$ and $k_a$, we find that substitution into $\mathcal{C}_3$ yields
$$\delta_2 \delta_3A_1^2 \cn^2(u) - F_3\delta_3 \delta_1A_2^2\sn^2(u)= 1-F_3$$
and into $\mathcal{C}_1$:
\begin{align*}
\delta_3 \delta_1 A_2^2 \sn^2(u) - F_1\delta_1 \delta_2 A_3^2\dn^2(u) &= 1-F_1.
\end{align*}
Using the basic identities $$\sn^2(u)+\cn^2(u)=k^2\sn^2(u)+\dn^2(u) = 1$$ for Jacobi elliptic functions with common modulus, by the previous substitution, we derive at
\begin{equation}
\label{case_a_A_k}
\begin{aligned}
A_1^2 &= \frac{1-F_3}{\delta_2 \delta_3}, & A_2^2 &= \frac{1-F_3^{-1}}{\delta_3 \delta_1},\\
A_3^2 &= \frac{1-F_1^{-1}}{\delta_1 \delta_2}, & k_{a}^2 &= \frac{1-F_3^{-1}}{1- F_1}.
\end{aligned}
\end{equation}
Similar substitution for case (b) yields 
\begin{equation}
\label{case_b_B_k}
\begin{aligned}
B_2^2 &= \frac{1-F_1}{\delta_1 \delta_3}, & B_3^2 &= \frac{1-F_1^{-1}}{\delta_1 \delta_2},\\
B_1^2 &= \frac{1-F_3}{\delta_2 \delta_3}, &  k_{b}^2 &= \frac{1-F_1}{1- F_3^{-1}}.
\end{aligned}
\end{equation}
Thus we derive at the full parametrization
\begin{align}
\label{case_a_v}
v^{a}_\pm\colon [0,4K_{a})\rightarrow \mathbb{R}^3, \quad u \mapsto \begin{pmatrix}
A_1 \cn(u)\\
\pm A_2 \sn(u)\\
\pm A_3 \dn(u)
\end{pmatrix}
\end{align}
respectively 
\begin{align}
\label{case_b_v}
v^{b}_\pm\colon [0,4K_{b})\rightarrow \mathbb{R}^3, \quad u \mapsto \begin{pmatrix}
\pm B_1 \dn(u)\\
\pm B_2 \sn(u)\\
B_3 \cn(u)
\end{pmatrix}
\end{align}
with $K_{\sigma} = \mathcal{K}(k_{\sigma})$ being the real quarter period where $\mathcal{K}(\cdot)$ denotes the elliptic integral of the first kind. For a formal definition, see \cite[§19.2(ii)]{Nist}.
A uniform sampling yields the following discretizations:$$v^{a}_\pm\colon \mathbb{Z} \rightarrow \mathbb{R}^3, \qquad n \mapsto \begin{pmatrix}
A_1 \cn(u_0 + \nu n)\\
\pm A_2 \sn(u_0 + \nu n )\\
\pm A_3 \dn(u_0 + \nu n)
\end{pmatrix}$$
respectively 
 $$v^{b}_\pm\colon \mathbb{Z} \rightarrow \mathbb{R}^3, \qquad n \mapsto \begin{pmatrix}
\pm  B_1 \dn(u_0 + \nu n)\\
\pm B_2 \sn(u_0 + \nu n )\\
B_3 \cn(u_0 + \nu n)\\
\end{pmatrix}$$
for fixed start value $u_0$ and time step $\nu$. Note that this becomes periodic iff $ N \nu = 4K_{\sigma}-u_0$ for some $N \in \mathbb{Z}$.\\ \\
Now let $X\in \mathbb{R}^3$. For given parameter $\delta$ of the dEt the cylinders $\mathcal{C}_1, \mathcal{C}_2$ and $\mathcal{C}_3$ are uniquely defined by $X$. The value of $F_2$ then determines the case $\sigma \in \{a,b\}$. Depending on $\sigma$, (\ref{case_a_A_k}) or (\ref{case_b_B_k}) determines the coefficients of the parametrization of the intersection curve of $\mathcal{C}_1, \mathcal{C}_2$ and $\mathcal{C}_3$, up to sign. By construction $X$ lies on this intersection curve such that $X$ can be expressed as  \begin{align*} 
X =v^{a}(u_0):= \begin{pmatrix}
A_1\cn(u_0) \\
A_2\sn(u_0) \\
A_3\dn(u_0)
\end{pmatrix}\quad 
\text{or} \quad 
X =v^{b}(u_0):= \begin{pmatrix}
B_1\dn(u_0) \\
B_2\sn(u_0) \\
B_3\cn(u_0)
\end{pmatrix}
\end{align*}
with $u_0\in[-2K_{\sigma},2K_{\sigma}]$. Here we omit the $\pm$ in the notation of $v$ since the position of $X$ determines exactly one of the components of the intersection curve which will be of interest now. The signs of $A_1,A_3$ resp.\ $B_1,B_3$ are uniquely determined by the values of $X$. The sign of $A_2$ resp.\ $B_2$ can be switched by changing $u_0$ to $-u_0$, which corresponds to reversing the orientation of the component. Note that this does not affect the remaining two components since $\cn$ and $\dn$ are even functions.
We now wish to determine the parameter $\nu$ such that the previous discretization yields the solution of the dEt. We will denote this shift as the \emph{elliptic time step} of the dEt. \\ \\ We rewrite the starting point as \mbox{$X= v^{\sigma}(u_0) = v^{\sigma}(\tilde{u}_0 - \frac{\nu}{2})$} and wish to find $\nu$ such that $\tilde{X}:=v^{\sigma}(\tilde{u}_0 +\frac{\nu}{2})$ fulfills the equations of motion (\ref{equations_of_motion}). With the addition theorems of the Jacobi elliptic functions stated in \cite[section 3.4]{petrera2010hamiltonian} it can be shown, that calculation of $v^{\sigma}(u_0 + \frac{\nu}{2}) - v^{\sigma}(u_0 -\frac{\nu}{2})$ yields the following identities for $\nu$:
\begin{align*}
A_1 \sn\left(\frac{\nu}{2}\right)\dn\left(\frac{\nu}{2}\right) &= -\delta_1 A_2 A_3 \cn\left(\frac{\nu}{2}\right)\\
A_2 \sn\left(\frac{\nu}{2}\right) &= \ \delta_2 A_1 A_3 \cn\left(\frac{\nu}{2}\right)\dn\left(\frac{\nu}{2}\right) \\
A_3\ k_{a}^2 \sn\left(\frac{\nu}{2}\right) \cn\left(\frac{\nu}{2}\right) &= - \delta_3 A_1 A_2 \dn\left(\frac{\nu}{2}\right)
\end{align*}
respectively
\begin{align*}
B_1\ k_{b}^2\sn(\frac{\nu}{2})\cn(\frac{\nu}{2})  &= -\delta_1 B_2 B_3 \dn(\frac{\nu}{2})\\
B_2 \sn(\frac{\nu}{2}) &= \ \delta_2 B_1 B_3 \cn(\frac{\nu}{2})\dn(\frac{\nu}{2}) \\
B_3 \sn(\frac{\nu}{2}) \dn(\frac{\nu}{2})  &= - \delta_3 B_1 B_2  \cn(\frac{\nu}{2}).
\end{align*}
In conclusion we find the following:
\begin{prop}
\label{Prop:nu} For given parameter $\delta$ of the dEt and initial condition\\ \mbox{$X := v_\pm^{\sigma}(u_0) \in \mathbb{R}^3$}  the solutions of the discrete equations of motion (\ref{equations_of_motion}) are given by  $$
 n \mapsto v_\pm^{\sigma}(u_0+n\nu) $$
with $n \in \mathbb{N}$ and $\nu>0$ implicitly given by 
\begin{align}
\label{case_a_nu}
&\sn^2\left(\frac{\nu}{2}\right) = 1-F_1 \qquad \text{for case (a)} \\
\text{and } & 
\label{case_b_nu}
\sn^2\left(\frac{\nu}{2}\right) = 1-F_3^{-1} \quad \text{for case (b).}
\end{align}
\end{prop}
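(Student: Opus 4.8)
The plan is to prove the statement in the symmetric one--step form already prepared above. Since the solution of \eqref{equations_of_motion} with given $x$ is the value $\tilde x=f(x,\delta)$ of a genuine map, it is enough to show that $f(v_\pm^{\sigma}(u),\delta)=v_\pm^{\sigma}(u+\nu)$ for every admissible $u$; the orbit $n\mapsto v_\pm^{\sigma}(u_0+n\nu)$ through $X=v_\pm^{\sigma}(u_0)$ then follows by induction, and on either component the same $\nu$ works because the computation below is insensitive to the choice of $\pm$. Substituting $u=\bar u-\tfrac\nu2$, this reduces to verifying that $x:=v^{\sigma}(\bar u-\tfrac\nu2)$ and $\tilde x:=v^{\sigma}(\bar u+\tfrac\nu2)$ satisfy \eqref{equations_of_motion} identically in $\bar u$, and then reading off which $\nu$ this forces.

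First I would expand $\tilde x_i-x_i$ and the symmetric products $\tilde x_jx_k+x_j\tilde x_k$ on the right--hand side of \eqref{equations_of_motion} by the addition theorems for $\sn,\cn,\dn$ from \cite[section~3.4]{petrera2010hamiltonian} (equivalently \cite{Nist}). Using $\sn^2+\cn^2=k_\sigma^2\sn^2+\dn^2=1$, every entry produced this way carries the common denominator $1-k_\sigma^2\sn^2(\bar u)\sn^2(\tfrac\nu2)$; once this is cleared, the three equations of \eqref{equations_of_motion} collapse to relations in which the whole $\bar u$--dependence cancels, leaving exactly the three scalar identities displayed before the statement --- one family in $A_1,A_2,A_3,\delta_i,k_a$ in case (a), one in $B_1,B_2,B_3,\delta_i,k_b$ in case (b), involving only $\sn(\tfrac\nu2),\cn(\tfrac\nu2),\dn(\tfrac\nu2)$. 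A convenient way to make the cancellations transparent is to pair the $i$--th equation with the entry $x_i+\tilde x_i$, so that the matching $\bar u$--factors already appear on both sides before any further simplification.

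Next I would solve those identities. In case (a), multiplying the first two of them gives $\sn^2(\tfrac\nu2)=-\delta_1\delta_2A_3^2\cn^2(\tfrac\nu2)$; inserting $A_3^2=(1-F_1^{-1})/(\delta_1\delta_2)$ from \eqref{case_a_A_k} and $\cn^2=1-\sn^2$ yields $\sn^2(\tfrac\nu2)=1-F_1$. In case (b), multiplying the last two identities and using $B_1^2=(1-F_3)/(\delta_2\delta_3)$ from \eqref{case_b_B_k} gives in the same way $\sn^2(\tfrac\nu2)=1-F_3^{-1}$. I would then check that, with this value fixed, the remaining identities in each family hold automatically --- this only uses \eqref{case_a_A_k}, \eqref{case_b_B_k} and $F_1F_2F_3=1$, so the three (dependent) constraints are consistent --- and that the sign pattern they prescribe for $\cn(\tfrac\nu2)$ and $\dn(\tfrac\nu2)$, combined with the signs of the $A_i$, $B_i$ and the signs \eqref{sign_alphs} of the $\delta_i$, is compatible with taking $\nu>0$ (rather than the step of $f^{-1}$). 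Since $F_1\in(0,1)$ in case (a) and $F_3>1$ in case (b), one has $1-F_1\in(0,1)$, resp.\ $1-F_3^{-1}\in(0,1)$, so a unique $\tfrac\nu2\in(0,K_\sigma)$ realizing the prescribed value of $\sn^2$ exists, which furnishes the claimed $\nu>0$.

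The main obstacle is the middle step: organizing the addition--theorem expansion so that the common denominator and, above all, every explicit occurrence of $\bar u$ provably drop out of all three equations of \eqref{equations_of_motion} simultaneously. The algebra itself is routine, but these simultaneous cancellations --- together with keeping the branch choices for the components ($\pm$) and for $\cn(\tfrac\nu2),\dn(\tfrac\nu2)$ coherent throughout --- are the real content; everything after that is a short computation with the explicit values \eqref{case_a_A_k} and \eqref{case_b_B_k}.
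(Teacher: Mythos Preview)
Your proposal is correct and follows essentially the same route as the paper: the paper also introduces the symmetric shift $u_0=\tilde u_0-\tfrac\nu2$, applies the addition theorems from \cite[section~3.4]{petrera2010hamiltonian} to the three equations of \eqref{equations_of_motion}, and arrives at the very same triple of scalar identities in $\sn(\tfrac\nu2),\cn(\tfrac\nu2),\dn(\tfrac\nu2)$ that you describe. Your write--up is in fact more explicit than the paper's, which simply records those identities and states the proposition; in particular, your trick of multiplying two of the three identities to isolate $\sn^2(\tfrac\nu2)$ and your remarks on the common denominator, the consistency check via $F_1F_2F_3=1$, and the sign/branch discussion for $\nu>0$ fill in details the paper leaves to the reader.
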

\noindent For a fixed starting condition $X=v_\pm^a(u_0)$, the restriction of the elliptic time step $\nu$ to be positive, as in the previous Proposition, will determine the sign of $A_2$ uniquely and thus will fix an orientation of the component of the intersection curve that includes $X$. Let $\hat{X} = v^a_\mp(-u_0)$. By definition of the parametrization we have $\hat{X}=(X_1,X_2,-X_3)$, if  $X = (X_1,X_2,X_3)$. The behavior of the discrete equations of motion with starting condition $\hat{X}$ can be investigated by direct computation: $$f(\hat{X}, \delta ) = \begin{pmatrix}
f(X,-\delta)_1\\
f(X,-\delta)_2\\
-f(X,-\delta)_3
\end{pmatrix}.$$
Hence for starting points on different components of the same intersection curve, the dEt evolves in different parametrization directions, i.e., it induces a parametrization such that the two components are reversely oriented, see Figure \ref{Fig:dEt_reversed_orientation}. This can easily also be verifired for case (b). The definition of the parametrization of the components $v^\sigma_+$ and $v_-^\sigma$ mirrors this behavior. \\ \\
\begin{figure}[h]
\begin{minipage}{0.53\linewidth}
\begin{overpic}[width=.9\linewidth]
{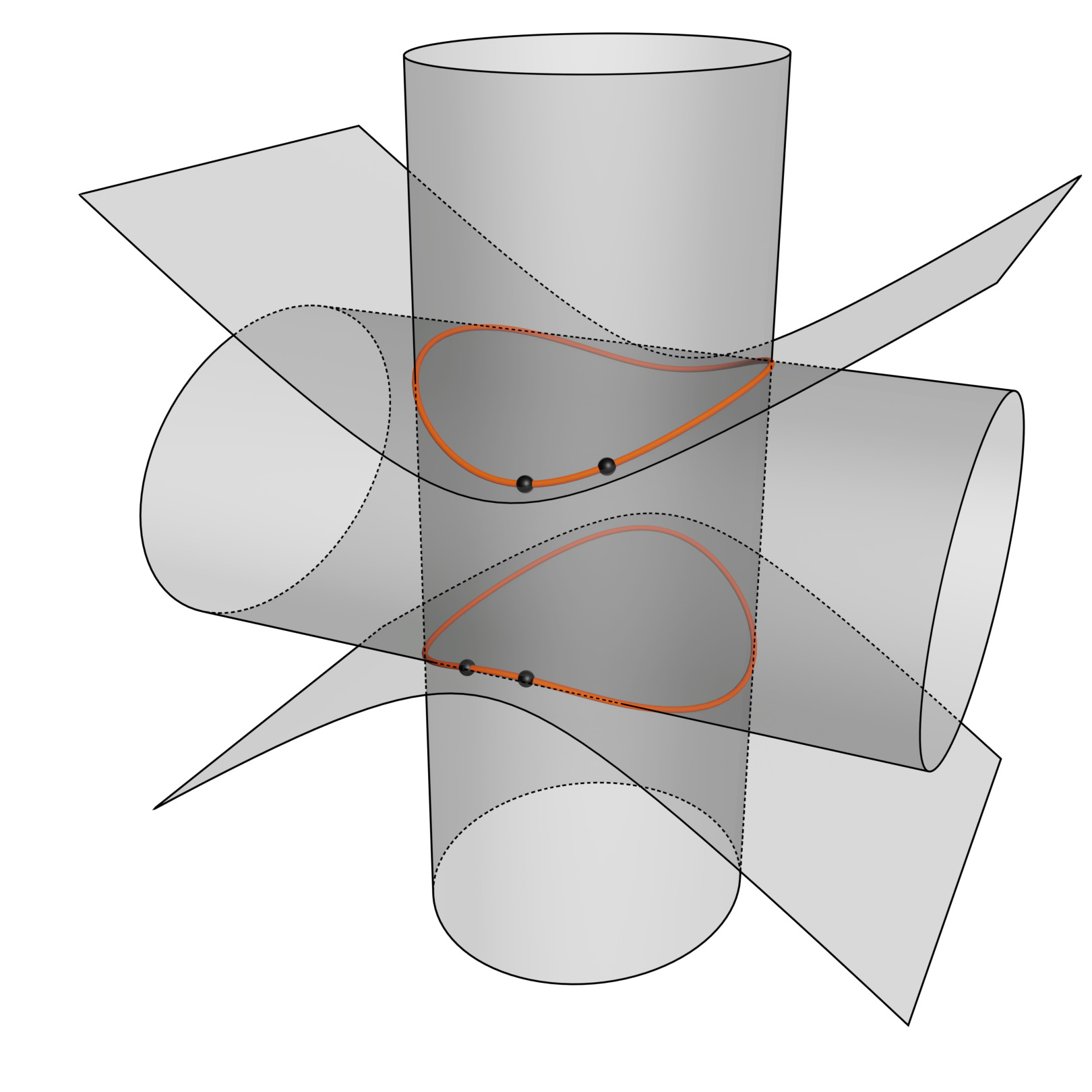}
\put(47,57){$\scriptstyle X$}
\put(47,39){$\scriptstyle \hat{X}$}
\end{overpic}
\end{minipage}
\hfill
\begin{minipage}{0.48\linewidth}
\begin{overpic}[width=\linewidth]
{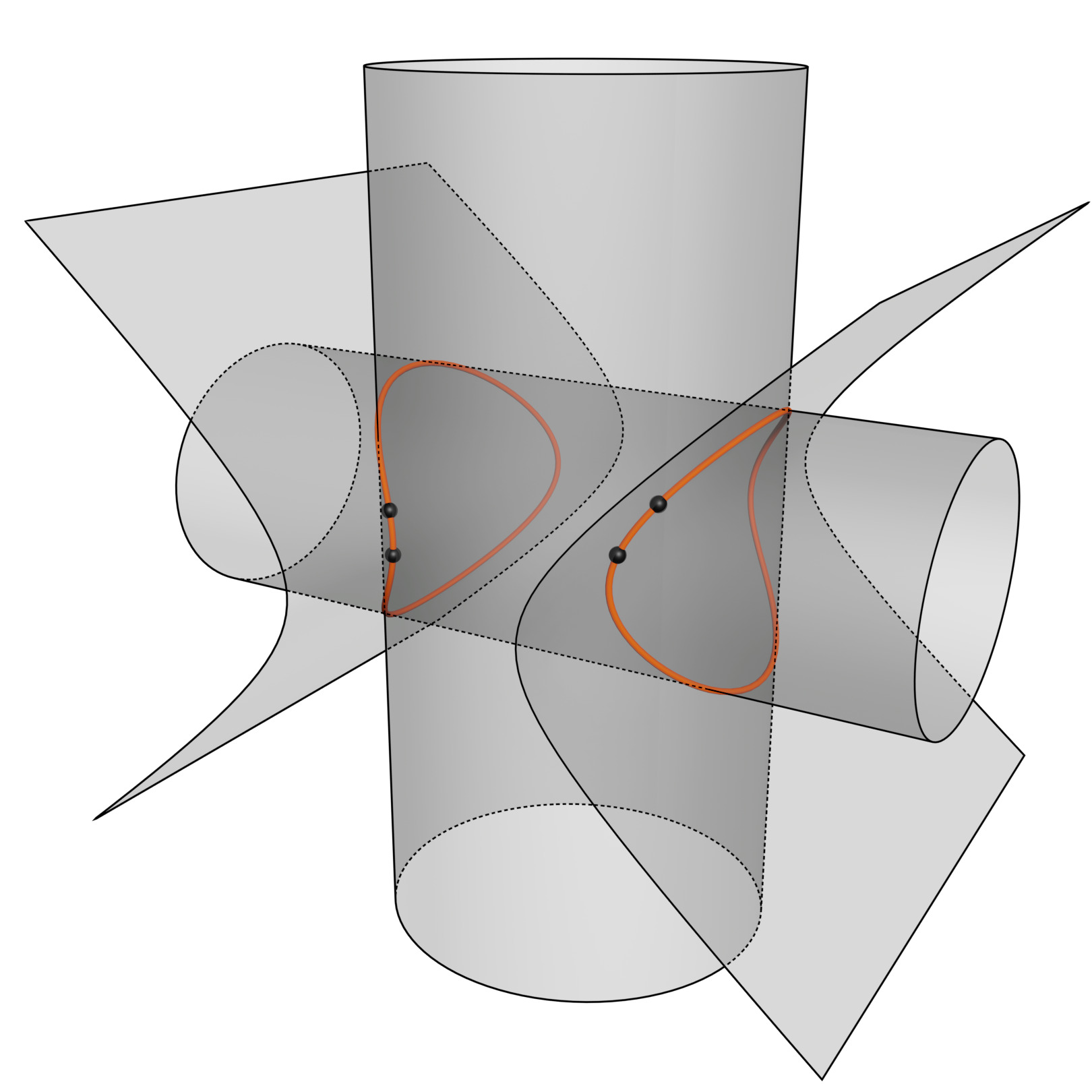} 
\put(58,48){$\scriptstyle X$}
\put(37,52){$\scriptstyle \hat{X}$}
\end{overpic}
\end{minipage}
\caption{Two starting conditions $X$ and $\hat{X}$ on two different components and their image under the birational map $f(\cdot,\delta)$ which restricts to the initial component that is given by the starting condition. The left picture corresponds to case (a) and the right one to case (b).}
\label{Fig:dEt_reversed_orientation}
\end{figure}
\vspace{.4cm}
\subsection{The discrete time Euler top as the composition of two maps acting on ruled surfaces}
\vspace{.4cm}
\label{Sec:geometric_maps}
Let $\nu>0$ be the elliptic time step of the dEt given by (\ref{case_a_nu}) or (\ref{case_b_nu}). We introduce $\nu_1$ and $\nu_2$ such that \begin{align}
\label{nu_i_def}\nu_1+\nu_2=\nu \quad \text{ with} \quad \nu_i \in I := [-2K_{\sigma},2K_{\sigma}]
\end{align} and with this the  maps
\begin{align}
\label{tau1}
\tau^{\sigma}_{\nu_1}&\colon v^{\sigma}_\pm(u_0) \mapsto v^{\sigma}_\mp(-u_0 -\nu_1)\\
\label{tau2}
\tau^{\sigma}_{\nu_2}&\colon v^{\sigma}_\mp(u_0) \mapsto v^{\sigma}_\pm(-u_0 +\nu_2).
\end{align} 
Geometrically they describe walking along the generators of ruled quadrics of the pencil spanned by $\mathcal{C}_1, \mathcal{C}_2$ and $\mathcal{C}_3$ from one component of the intersection curve to the other. The ruled quadric, denoted by $\mathcal{H}_i$, is determined by the value of $\nu_i$.\\ \\
With $\nu$ being the elliptic time step and
 $$(\tau^{\sigma}_{\nu_2}\circ\tau^{\sigma}_{\nu_1})(v_\pm^\sigma(u_0)) = v_\pm^\sigma(u_0+\nu),$$ it is easily verified, that the composition $\tau^{\sigma}_{\nu_2}\circ\tau^{\sigma}_{\nu_1}$ matches the solution of the dEt, see Figure \ref{Fig:tau}. Still missing is the explicit description of the ruled quadric $\mathcal{H}_i := \mathcal{C}_1+\lambda_i\mathcal{C}_3$, that includes the line from \mbox{ $v^{\sigma}_\pm(u_0)$ to  $v^{\sigma}_\mp(-u_0 -\nu_i)$}. In the following we will determine the parameter $\lambda_i$. During the computation we will immediately see that $\mathcal{H}_i$ also contains the line from \mbox{ $v^{\sigma}_\pm(u_0)$ to  $v^{\sigma}_\mp(-u_0 +\nu_i) $}. Subsequently we will combine the results in Theorem \ref{Thm:lamda_i}.\\ \\
Let $\langle\cdot,\cdot\rangle_{c_1}$ and $\langle\cdot,\cdot\rangle_{c_3}$  denote the bilinear forms corresponding to the quadrics $\mathcal{C}_1$ and $\mathcal{C}_3$, respectively, i.e., $$
\mathcal{C}_i = \left\{x\in \mathbb{R}^3\colon \left\langle\begin{pmatrix}
x\\1 \end{pmatrix},\begin{pmatrix}
x\\1 \end{pmatrix}\right\rangle_{c_i}=0\right\}.$$
The set $$T_X(\mathcal{H}_i)  =   \left\{x\in \mathbb{R}^3\colon \left\langle\begin{pmatrix}
X\\1 \end{pmatrix},\begin{pmatrix}
x\\1 \end{pmatrix}\right\rangle_{c_1} + \lambda_i \left\langle\begin{pmatrix}
X\\1 \end{pmatrix},\begin{pmatrix}
x\\1 \end{pmatrix}\right\rangle_{c_3}  =0\right\}$$ is
the tangent plane of a quadric $\mathcal{H}_i= \mathcal{C}_1 + \lambda_i\mathcal{C}_3$ at the point $X\in \mathcal{H}_i$.
\begin{figure}[t]
\centering
\begin{overpic}[width=.6\linewidth, trim = 0 5cm 0 1cm, clip]
{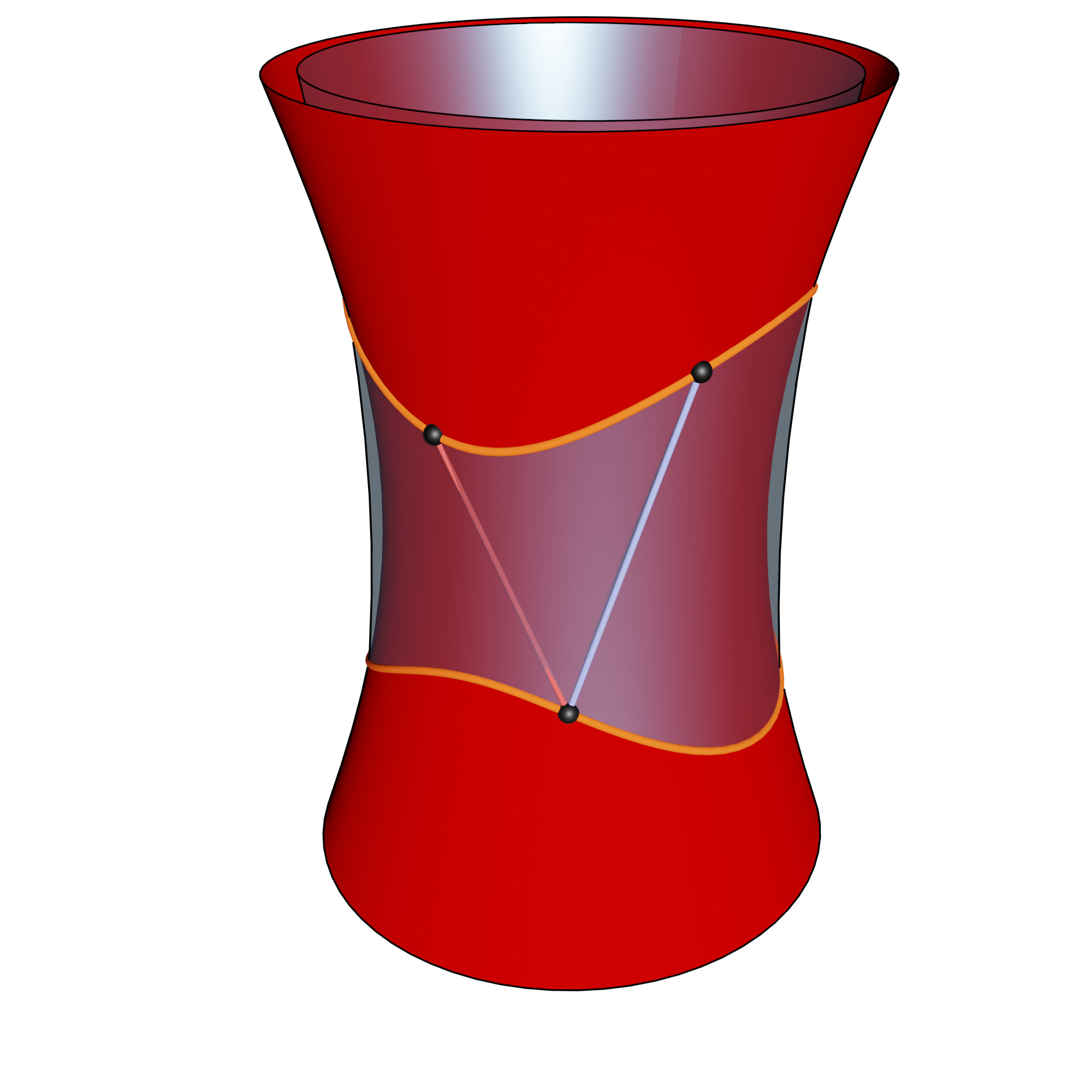} 
\put(38,58){\small \textcolor{white}{$v_+(u_0)$}}
\put(52,65){\small \textcolor{white}{$v_+(u_0+\nu)$}}
\put(38,22){\small \textcolor{white}{$v_-(-u_0-\nu_1)$}}
\put(73,45){\textcolor{blue}{$\mathcal{H}_2$}}
\put(20,92){\textcolor{red}{$\mathcal{H}_1$}}
\end{overpic}
\caption{Case (a): Geometry of the HK-type discrete time Euler top: One iteration of the discrete system maps $v_+(u_0)$ to $v_+(u_0+\nu)$ and can be interpreted as the composition $\tau_{\nu_2}^{a} \circ \tau_{\nu_1}^{a}$ where $\tau_{\nu_i}$ interchanges points on different intersection curve components along generating lines of a hyperboloid $\mathcal{H}_i$.}
\label{Fig:tau}
\end{figure}
\noindent If we solve  \begin{align*} &\left\langle \begin{pmatrix} v^{\sigma}_\pm\left(u_0 \right) \\1 \end{pmatrix}
,\begin{pmatrix} v^{\sigma}_\mp(-u_0 - \nu_i) \\1 \end{pmatrix} \right\rangle_{c_1} + \lambda_i
\left\langle
\begin{pmatrix} v^{\sigma}_\pm\left(u_0 \right)\\1 \end{pmatrix},
\begin{pmatrix} v^{\sigma}_\mp(-u_0 - \nu_i)\\1 \end{pmatrix}  \right\rangle_{c_3} = 0
\end{align*}
for $\lambda_i$, we find the desired ruled quadric in the pencil.\\ \\
We find that for case (a)
\begin{multline}
\label{lamdaeq_a}
\lambda_i(1-F_3)\cn(u_0)\cn(-u_0 - \nu_i) + (1-F_3^{-1}-\lambda_i(F_3-1))\sn(u_0)\sn(-u_0-\nu_i)\\-(1-F_1)\dn(u_0)\dn(-u_0-\nu_i)-(1-F_1)-\lambda_i(1-F_3)=0
\end{multline}
and for case (b)
\begin{multline}
\label{lamdaeq_b}
(1-F_1)\cn(u_0)\cn(-u_0 - \nu_i) +((1-F_1)-\lambda_i F_3(1-F_1))\sn(u_0)\sn(-u_0-\nu_i)\\-\lambda_i(1-F_3)\dn(u_0)\dn(-u_0 -\nu_i)-(1-F_1)-\lambda_i(1-F_3)=0.
\end{multline}
In \cite[Lemma 8.4]{bobenko2020noneuclidean} it has been shown, using addition theorems for Jacobi elliptic functions \cite[§22.8]{Nist}, that, for fixed $v$, an equation of the form $$c_c\cn(u)\cn(u+v) +c_s \sn(u)\sn(u + v) + c_d \dn(u)\dn(u+v)+c_1=0$$ is equivalent to $$c_c\cn(v) +c_d \dn(v) + c_1 = 0$$ for $c_c,c_s,c_d,c_1 \in \mathbb{R}$.
Hence we find that (\ref{lamdaeq_a}) and (\ref{lamdaeq_b}) are equivalent to $$\lambda_i(1-F_3)\cn(\nu_i)-(1-F_1)\dn(\nu_i)-(1-F_1)-\lambda_i(1-F_3)=0$$
and
$$(1-F_1)\cn(\nu_i)-\lambda_i(1-F_3)\dn(\nu_i)-(1-F_1)-\lambda_i(1-F_3)=0,$$
respectively, omitting the sign on $\nu_i$ since $\cn$ and $\dn$ are even functions. This is also the reason why, as previously stated, the quadric has to include the line \mbox{ $v^{\sigma}_\pm(u_0)$ to  $v^{\sigma}_\mp(-u_0 +\nu_i) $}. With $\ns(u) = \frac{1}{\sn(u)}$ we find:
\begin{thm}
\label{Thm:lamda_i} Let $\nu$ be the elliptic time step of the dEt and $\nu_1+\nu_2=\nu$ with $\nu_i\in I$. The solutions of the discrete equations of motion (\ref{equations_of_motion}) are given by the composition \begin{align}\label{composition_tau} \tau^{\sigma}_{\nu_2}\circ\tau^{\sigma}_{\nu_1}\end{align}
with $\tau_{\nu_i}^{\sigma}$ as given in (\ref{tau1}) and (\ref{tau2}). The maps $\tau_{\nu_i}^{\sigma}$ act on the rulings of a quadric $$\mathcal{H}_i := \mathcal{C}_1+\lambda_i\mathcal{C}_3$$
where $\lambda_i$ is given by 
\begin{align*}
\lambda_i &= \frac{1-F_1}{1-F_3} \frac{\dn\left(\nu_i \right)+1 }{\cn\left(\nu_i\right)-1}= - \frac{1-F_1}{1-F_3} \ns^2\left(\frac{\nu_i}{2}\right) \text{ for case (a)} \\
\text{ and }
\lambda_i &= \frac{1-F_1}{1-F_3} \frac{\cn\left(\nu_i \right)-1 }{\dn\left(\nu_i\right)+1} = -\frac{1-F_1}{1-F_3} \sn^2\left(\frac{\nu_i}{2}\right) \text{ for case (b).}
\end{align*}
\end{thm}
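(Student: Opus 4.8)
The plan is to establish the three assertions of the statement in turn: that $\tau^{\sigma}_{\nu_2}\circ\tau^{\sigma}_{\nu_1}$ reproduces the dEt, that each $\tau^{\sigma}_{\nu_i}$ acts along generators of a pencil member $\mathcal{H}_i=\mathcal{C}_1+\lambda_i\mathcal{C}_3$, and that $\lambda_i$ has the stated closed form. The first part is immediate from the definitions (\ref{tau1})--(\ref{tau2}): applying $\tau^{\sigma}_{\nu_1}$ to $v^{\sigma}_\pm(u_0)$ gives $v^{\sigma}_\mp(-u_0-\nu_1)$, and then $\tau^{\sigma}_{\nu_2}$ sends this to $v^{\sigma}_\pm\!\bigl(-(-u_0-\nu_1)+\nu_2\bigr)=v^{\sigma}_\pm(u_0+\nu)$ since $\nu_1+\nu_2=\nu$; by Proposition \ref{Prop:nu} the assignment $v^{\sigma}_\pm(u_0)\mapsto v^{\sigma}_\pm(u_0+\nu)$ is exactly one step of the dEt, so iterating $\tau^{\sigma}_{\nu_2}\circ\tau^{\sigma}_{\nu_1}$ recovers the whole orbit. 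Thus the real content lies in the remaining two parts.

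For the second part I would show that the chord joining $P:=v^{\sigma}_\pm(u_0)$ and $Q:=v^{\sigma}_\mp(-u_0-\nu_i)$ lies entirely on $\mathcal{H}_i$. Write $\mathcal{Q}_i(\cdot,\cdot):=\langle\cdot,\cdot\rangle_{c_1}+\lambda_i\langle\cdot,\cdot\rangle_{c_3}$ for the symmetric bilinear form of $\mathcal{H}_i$ on $\mathbb{R}^4$ and $\mathcal{Q}_i(y):=\mathcal{Q}_i(y,y)$. Since $P$ and $Q$ lie on the base curve of the pencil they lie on every member, so $\mathcal{Q}_i\bigl((P,1)\bigr)=\mathcal{Q}_i\bigl((Q,1)\bigr)=0$; and $\lambda_i$ is by definition the solution of (\ref{lamdaeq_a}) resp.\ (\ref{lamdaeq_b}), which is precisely the statement that the mixed term $\mathcal{Q}_i\bigl((P,1),(Q,1)\bigr)$ vanishes, i.e.\ $Q\in T_P(\mathcal{H}_i)$. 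Hence for all $s,t$,
\[
\mathcal{Q}_i\bigl(s(P,1)+t(Q,1)\bigr)=s^2\mathcal{Q}_i\bigl((P,1)\bigr)+2st\,\mathcal{Q}_i\bigl((P,1),(Q,1)\bigr)+t^2\mathcal{Q}_i\bigl((Q,1)\bigr)=0,
\]
so the projective line through $(P,1)$ and $(Q,1)$, and therefore the affine line $PQ$, lies on $\mathcal{H}_i$. Because $\cn$ and $\dn$ are even, (\ref{lamdaeq_a}) and (\ref{lamdaeq_b}) are invariant under $\nu_i\mapsto-\nu_i$, so the same $\lambda_i$ also puts $v^{\sigma}_\mp(-u_0+\nu_i)$ into $T_P(\mathcal{H}_i)$, and $\mathcal{H}_i$ likewise carries the second line appearing in (\ref{tau1})--(\ref{tau2}). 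For $\nu_i\in I\setminus\{0\}$ the sign of $\lambda_i$, read off below, makes $\mathcal{H}_i$ a one-sheeted hyperboloid, so these chords are genuine rulings.

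For the third part, solving the reduced form of (\ref{lamdaeq_a})/(\ref{lamdaeq_b}) --- already obtained in the excerpt via \cite[Lemma 8.4]{bobenko2020noneuclidean} --- for $\lambda_i$ yields $\lambda_i=\frac{1-F_1}{1-F_3}\cdot\frac{\dn(\nu_i)+1}{\cn(\nu_i)-1}$ in case (a) and $\lambda_i=\frac{1-F_1}{1-F_3}\cdot\frac{\cn(\nu_i)-1}{\dn(\nu_i)+1}$ in case (b). To pass to the half-argument form I would set $w=\nu_i/2$ and use the duplication formulas for $\cn$ and $\dn$ with the modulus $k_\sigma$ from (\ref{case_a_A_k})/(\ref{case_b_B_k}); applying $\cn^2 w=1-\sn^2 w$ and $\dn^2 w=1-k_\sigma^2\sn^2 w$ collapses the numerators to
\[
\cn(\nu_i)-1=-\frac{2\sn^2 w\,\dn^2 w}{1-k_\sigma^2\sn^4 w},\qquad\dn(\nu_i)+1=\frac{2\dn^2 w}{1-k_\sigma^2\sn^4 w},
\]
so the common factor cancels and $\frac{\dn(\nu_i)+1}{\cn(\nu_i)-1}=-\ns^2(\nu_i/2)$, $\frac{\cn(\nu_i)-1}{\dn(\nu_i)+1}=-\sn^2(\nu_i/2)$. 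Substituting these two identities gives exactly the formulas in the statement.

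The only step involving genuine computation is this last simplification through the duplication formulas; it is routine, but it is where the clean $\ns^2$/$\sn^2$ form comes from, and one has to keep the modulus straight ($k_a$ in case (a), $k_b$ in case (b)). The only other point requiring care is confirming that $\mathcal{H}_i$ is a ruled (one-sheeted) quadric rather than a degenerate or non-ruled member of the pencil, which follows from the sign of $\lambda_i$ together with $\nu_i\in I$ and matches the case analysis of Section \ref{Sec:case_a}. Everything else --- the composition identity and the containment of the chord in $\mathcal{H}_i$ --- is formal.
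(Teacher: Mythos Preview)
Your proof is correct and follows essentially the same approach as the paper: the composition identity is immediate from the definitions, the tangent-plane/polarity condition $\mathcal{Q}_i\bigl((P,1),(Q,1)\bigr)=0$ is exactly what the paper sets up to determine $\lambda_i$, and your explicit expansion $\mathcal{Q}_i\bigl(s(P,1)+t(Q,1)\bigr)=0$ just makes the chord-containment step more visible. The only cosmetic difference is that the paper passes from $\frac{\dn(\nu_i)+1}{\cn(\nu_i)-1}$ to $-\ns^2(\nu_i/2)$ by quoting the half-argument identity $\sn^2(u/2)=\tfrac{1-\cn u}{1+\dn u}$ directly, whereas you derive the same thing via the duplication formulas; both routes are equivalent and your computation is correct.
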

\noindent With this theorem we find that the dEt can be described as a composition with one degree of freedom: A fixed choice of $\nu_1$ determines $\nu_2$ uniquely and varying $\nu_1$ (and thus $\nu_2$) leaves the solution of the composition (\ref{composition_tau}) unchanged. The corresponding varying geometry for different values of $\nu_1$ is shown in Figure \ref{Fig:tin01}.
\\ \\ In the following we see how different values of $\nu_i$ determine of what kind the corresponding quadric is and how the signs of the coefficients are given by the case.
First we note that with $\sn^2(u) \in [0,1]$ for $u \in \mathbb{R}$
we have
\begin{align*}
\lambda_i &\in \left[-\frac{1-F_1}{1-F_3},\infty\right] \quad \ \text{ for case (a)} \\
\text{ and } \lambda_i &\in \left[0,-\frac{1-F_1}{1-F_3}\right] \qquad \text{for case (b)}
\end{align*}
and that the quadric $\mathcal{H}_i$ is explicitly given as
$$
\mathcal{H}_i\colon
\lambda_i \delta_2 \delta_3x_1^2  + (1-\lambda_iF_3)\delta_1\delta_3 x_2^2\\ -F_1 \delta_1\delta_2 x_3^2 +(-(1-F_1) - \lambda_i (1-F_3)) = 0.
$$
We will use the notation
\vspace{-.2cm}
\begin{equation}
\label{H_i}\mathcal{H}_i\colon
\mathcal{A}_ix_1^2  + \mathcal{B}_i x_2^2 + \mathcal{C}_i x_3^2 +\mathcal{D}_i= 0.
\end{equation}
Since the quadric $\mathcal{H}_i$ is degenerate if any of its coefficients vanish, the degenerate quadrics are given by $\lambda_i\in \{0,\frac{1}{F_3},-\frac{1-F_1}{1-F_3},\infty\}$:\vspace{.3cm}
\begin{itemize}
\item $\lambda_i = 0 $ corresponds to case (b) with $\nu_i = 0$ and $\mathcal{H}_i = \mathcal{C}_1$,
\item $\lambda_i = \frac{1}{F_3}$ is not possible since $\frac{1}{F_3} < -\frac{1-F_1}{1-F_3}$ for case (a) and $\frac{1}{F_3} > -\frac{1-F_1}{1-F_3}$ for case (b),
\item $\lambda_i =-\frac{1-F_1}{1-F_3}$ corresponds to cases (a) and (b) with $\sn^2\left(\frac{\nu_i}{2} \right) = \ns^2\left(\frac{\nu_i}{2} \right) = 1$, i.e. $\nu_i = 2K_{\sigma}$ where  $\mathcal{H}_i $ degenerates to a cone, 
\item $\lambda_i = \infty $ corresponds to case (a) with $\nu_i = 0$ and $\mathcal{H}_i = \mathcal{C}_3$.
\end{itemize}
\vspace{.3cm}
For $\nu_i \notin \{0,2K_{\sigma}\}$ the quadric $\mathcal{H}_i$ has to be a doubly ruled hyperboloid since with \mbox{$\sn^2\left(\frac{\nu_i}{2} \right)=\sn^2\left(-\frac{\nu_i}{2} \right)$} we have the same $\lambda_i$ for $\nu_i$ and $-\nu_i$ and the two rulings are given by the lines connecting \mbox{$v^{\sigma}_\pm(u_0)$ and $v^{\sigma}_\mp(u_0 + \nu_i) $} and the lines connecting \mbox{ $v^{\sigma}_\pm(u_0)$} and $v^{\sigma}_\mp(u_0 - \nu_i) $.\newpage
\begin{figure}
\begin{minipage}{0.32\linewidth}
\begin{overpic}[width=\linewidth]{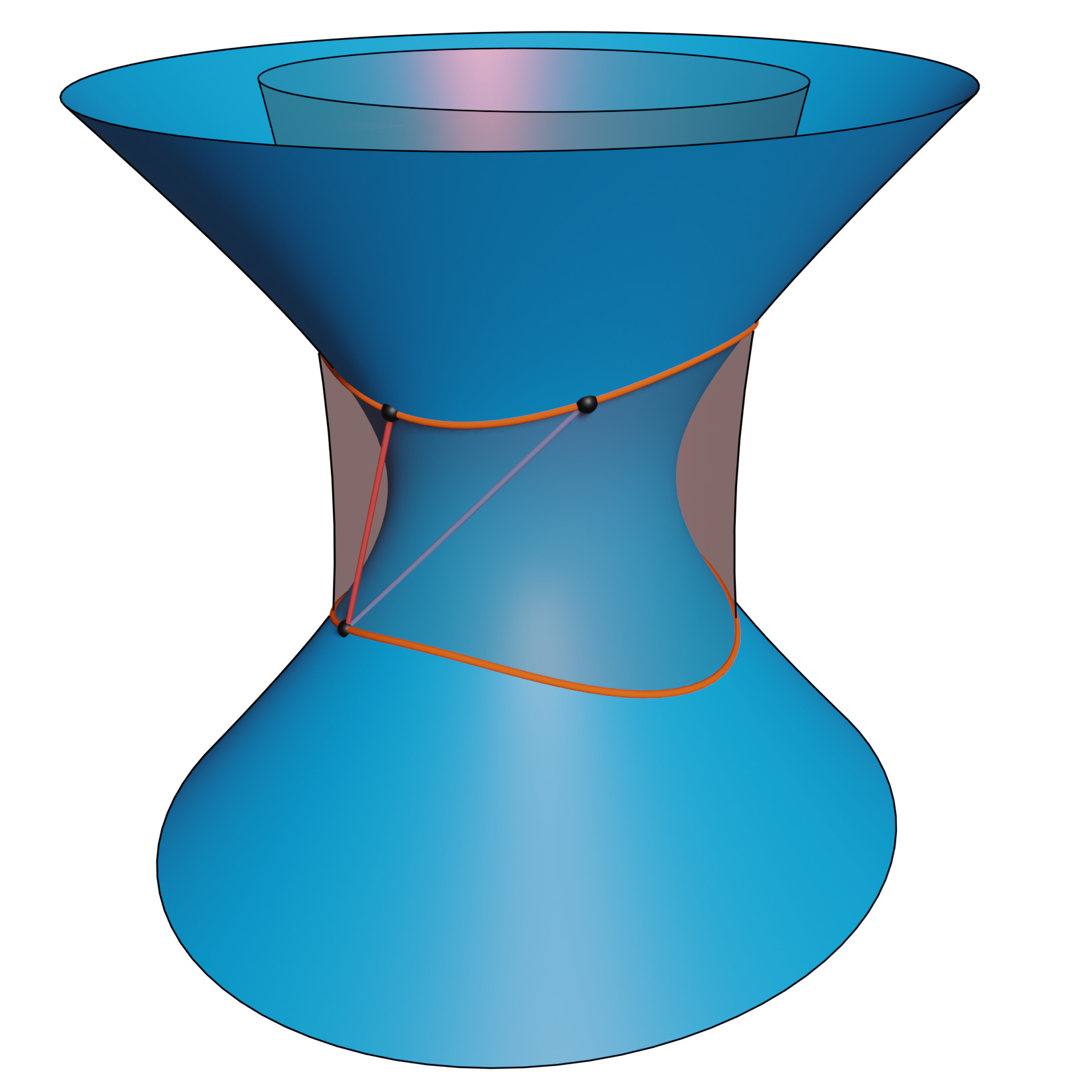}
\put(8,100){\textcolor{cyan}{$\mathcal{H}_2$}}
\put(70,50){\textcolor{red}{$\mathcal{H}_1$}}
\end{overpic}
\end{minipage}
\hfill
\begin{minipage}{0.32\linewidth}
\begin{overpic}[width=\linewidth]{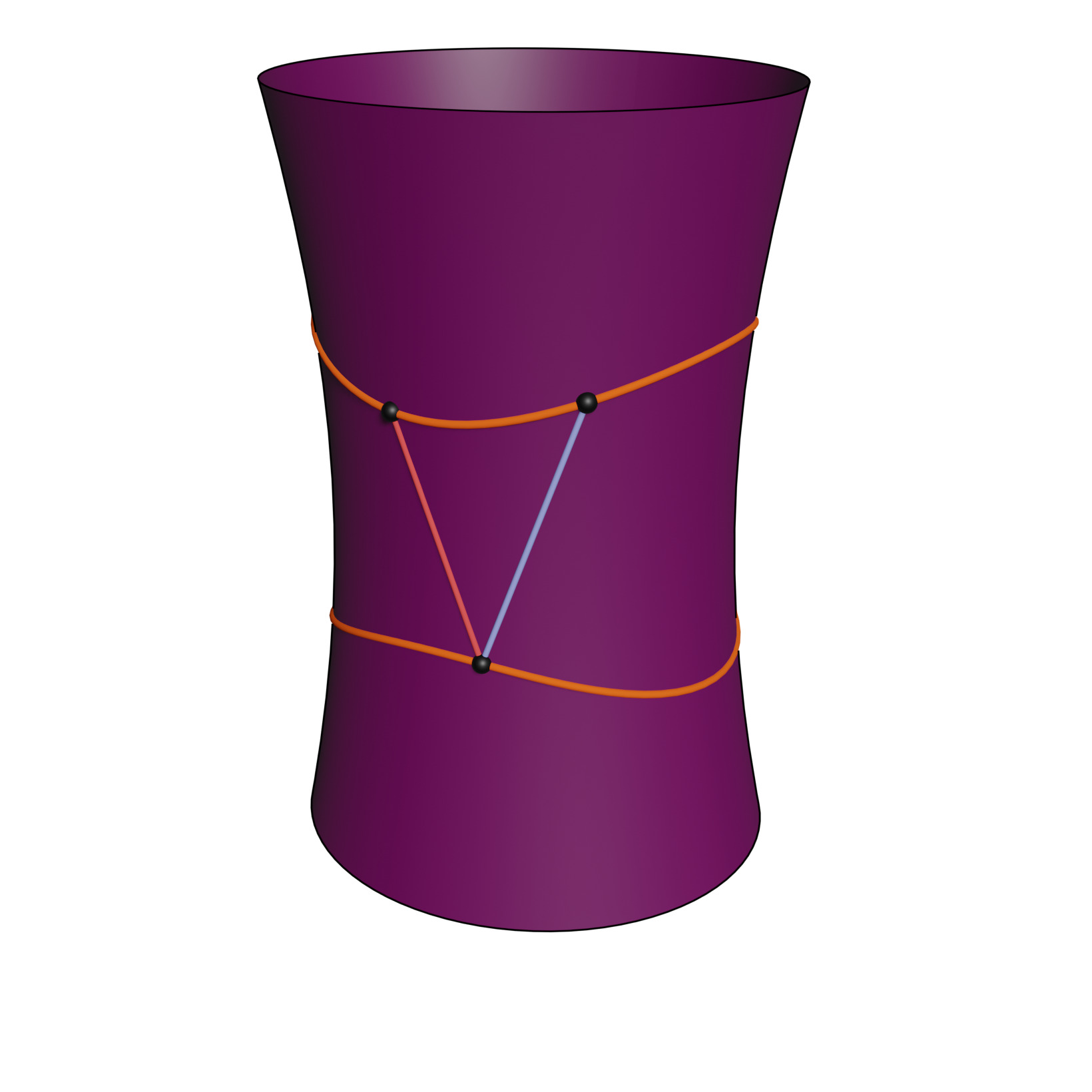}
\put(36,100){\textcolor{magenta}{$\mathcal{H}_1=\mathcal{H}_2$}}
\end{overpic}
\end{minipage}
\hfill
\begin{minipage}{0.32\linewidth}
\begin{overpic}[width=\linewidth]{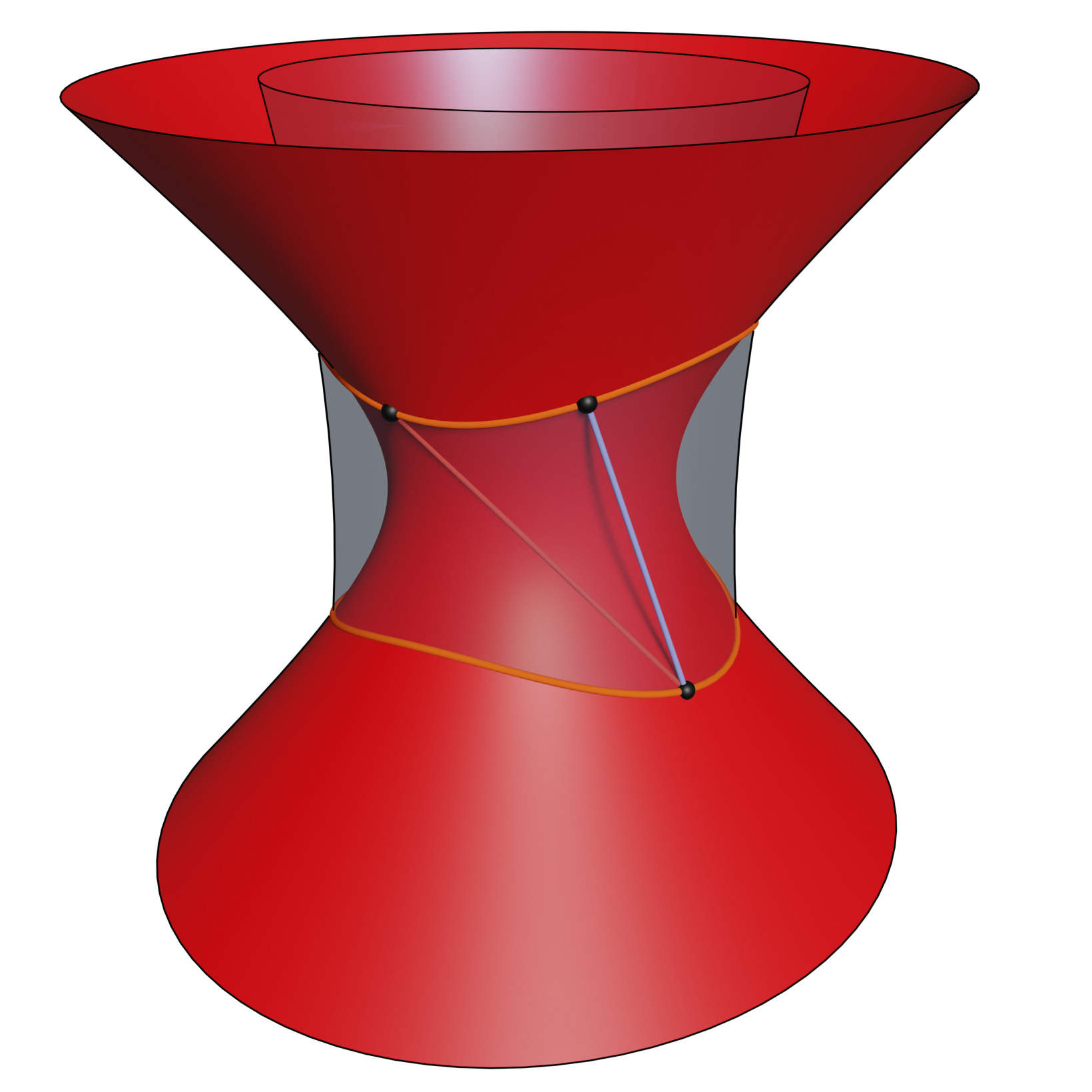}
\put(8,100){\textcolor{red}{$\mathcal{H}_1$}}
\put(70,50){\textcolor{cyan}{$\mathcal{H}_2$}}
\end{overpic}
\end{minipage}
\caption{Case (a): Hyperboloids $\mathcal{H}_1$ (red) and $\mathcal{H}_2$ (blue) for fixed  parameter $\delta \in \mathbb{R}^3$, fixed starting condition $X$ and varying values of $\nu_1$ and $\nu_2$. The pictures correspond to $\nu_1 = -\frac{\nu}{2}$, $\nu_1=\frac{\nu}{2}$ and $\nu_1=\frac{3\nu}{2}$. In the second case it is $\nu_1=\nu_2$ and hence $\mathcal{H}_1$ and $\mathcal{H}_2$ coincide (see Section \ref{Sec:sqrt}).}
\label{Fig:tin01}
\end{figure}
\noindent
For the signs of the coefficients of (\ref{H_i}) we find the following:
\begin{lemma}
\label{lemma:signs}For $\delta_1<0,\delta_2>0,\delta_3<0$ or opposite signs on all $\delta_i$ we find \vspace{-.3cm}
\begin{equation*} 
\begin{aligned}
\mathcal{A}_i &= \lambda_i\delta_2\delta_3 < 0 & \mathcal{B}_i &=(1-\lambda_iF_3)\delta_1\delta_3 \lessgtr 0 \\
\mathcal{C}_i &=-F_1 \delta_1\delta_2>0 & \mathcal{D}_i &= -(1-F_1) - \lambda_i(1-F_3)\gtreqless 0
\end{aligned}
\end{equation*}
\vspace{-.3cm}
where the upper inequality holds for case (a) and the lower for case (b).
\end{lemma}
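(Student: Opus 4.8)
The plan is to read off all four signs directly from the closed form of $\lambda_i$ in Theorem~\ref{Thm:lamda_i} and from the ranges of $\lambda_i$ recorded just after it, using only three elementary sign facts: (i) $F_1\in(0,1)$ and $F_3>1$, established above, so that $1-F_1>0$, $1-F_3<0$ and $F_3-1>0$; (ii) the products $\delta_1\delta_2<0$, $\delta_2\delta_3<0$, $\delta_1\delta_3>0$, which are immediate from $\delta_1<0,\delta_2>0,\delta_3<0$ and are unchanged when all three signs are flipped, since they are products of pairs; and (iii) the case dichotomy, namely $F_1<F_3^{-1}$ in case~(a) versus $F_1>F_3^{-1}$ in case~(b). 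From Theorem~\ref{Thm:lamda_i} we also keep in mind that $\lambda_i\ge\tfrac{1-F_1}{F_3-1}$ in case~(a) and $0\le\lambda_i\le\tfrac{1-F_1}{F_3-1}$ in case~(b).

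First I would dispose of the two case-independent coefficients. For $\mathcal{C}_i=-F_1\delta_1\delta_2$ we combine $F_1>0$ with $\delta_1\delta_2<0$ to get $\mathcal{C}_i>0$. For $\mathcal{A}_i=\lambda_i\delta_2\delta_3$ we use $\delta_2\delta_3<0$ together with $\lambda_i>0$: in case~(a) this is automatic from $\lambda_i\ge\tfrac{1-F_1}{F_3-1}>0$, and in case~(b) it holds whenever $\mathcal{H}_i$ is a genuine hyperboloid, i.e.\ $\nu_i\neq 0$; hence $\mathcal{A}_i<0$.

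The two case-dependent signs both reduce to locating $\lambda_i$ relative to $F_3^{-1}$. Since $\delta_1\delta_3>0$, the sign of $\mathcal{B}_i=(1-\lambda_iF_3)\delta_1\delta_3$ equals that of $1-\lambda_iF_3$. Clearing the positive denominators $F_3-1$ and $F_3$ yields the equivalence $\tfrac{1-F_1}{F_3-1}\gtrless F_3^{-1}\iff F_3(1-F_1)\gtrless F_3-1\iff F_1\lessgtr F_3^{-1}$, so the left alternative is case~(a) and the right is case~(b). Therefore in case~(a) every admissible $\lambda_i$ satisfies $\lambda_i\ge\tfrac{1-F_1}{F_3-1}>F_3^{-1}$, whence $1-\lambda_iF_3<0$ and $\mathcal{B}_i<0$; in case~(b) every admissible $\lambda_i$ satisfies $\lambda_i\le\tfrac{1-F_1}{F_3-1}<F_3^{-1}$, whence $1-\lambda_iF_3>0$ and $\mathcal{B}_i>0$. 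For the constant term, rewriting $\mathcal{D}_i=-(1-F_1)-\lambda_i(1-F_3)=\lambda_i(F_3-1)-(1-F_1)$ shows that $\mathcal{D}_i\gtreqless 0$ exactly according to $\lambda_i\gtreqless\tfrac{1-F_1}{F_3-1}$; in case~(a) the bound $\lambda_i\ge\tfrac{1-F_1}{F_3-1}$ holds by the very range of $\lambda_i$, with equality only at the cone $\nu_i=2K_{\sigma}$, giving $\mathcal{D}_i\ge 0$, while in case~(b) the reverse bound gives $\mathcal{D}_i\le 0$.

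The argument is essentially bookkeeping of signs, so there is no real obstacle; the one step that needs a line of care is the equivalence $\tfrac{1-F_1}{F_3-1}\gtrless F_3^{-1}\iff F_1\lessgtr F_3^{-1}$, where one must check that clearing denominators preserves the direction of the inequality — which it does, because $F_3>1$ makes both $F_3-1$ and $F_3$ positive. The degenerate values $\nu_i\in\{0,2K_{\sigma}\}$ singled out before the lemma are precisely what accounts for the equality cases in the $\gtreqless$ relation and for the one vanishing of $\mathcal{A}_i$ (at $\nu_i=0$ in case~(b), where $\mathcal{H}_i=\mathcal{C}_1$).
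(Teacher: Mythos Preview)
Your proof is correct and follows essentially the same route as the paper: both reduce the sign of $\mathcal{B}_i$ to the equivalence $\tfrac{1-F_1}{F_3-1}\gtrless F_3^{-1}\iff F_1\lessgtr F_3^{-1}$ combined with the admissible range of $\lambda_i$, and both read the sign of $\mathcal{D}_i$ directly from $\lambda_i\gtreqless\tfrac{1-F_1}{F_3-1}$. Your write-up is simply more thorough, since you also spell out the trivial signs of $\mathcal{A}_i$ and $\mathcal{C}_i$ and the degenerate boundary cases, which the paper leaves implicit.
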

\begin{proof}
For case (a) we find that
\vspace{-.3cm}
\begin{align*}
F_1 <F_3^{-1} &\Leftrightarrow  1-F_1 >F_3^{-1}(-1+F_3)\\
& \Leftrightarrow   \lambda_i =\frac{1-F_1}{-1+F_3}\ns^2\left(\frac{\nu_i}{2}\right) > F_3^{-1}\\
&\Leftrightarrow  1-\lambda_iF_3<0
\end{align*}
\vspace{-.3cm}
and
\vspace{-.3cm}
\begin{align*}
\lambda_i \geq \frac{1-F_1}{F_3-1}
&\Leftrightarrow -\lambda_i (1-F_3) \geq (1-F_1)\\
&\Leftrightarrow  -(1-F_1) - \lambda_i(1-F_3) \geq 0.
\end{align*}
Reversing the inequalities shows the claim for case (b).
\end{proof}
\noindent Note that for both cases a and (b) we always find \vspace{-.3cm} $$\mathcal{A}_i\mathcal{B}_i\mathcal{C}_i\mathcal{D}_i\geq0$$\vspace{-.3cm}
\vspace{-.3cm}
which allows us to calculate its root and we set \vspace{.4cm}\begin{align}
\label{S_i}
\mathcal{S}_i :=
\pm\sqrt{\mathcal{A}_i\mathcal{B}_i\mathcal{C}_i\mathcal{D}_i}=\delta_1\delta_2\delta_3\sqrt{\lambda_i(1-\lambda_iF_3)(-F_1)(-(1-F_1)-\lambda_i(1-F_3))}
\end{align}
which is positive for $\delta_1<0,\delta_2>0$ and $\delta_3<0$ and negative for all signs of $\delta_i$ reversed.\\
\begin{rem}\textit{Checkerboard incircular nets}\\
The previous construction is closely related to the construction of checkerboard incircular nets. These are quadrilateral nets generated by two families of straight lines in the euclidean plane such that every second quadrilateral admits an incircle, similar to the pattern of a checkerboard. For the construction of a checkerboard incircular net one starts with a cone with its vertex at the origin and the Blaschke cylinder of Laguerre geometry. These two quadrics define a pencil and intersect in a curve consisting of two components. For the construction of a checkerboard incircular net two hyperboloids in this pencil and a starting point on the common intersection curve are chosen. Walking along the generators while alternating between the two hyperboloids in a zig-zag pattern, similar to the iteration in Figure \ref{Fig:tau}, gives rise to one family of points on the intersection curve. This corresponds to one of the families of generating lines of the checkerboard incircular net. The other family can be obtained by starting with a (suitable) point on the other component of the intersection curve. Similar to our cases (a) and (b) one finds that this construction also splits up into two projectively equivalent cases, also distinguished as shown in Figure \ref{Fig:case_a_b}. The first case corresponds to a checkerboard incircular net that is tangent to an ellipse and the latter to one that is tangent to a hyperbola that are both shown in Figure \ref{Fig:CIC}. This and much more details have been investigated in \cite{bobenko2020checkerboard} and extended in \cite{bobenko2020noneuclidean}.\\ \\ \\
\end{rem}
\begin{figure}[h]
\begin{minipage}{0.53\linewidth}
\begin{overpic}[width=.9\linewidth]
{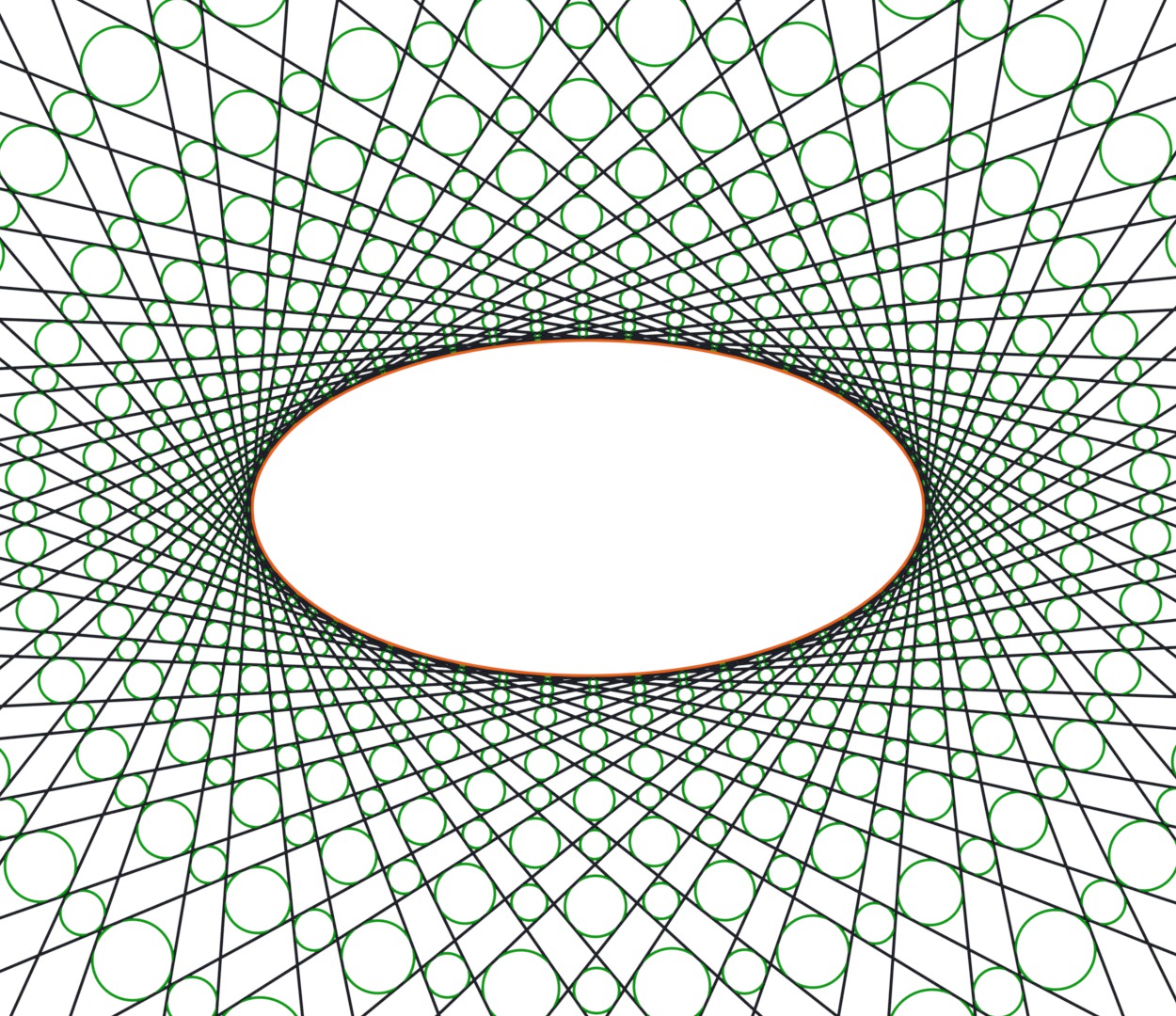}
\end{overpic}
\end{minipage}
\hfill
\begin{minipage}{0.48\linewidth}
\begin{overpic}[width=\linewidth]
{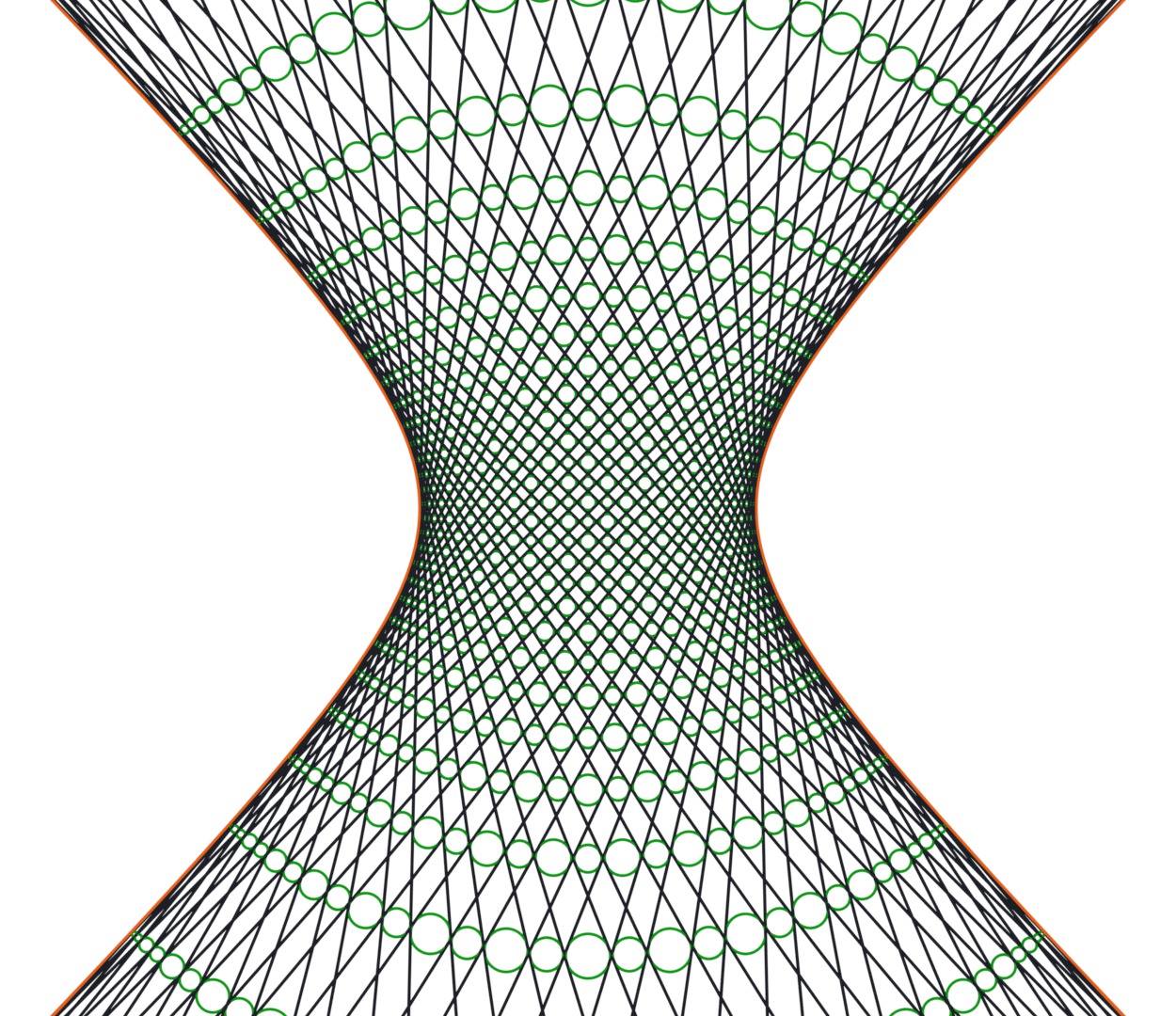} 
\end{overpic}
\end{minipage}
\caption{Checkerboard incircular nets in the Euclidean plane: Two projectively equivalent cases where the generating lines are tangent to an ellipse (left) and tangent to a hyperbola (right), from \cite{bobenko2020checkerboard}.}
\label{Fig:CIC}
\end{figure}
\newpage
\section{The discrete time Euler top as the composition of two complex involutions}
\label{Sec:complex}
Let $$ v^{\sigma}_\pm\colon [0,4K_{\sigma})\rightarrow \mathbb{R}^3$$ denote the parametrization of the intersection curve of the cylinders (\ref{C_i}) for the cases $\sigma \in \{a, b\}$ as given in (\ref{case_a_v}) or (\ref{case_b_v}). In the real case the intersection curve consists of two components.
If we investigate the intersection curve over the complex numbers we find that it consists of a single connected component and constitutes an embedding of an elliptic curve, i.e., a torus, into $\mathbb{C}^3$. It can be parameterized a doubly periodic function with the fundamental parallelogram given by the modulus $k_{\sigma}$ as $$\mathcal{P}_{k_{\sigma}} := \{u+iv \in \mathbb{C}\colon u \in [0,4K_{\sigma})\text{ and } v \in [0,4K'_{\sigma})\}$$ with  $K_{\sigma}$ and $K'_{\sigma}$ being the real and imaginary quarter periods, respectively. In particular $$ K_{\sigma} := \mathcal{K}(k_{\sigma}) \quad \text{and} \quad K'_{\sigma}:=\mathcal{K}(\sqrt{1-k_{\sigma}^2})$$ where $\mathcal{K}(\cdot)$ denotes the elliptic integral of the first kind.
In other words the complex intersection curve is given as the image of the complex function \begin{equation}
\label{def:phi}
\varphi \colon \mathbb{C} \rightarrow \mathbb{C}^3,\quad u+iv \mapsto v^{\sigma}_+(u+iv)
\end{equation}
with the two periods $4K_{\sigma}$ and $4iK'_{\sigma}$. We denote its image as $$\mathfrak{C}:= \varphi(\mathcal{P}_{k_{\sigma}}).$$
The two real components are given by restricting $v$ to either $v\equiv 0$ or $v \equiv 2K'_{\sigma}$, where 
\begin{align*}
&\varphi(u) =  v^{\sigma}_+(u)\\
\text{and} \quad &\varphi(2iK_{\sigma}'-(u + 2K_\sigma)) =  v^{\sigma}_-(-u)
\end{align*}
for $ u \in [0,4K_{\sigma})$, which follows from \begin{align*}
\cn(2iK'-(u+2K)) & = \cn(u) = \cn(-u)\\
\sn(2iK'-(u+2K)) & = \sn(u) = - \sn(-u)\\
\dn(2iK'-(u+2K)) & = -\dn(u) = -\dn(-u).
\end{align*} 
The fundamental domain and the preimages of $v^\sigma_+$ and $v^\sigma_-$ are shown in Figure \ref{Fig:fund_domain}.

\begin{figure}
\centering
\begin{overpic}[width=.7\linewidth]
{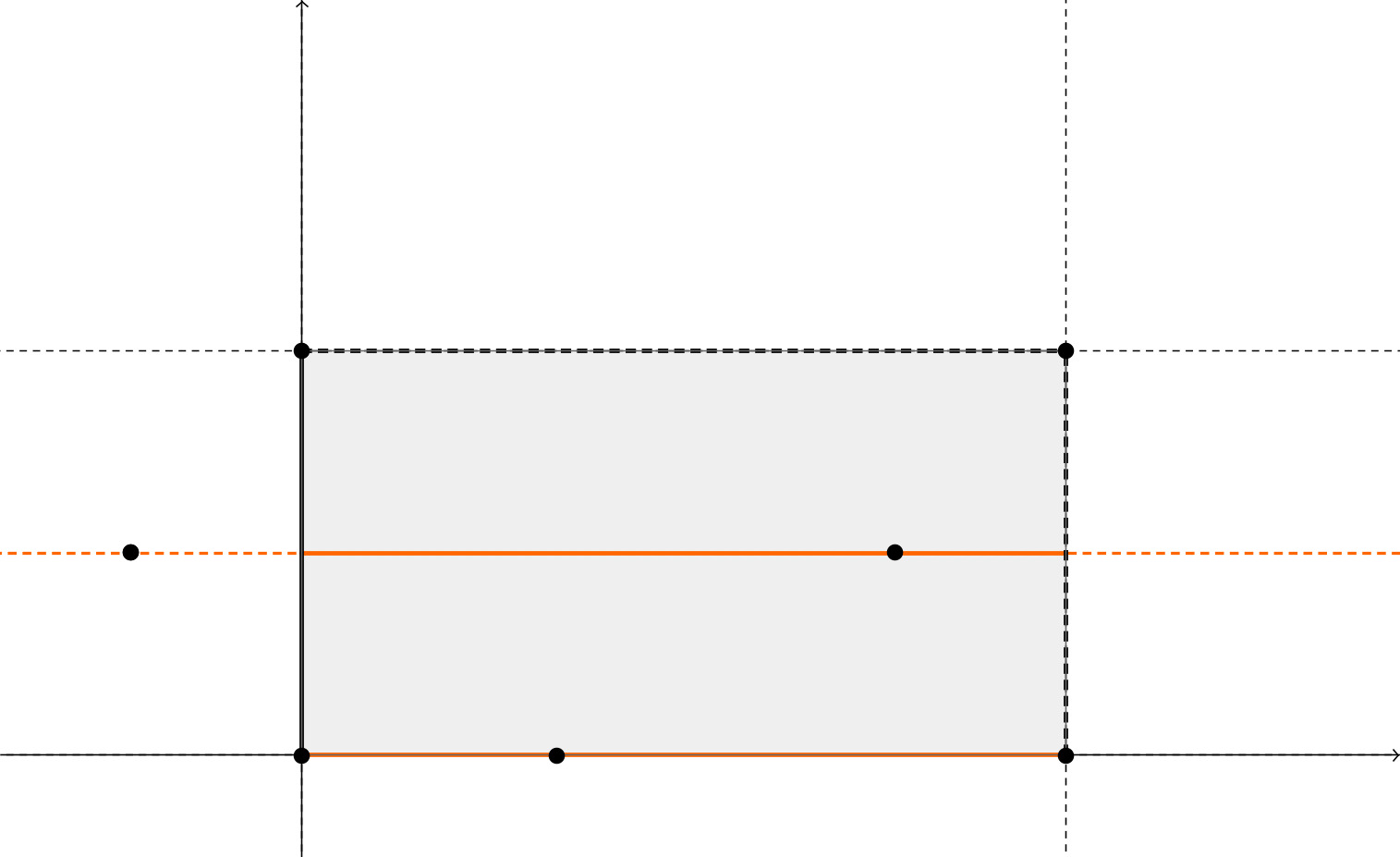}
\put(39,4){$u $}
\put(2,23){$\scriptstyle 2iK_\sigma'-u - 2K_\sigma$}
\put(57,23){$\scriptstyle 2iK_\sigma'-u + 2K_\sigma$}
\put(12,37){$4iK_\sigma'$}
\put(77,3){$4K_\sigma$}
\put(85,50){$\mathbb{R}^2\cong \mathbb{C}$}
\end{overpic}
\caption{The fundamental domain $\mathcal{P}_{k_\sigma}$ of the complex function $\varphi$. The two orange segments in the fundamental domain represent the preimages of the real components $v_+$ and $v_-$. Due to periodicity of $\varphi$ the images of $2iK_\sigma'-u -2K_\sigma$ and $2iK_\sigma'-u + 2K_\sigma$ coincide.}
\label{Fig:fund_domain}
\end{figure}
\begin{prop}
For given $\nu_i\in \mathbb{R}$ the maps \begin{align*}
\iota^{\pm}_{\nu_i}&\colon \mathfrak{C} \rightarrow \mathfrak{C},\quad \varphi(z) \mapsto \varphi(2iK'_{\sigma}-(z+2K_\sigma) \pm \nu_i)
\end{align*} on the image of $\varphi$, given in (\ref{def:phi}), are involutions.
\end{prop}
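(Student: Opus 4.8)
The plan is to show that $\iota^{\pm}_{\nu_i}$ composed with itself is the identity on $\mathfrak{C}$. Since $\varphi$ parameterizes $\mathfrak{C}$ (with the two periods $4K_\sigma$ and $4iK'_\sigma$), it suffices to track what happens to the argument $z \in \mathbb{C}$ under two applications of the map $z \mapsto 2iK'_\sigma - (z + 2K_\sigma) \pm \nu_i$, and to check that the result differs from $z$ only by a period of $\varphi$, so that $\varphi$ of it equals $\varphi(z)$.

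First I would write $g(z) := 2iK'_\sigma - (z + 2K_\sigma) \pm \nu_i = -z + (2iK'_\sigma - 2K_\sigma \pm \nu_i)$. This is an affine map of the form $z \mapsto -z + c$ with constant $c := 2iK'_\sigma - 2K_\sigma \pm \nu_i$. Composing it with itself gives
\begin{equation*}
g(g(z)) = -(-z + c) + c = z,
\end{equation*}
so $g \circ g = \mathrm{id}$ on $\mathbb{C}$ exactly. Therefore $\iota^{\pm}_{\nu_i}\circ\iota^{\pm}_{\nu_i}(\varphi(z)) = \varphi(g(g(z))) = \varphi(z)$ for all $z$, which is the claim. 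I should also note, for the statement to be meaningful, that $\iota^{\pm}_{\nu_i}$ indeed maps $\mathfrak{C}$ to $\mathfrak{C}$; this is immediate since $\mathfrak{C} = \varphi(\mathbb{C})$ and $g(z) \in \mathbb{C}$, so $\varphi(g(z)) \in \mathfrak{C}$.

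The only genuine subtlety — and the step I would treat most carefully — is well-definedness of $\iota^{\pm}_{\nu_i}$ as a map on points of $\mathfrak{C}$ rather than on arguments $z$. A point $P \in \mathfrak{C}$ may be written as $\varphi(z)$ for several $z$ differing by the lattice $4K_\sigma\mathbb{Z} + 4iK'_\sigma\mathbb{Z}$; one must check that $\varphi(g(z))$ is independent of the representative. This follows because $g$ is of the form $z \mapsto -z + c$: if $z' = z + 4mK_\sigma + 4niK'_\sigma$ then $g(z') = g(z) - 4mK_\sigma - 4niK'_\sigma$, again a lattice translate, and $\varphi$ is invariant under the lattice by its double periodicity. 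Hence $\iota^{\pm}_{\nu_i}$ is a well-defined self-map of $\mathfrak{C}$, and the computation above shows it is an involution. (One could alternatively phrase the whole argument on the torus $\mathbb{C}/(4K_\sigma\mathbb{Z}+4iK'_\sigma\mathbb{Z})$, where $g$ descends to an affine involution $z\mapsto -z+c$ directly.)
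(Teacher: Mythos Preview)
Your proof is correct and follows essentially the same approach as the paper: both simply verify that applying the argument map $z\mapsto 2iK'_\sigma-(z+2K_\sigma)\pm\nu_i$ twice returns $z$, hence $\varphi(z)$ is recovered. Your treatment is in fact slightly more thorough, since you also address well-definedness of $\iota^{\pm}_{\nu_i}$ as a map on $\mathfrak{C}$ (independence of the choice of preimage $z$ modulo the period lattice), a point the paper's proof leaves implicit.
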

\begin{proof}
It holds
\begin{align*}(\iota^\pm_{\nu_i}\circ \iota^\pm_{\nu_i} )(\varphi(z)) &= \iota^\pm_{\nu_i}(\varphi(2iK'_{\sigma}-(z+2K_\sigma) \pm \nu_i))\\
 &= \varphi(2iK'_{\sigma}-(2iK'_{\sigma}-(z+2K_\sigma) \pm \nu_i +2K_\sigma) \pm \nu_i) \\
 & =\varphi(z) .
\end{align*}
\end{proof} 
\noindent Note that if $z = u$ or $z =2iK'_{\sigma}-(u+2K_\sigma)$ for $u \in \mathbb{R}$ these involutions describe walking from one of the real components of the intersection curve to the other. The lines connecting real preimages and images under these involutions lie on the hyperboloid $\mathcal{H}_i$, that is determined by $\nu_i$. One should be aware that the choice of $\pm$ in $\iota^{\pm}_{\nu_i}$ is not related to the different real components as in the maps $v^{\sigma}_\pm$ but rather determines along which of the generating lines of $\mathcal{H}_i$ we walk. In particular they describe the maps $\tau_{\nu_i}$ from (\ref{tau1}) and (\ref{tau2}) as involutions, see Figure \ref{Fig:complex_preim_and_im}.\newpage
\begin{thm} For given parameter $\delta$ of the dEt let $\nu$ be the corresponding elliptic time step and $\nu_1+\nu_2=\nu$ with $\nu_i\in I$. Then the composition $\iota_{\nu_2}^+\circ \iota_{\nu_1}^- $ of the involutions 
\begin{align*}
\iota_{\nu_1}^-&\colon Re(\mathfrak{C}) \rightarrow Re(\mathfrak{C}),\quad \varphi(z) \mapsto \varphi(2iK'_{\sigma}-(z+2K_\sigma) - \nu_1)\\
\text{ and }\iota_{\nu_2}^+&\colon Re(\mathfrak{C}) \rightarrow Re(\mathfrak{C}),\quad \varphi(z) \mapsto \varphi(2iK'_{\sigma}-(z+2K_\sigma) + \nu_2)
\end{align*}
on the real part of the image of $\varphi$, given in (\ref{def:phi}), coincides with the birational map $f(\cdot,\delta)$ from (\ref{explicit_f}).
\end{thm}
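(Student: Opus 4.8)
The plan is to reduce the statement to results already established in the excerpt, essentially by translating the complex involutions $\iota^\pm_{\nu_i}$ back into the real picture of Section~\ref{Sec:geometric_maps}. First I would unwind the definitions: for a real argument $z=u$ the point $\varphi(u)=v^{\sigma}_+(u)$ lies on the $v^{\sigma}_+$-component, and by the identity $\varphi(2iK'_{\sigma}-(u+2K_\sigma))=v^{\sigma}_-(-u)$ recorded just before Figure~\ref{Fig:fund_domain}, applying $\iota^-_{\nu_1}$ sends $v^{\sigma}_+(u_0)$ to $\varphi(2iK'_\sigma-(u_0+2K_\sigma)-\nu_1)=v^{\sigma}_-(-u_0-\nu_1)$. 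This is \emph{exactly} the map $\tau^{\sigma}_{\nu_1}$ of (\ref{tau1}). Likewise, starting now from a point on the $v^{\sigma}_-$-component, written as $\varphi(z)$ with $z=2iK'_\sigma-(w+2K_\sigma)$, one computes $\iota^+_{\nu_2}(\varphi(z))=\varphi(2iK'_\sigma-(z+2K_\sigma)+\nu_2)=\varphi(w+\nu_2)$; substituting $w=-u_0-\nu_1$ from the previous step and using evenness of $\cn,\dn$ together with oddness of $\sn$ to match signs, this is $v^{\sigma}_+(-(-u_0-\nu_1)+\nu_2-\nu_2\ldots)$ — more cleanly, one checks directly that $\iota^+_{\nu_2}$ restricted to the $v^{\sigma}_-$-component agrees with $\tau^{\sigma}_{\nu_2}$ from (\ref{tau2}). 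Care is needed here with the $\pm$ bookkeeping: the superscript on $\iota$ selects a generating family of $\mathcal{H}_i$, whereas the subscript on $v^{\sigma}_\pm$ selects a component, and the Proposition preceding the theorem already certifies that each $\iota^\pm_{\nu_i}$ is a well-defined involution interchanging the two real components.

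Granting this identification, the composition satisfies
\[
\iota^+_{\nu_2}\circ\iota^-_{\nu_1}\big|_{\mathrm{Re}(\mathfrak{C})}=\tau^{\sigma}_{\nu_2}\circ\tau^{\sigma}_{\nu_1}.
\]
By Theorem~\ref{Thm:lamda_i}, with $\nu_1+\nu_2=\nu$ the elliptic time step and $\nu_i\in I$, the right-hand side equals the solution map of the discrete equations of motion (\ref{equations_of_motion}), i.e.\ $f(\cdot,\delta)$ of (\ref{explicit_f}), on the intersection curve through the initial point $X$. Since every $X\in\mathbb{R}^3$ lies on the intersection curve of its own cylinders $\mathcal{C}_1,\mathcal{C}_2,\mathcal{C}_3$, and $\varphi$ parameterizes that curve, the two maps agree at every point of $\mathbb{R}^3$, which is the assertion.

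The main obstacle is purely the sign-and-period bookkeeping in the first paragraph: one must verify that the argument shifts in $\iota^-_{\nu_1}$ and $\iota^+_{\nu_2}$ chain together to produce the net shift $u_0\mapsto u_0+\nu$ (so that Theorem~\ref{Thm:lamda_i} applies verbatim), rather than $u_0-\nu$ or $-u_0+\nu$, and that the transition between the $+$ and $-$ components via the half-period translation $2iK'_\sigma-(\,\cdot\,+2K_\sigma)$ is consistent with the orientation convention fixed by $\nu>0$ in Proposition~\ref{Prop:nu}. Concretely: $\iota^+_{\nu_2}\circ\iota^-_{\nu_1}$ applied to $\varphi(u_0)$ gives $\varphi\big(2iK'_\sigma-(2iK'_\sigma-(u_0+2K_\sigma)-\nu_1+2K_\sigma)+\nu_2\big)=\varphi(u_0+\nu_1+\nu_2)=\varphi(u_0+\nu)$, and this same telescoping computation already appeared in the proof that $\iota^\pm_{\nu_i}$ is an involution, so no genuinely new estimate is required — only a careful transcription. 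Once the shift is confirmed to be $+\nu$, invoking Theorem~\ref{Thm:lamda_i} closes the argument.
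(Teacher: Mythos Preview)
Your proposal is correct and follows essentially the same approach as the paper: the paper first remarks that the claim is already a consequence of Theorem~\ref{Thm:lamda_i} (your identification $\iota^\pm_{\nu_i}\leftrightarrow\tau^\sigma_{\nu_i}$), and then carries out exactly the telescoping computation you wrote down, $(\iota_{\nu_2}^+\circ\iota_{\nu_1}^-)(\varphi(z))=\varphi(z+\nu)$, finishing by noting that for real $z$ this reads $v^\sigma_\pm(\pm u)\mapsto v^\sigma_\pm(\pm u+\nu)$. Your additional discussion of sign and orientation bookkeeping is more explicit than what the paper records, but there is no substantive difference in method.
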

\begin{proof}
The statement is already verified by Theorem \ref{Thm:lamda_i}
but also can easily be shown by direct computation. It holds
\begin{align*}
(\iota_{\nu_2}^+\circ \iota_{\nu_1}^- )(\varphi(z)) &= \iota_{\nu_2}^+(\varphi(2iK'_{\sigma}-(z+2K_\sigma) - \nu_1))\\
 &= \varphi(2iK'_{\sigma}-(2iK'_{\sigma}-(z+2K_\sigma) - \nu_1 +2K_\sigma) + \nu_2) \\
 & = \varphi(z+\nu)
\end{align*}
and for $\varphi(z) \in Re(\mathfrak{C})$ we have $\varphi(z) =  v^{\sigma}_\pm(\pm u)$and therefore $$(\iota_{\nu_2}^+\circ \iota_{\nu_1}^- )(v^{\sigma}_\pm(\pm u)) =  v^{\sigma}_\pm(\pm u + \nu) $$ where the choice of $\pm$ given by $z = u$ or $z = 2iK'_{\sigma}-(u+2K_\sigma)$, respectively.
\end{proof}
\begin{figure}
\begin{minipage}{.49\linewidth}
\begin{overpic}[width=\linewidth]
{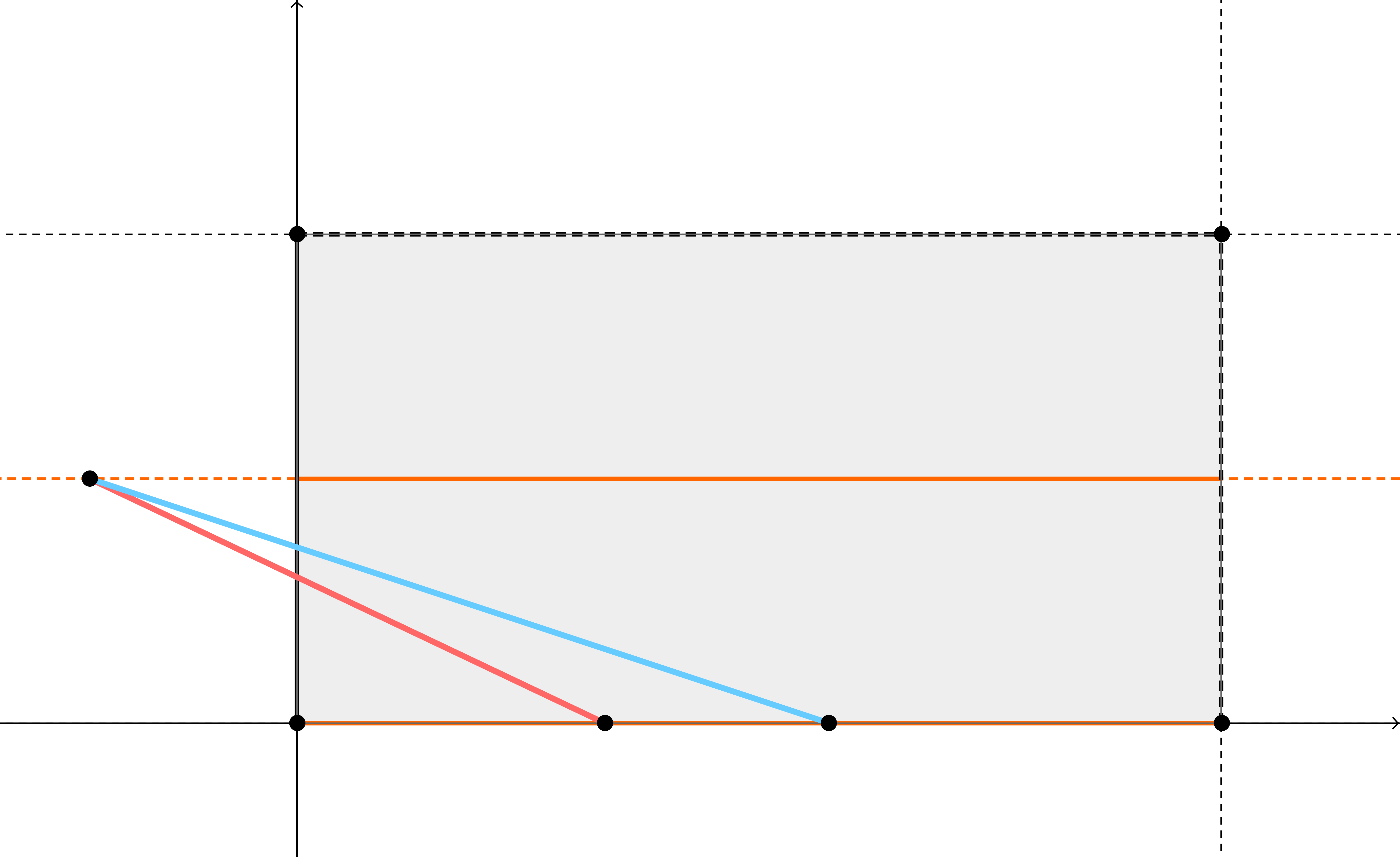}
\put(43,5){$u$}
\put(55,5){$ u+\nu $}
\put(0,28){$\scriptstyle 2iK'-(u + 2K) -\nu_1$}
\put(10,47){$4iK'$}
\put(87,4){$4K$}
\end{overpic}
\end{minipage}
\hfill
\begin{minipage}{.49\linewidth}
\begin{overpic}[width=\linewidth]
{walking_generators} 
\put(38,63){\small \textcolor{white}{$\varphi(u)$}}
\put(52,70){\small \textcolor{white}{$\varphi(u+\nu)$}}
\put(33,29){\small \textcolor{white}{$\scriptstyle \varphi(2iK'-(u + 2K) -\nu_1)$}}
\end{overpic}
\end{minipage}
\caption{Preimages and images of the composition of the involutions $\iota^-_{\nu_1}$ (red lines) and $\iota^+_{\nu_2}$ (blue lines).}
\label{Fig:complex_preim_and_im}
\end{figure}
\begin{rem}
As an elliptic space curve it is possible to define a group (addition) law on the image $\mathfrak{C}$ of $\varphi$. The underlying geometric constructions of these group law have been presented in \cite[p. 20-22]{husemoller1987elliptic}. This addition law corresponds to the remarkable property of Jacobi elliptic functions $$\begin{vmatrix}
\cn(z_1) & \sn(z_1) & \dn(z_1) & 1\\
\cn(z_2) & \sn(z_2) & \dn(z_2) & 1\\
\cn(z_3) & \sn(z_3) & \dn(z_3) & 1\\
\cn(z_4) & \sn(z_4) & \dn(z_4) & 1\\
\end{vmatrix} = 0 \  \text{ for } \  z_1+z_2+z_3+z_4 = 0$$
for a common modulus $k$.
From this we also find that the four real points\\  \mbox{$v_+(u_0),v_-(u_0+ \nu_i),v_-(\tilde{u}_0)$ and $v_+(\tilde{u}_0 +\nu_i)$} are coplanar, see also \cite[Proposition 8.4.]{bobenko2020noneuclidean}.
\end{rem}
\vspace{.4cm}

\section{The discrete time Euler top as the composition of real and rational involutions}
\label{Sec:real_inv}
In Section \ref{Sec:geometric_maps} we have shown that the discrete time Euler top can be described as the composition of two maps acting on two ruled quadrics that are members of the pencil that is introduced by the conserved quantities. These maps, given in (\ref{tau1}) and (\ref{tau2}), are formulated in terms of elliptic functions and they are no involutions. In the previous section we found a representation of these maps as involutions, formulated with the help of a complex valued, doubly periodic function. The goal of this section is to derive a representation by real and rational functions that are also involutions. Therefore, in the first subsection we will introduce the geometric and analytic construction of such involutions defined on the intersection curve of general hyperboloids and cylinders. Geometrically, they act on the rulings of two non-degenerate, doubly ruled hyperboloids. By specializing them to the quadrics accociated to the dEt, which we described in Section \ref{Sec:geometric_maps}, we find compositions of two real involutions describing the iterations of the dEt. Finally, we treat various examples, firstly the square root of the dEt and secondly birational examples of the involutions. The latter correspond to the cases where one of the ruled quadrics is degenerate and since they are explicitly treated in Section \ref{Sec:birational} we omit those cases in the following general construction.
\vspace{.4cm}´
\subsection{Involutions on the intersection of hyperboloids and cylinders}
\vspace{.4cm}
\label{Sec:geo_background}

We start considering general one-sheeted hyperboloids. As a doubly ruled surface we can parameterize a one-sheeted hyperboloid by \begin{align}
\label{H_para} s :\mathbb{R}^2 \rightarrow \mathbb{R}^3, \quad(u,v)\mapsto b(u) + vd(u) 
\end{align}
where $b$ denotes a directrix on the hyperboloid and $d$ a direction vector. There exist two families of direction vectors $d_\pm$, each corresponding to one ruling of the hyperboloid. A fixed choice of one family gives rise to a parametrization as above. Through a fixed point on a one-sheeted hyperboloid there exist two distinct lines, belonging to distinct rulings, that are fully contained in the hyperboloid. We will denote these lines as the \emph{generating lines} though the given point.
We will start by giving an explicit description of $d_\pm$, the direction vectors that span these lines:
\begin{prop}
\label{Thm:d(x)}
For a point $X = (
X_1,X_2,X_3)\in \mathbb{R}^3$ on a one-sheeted hyperboloid $\mathcal{H}\subset \mathbb{R}^3$ given by 
\begin{align}
\label{H}
\mathcal{H}\colon A x_1^2 + B x_2^2 +C x_3^2+D = 0 \quad \text{ with }\quad  A,B,C,D \in \mathbb{R}
\end{align} 
generating lines through $X$ are given by $$
v \mapsto X + vd_\pm(X)$$ \vspace{-.3cm}with\vspace{-.3cm}
\begin{align*}
d_\pm (X):=
&\begin{pmatrix}
a_\mp(X)\\
b_\pm(X)\\
c(X)
 \end{pmatrix}=
\begin{pmatrix}
ABC X_1X_3 \mp \sqrt{ABCD} B X_2\\
ABC X_2X_3 \pm \sqrt{ABCD} AX_1\\
-AB(AX_1^2+BX_2^2)
\end{pmatrix}\\
\end{align*}
\end{prop}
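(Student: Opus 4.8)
The plan is to verify directly that the line $v\mapsto X+vd_\pm(X)$ lies on $\mathcal{H}$, which by definition of a doubly ruled quadric amounts to showing two things: first, that $d_\pm(X)$ is a null vector for the quadratic part, i.e.\ $A a_\mp^2 + B b_\pm^2 + C c^2 = 0$; and second, that the line direction is ``conjugate'' to $X$ with respect to the form, i.e.\ $A X_1 a_\mp + B X_2 b_\pm + C X_3 c = 0$. Indeed, substituting $x = X + v d_\pm(X)$ into $A x_1^2 + B x_2^2 + C x_3^2 + D$ and using $A X_1^2 + B X_2^2 + C X_3^2 + D = 0$ (since $X\in\mathcal{H}$) leaves exactly $2v(A X_1 a_\mp + B X_2 b_\pm + C X_3 c) + v^2(A a_\mp^2 + B b_\pm^2 + C c^2)$; both coefficients must vanish identically for all $v$, and conversely if they do then the whole line lies on $\mathcal{H}$. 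So the proof reduces to these two polynomial identities in $X_1,X_2,X_3$ (subject to the constraint $X\in\mathcal{H}$), together with the remark that the two sign choices give two genuinely distinct directions, hence the two rulings.

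First I would check the conjugacy (linear-in-$X$) identity, which is the cleaner one. Writing out $A X_1 a_\mp + B X_2 b_\pm + C X_3 c$ with the given formulas:
\begin{align*}
A X_1\bigl(ABC X_1X_3 \mp \sqrt{ABCD}\,B X_2\bigr) &+ B X_2\bigl(ABC X_2X_3 \pm \sqrt{ABCD}\,A X_1\bigr)\\
&+ C X_3\bigl(-AB(A X_1^2 + B X_2^2)\bigr).
\end{align*}
The two $\sqrt{ABCD}$ terms are $\mp \sqrt{ABCD}\,AB X_1X_2 \pm \sqrt{ABCD}\,AB X_1 X_2 = 0$, and the remaining terms are $A^2BC X_1^2 X_3 + AB^2 C X_2^2 X_3 - A^2 B C X_1^2 X_3 - A B^2 C X_2^2 X_3 = 0$. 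So this identity holds with no use of the constraint at all — it is a formal cancellation.

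The main work is the null-vector identity $A a_\mp^2 + B b_\pm^2 + C c^2 = 0$, and this is where the constraint $X\in\mathcal{H}$ enters. Expanding $A(ABC X_1 X_3 \mp \sqrt{ABCD}\,B X_2)^2 + B(ABC X_2 X_3 \pm \sqrt{ABCD}\,A X_1)^2 + C A^2 B^2 (A X_1^2 + B X_2^2)^2$, the cross terms carrying $\sqrt{ABCD}$ again cancel between the first two summands (opposite signs), leaving a sum of squares:
\[
A^3 B^2 C^2 X_1^2 X_3^2 + A B^2 C D \cdot B X_2^2 + A^2 B^3 C^2 X_2^2 X_3^2 + A^3 B^2 C D X_1^2 + C A^2 B^2 (A X_1^2 + B X_2^2)^2.
\]
I would factor out $A B^2 C$ and reduce to showing $A^2 C X_1^2 X_3^2 + D B X_2^2 + A B C X_2^2 X_3^2 + A^2 D X_1^2 + A B (A X_1^2 + B X_2^2)^2 = 0$. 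Here is the one spot where I expect a genuine computation: I would substitute $C X_3^2 = -(A X_1^2 + B X_2^2 + D)$ (the equation of $\mathcal{H}$) to eliminate $X_3$, after which every term becomes a polynomial in $X_1^2, X_2^2$ with coefficients in $A,B,C,D$, and I expect everything to cancel. The anticipated obstacle is purely bookkeeping — keeping track of the signs from the $\pm/\mp$ convention and making sure the $\sqrt{ABCD}$ cross terms really do cancel rather than double — but there is no conceptual difficulty. Finally I would note that $d_+(X)\neq d_-(X)$ (their difference is $(\mp 2\sqrt{ABCD}\,B X_2,\ \pm 2\sqrt{ABCD}\,A X_1,\ 0)$, nonzero at a generic point of the hyperboloid), so the two lines are distinct, and since a one-sheeted hyperboloid carries exactly two lines through each point, these are precisely the generating lines; the degenerate possibility $\sqrt{ABCD}=0$ is excluded by $\mathcal{H}$ being a non-degenerate one-sheeted hyperboloid.
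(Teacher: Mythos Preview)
Your approach is essentially the same as the paper's: both reduce the question to the two conditions $Aa^2+Bb^2+Cc^2=0$ and $AX_1a+BX_2b+CX_3c=0$; the paper then simply says ``solving for $a,b,c$ leads to $d_\pm$,'' whereas you verify the given formulas directly, which is arguably the cleaner direction since the answer is already stated. One small bookkeeping slip: the common factor to pull out in the null-vector step is $A^2B^2C$ (not $AB^2C$), leaving $ACX_1^2X_3^2+BDX_2^2+BCX_2^2X_3^2+ADX_1^2+(AX_1^2+BX_2^2)^2$, and the substitution $CX_3^2=-(AX_1^2+BX_2^2+D)$ then collapses this to zero exactly as you anticipate.
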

\begin{rem}
Alternative forms for $d_\pm (X)$ are 
\begin{align*}
d^{(1)}_\pm (X) &= 
\begin{pmatrix}
-BC(BX_2^2+CX_3^2)\\
ABC X_1X_2 \mp \sqrt{ABCD} C X_3\\
ABC X_1X_3 \pm \sqrt{ABCD} B X_2
\end{pmatrix} \\ \text{or} \qquad  d^{(2)}_\pm (X)&= 
\begin{pmatrix}
ABC X_1X_2 \pm  \sqrt{ABCD} C X_3\\
-AC(AX_1^2+CX_3^2)\\
ABCX_2X_3 \mp \sqrt{ABCD} A X_1
\end{pmatrix} .
\end{align*}
\end{rem}
\begin{proof}
Let $\langle\cdot,\cdot\rangle_\mathcal{H}$ denote the quadric form given by $\mathcal{H}$, i.e., $$
\mathcal{H} = \left\{x\in \mathbb{R}^3\colon \left\langle\begin{pmatrix}
x\\1 \end{pmatrix},\begin{pmatrix}
x\\1 \end{pmatrix}\right\rangle_{\mathcal{H}}=0\right\}$$ and $\parallel x \parallel_\mathcal{H}$ the corresponding norm. If we omit the $\pm$ in the notation, the direction vector $d(X)$ has to fulfill $$\left\lVert \begin{pmatrix}
X\\1 \end{pmatrix} + v \begin{pmatrix}
d(X)\\0 \end{pmatrix}\right\rVert_\mathcal{H}=0 \quad \text{for all } v \in \mathbb{R}.$$
This consideration yields a set of two equations:
 \begin{equation}
 \label{Eq:dx}
\begin{aligned}
A a^2(X) + B b^2(X) + C c^2(X) &=0\\
A a(X)X_1 + Bb(X)X_2 + C c(X) X_3 &= 0.
\end{aligned}
\end{equation}
They leave one degree of freedom since multiplication of $a(X),b(X)$ and $c(X)$ with a scalar gives rise to the same equations.
Solving for $a(X),b(X)$ and $c(X)$ leads to $d(X)=d_\pm(X)$ as given in the theorem or any $d^{(i)}_\pm(X)$ from the remark.
\end{proof}
\noindent Let $\mathcal{C}$ be a cylinder given by \begin{align}
\label{C}
\mathcal{C} \colon \alpha x_1^2 +\beta x_3^2 + \gamma = 0\text{ with }\quad  \alpha, \beta,\gamma \in \mathbb{R} \end{align}
and $\mathcal{H}$ be the one-sheeted hyperboloid as given in (\ref{H}).
The intersection curve $\mathcal{H}\cap\mathcal{C}$ of these two quadrics in real space consists of two connected components. Using one of the components as a directrix $b(u)$ and one family of direction vectors $d_\pm(b(u))$ from Proposition \ref{Thm:d(x)}, we can derive a parametrization of the hyperboloid as described in (\ref{H_para}). 
Each of the two generating lines through a point lying on the directrix intersects the cylinder in exactly one other point, lying on the other component of the intersection curve, see Figure \ref{Fig:iota}. We will now explicitly describe the involutions $\iota_{\mathcal{H},\mathcal{C}}^\pm$ interchanging these two points for a fixed cylinder $\mathcal{C}$ and hyperboloid $\mathcal{H}$.
\begin{figure}
\centering
\begin{overpic}[width=.45\linewidth]
{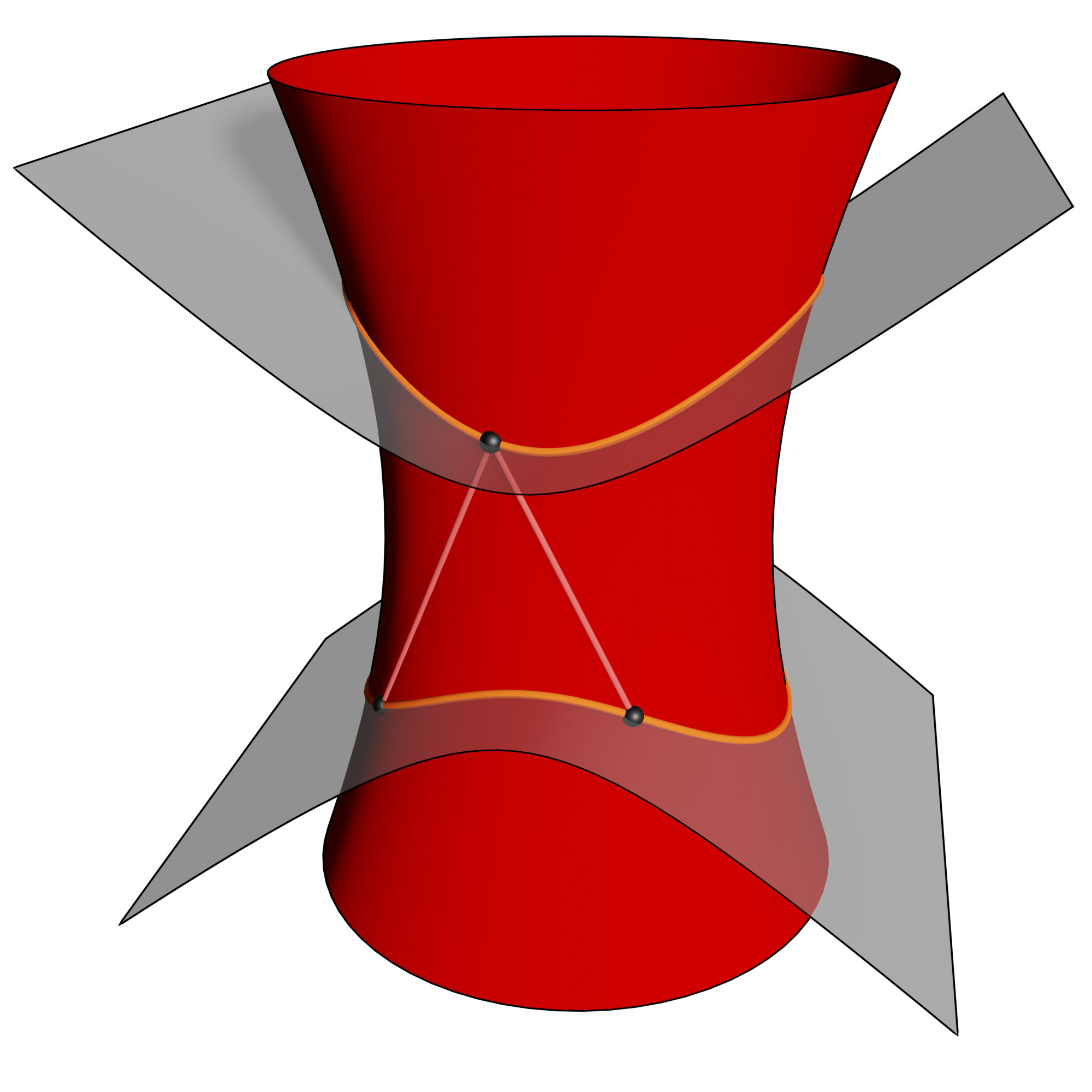} 
\put(46,61){$X$}
\put(75,98){\textcolor{red}{$\mathcal{H}$}}
\put(75,61){\textcolor{orange}{$\mathcal{H}\cap \mathcal{C}$}}
\put(20,79){$\mathcal{C}$}
\end{overpic}
\caption{The intersection curve (orange) of a one-sheeted hyperboloid $\mathcal{H}$ (red) and a hyperbolic cylinder $\mathcal{C}$ (gray) and intersection points of the generating lines through $X$ with $\mathcal{C}$}
\label{Fig:iota}
\end{figure}
\begin{prop}
\label{Thm:NdIntersectionPoint}
The generating lines through a point $X = (X_1,X_2,X_3)$, lying on the intersection curve $\mathcal{H} \cap \mathcal{C}$ of a one-sheeted hyperboloid $\mathcal{H}$ and a cylinder $\mathcal{C}$ given by (\ref{H}) and (\ref{C}), intersect the cylinder $\mathcal{C}$ in a second point other than $X$, namely $$ \hat{X}= X + v(X) d_\pm(X)$$
where
$$v(X) := -2 \frac{\alpha a_\mp(X)X_1 +\beta c(X) X_3}{\alpha a_\mp^2(X) +\beta c^2(X)} $$
and $d_\pm(X)$ as given in Proposition \ref{Thm:d(x)}.
\end{prop}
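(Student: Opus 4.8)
The plan is to intersect one generating line through $X$ with the cylinder $\mathcal{C}$ and to identify the desired second point as the nontrivial root of a quadratic in the line parameter. Fix one of the two families of direction vectors, say $d_\pm(X) = (a_\mp(X),\, b_\pm(X),\, c(X))$ from Proposition \ref{Thm:d(x)}, and consider the generating line $\ell(v) = X + v\, d_\pm(X)$. By Proposition \ref{Thm:d(x)} this line lies entirely on $\mathcal{H}$, so every point $\ell(v)$ satisfies the hyperboloid equation automatically; it therefore suffices to determine for which $v$ the point $\ell(v)$ also lies on $\mathcal{C}$.

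First I would substitute $\ell(v)$ into the cylinder equation (\ref{C}). Since $\mathcal{C}$ involves only the first and third coordinates, this reads
$$\alpha\bigl(X_1 + v\, a_\mp(X)\bigr)^2 + \beta\bigl(X_3 + v\, c(X)\bigr)^2 + \gamma = 0,$$
i.e.\ $p(v) = 0$ for the quadratic
$$p(v) = \bigl(\alpha a_\mp^2(X) + \beta c^2(X)\bigr) v^2 + 2\bigl(\alpha a_\mp(X) X_1 + \beta c(X) X_3\bigr) v + \bigl(\alpha X_1^2 + \beta X_3^2 + \gamma\bigr).$$
Because $X \in \mathcal{C}$ the constant term vanishes, so $v = 0$ is a root (it recovers $X$ itself) and we may factor $p(v) = v\bigl[(\alpha a_\mp^2 + \beta c^2)\, v + 2(\alpha a_\mp X_1 + \beta c X_3)\bigr]$. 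Reading off the other root gives
$$v(X) = -2\,\frac{\alpha a_\mp(X) X_1 + \beta c(X) X_3}{\alpha a_\mp^2(X) + \beta c^2(X)},$$
hence $\hat X = X + v(X)\, d_\pm(X)$, which is the asserted formula. By construction $\hat X \in \mathcal{C}$, and $\hat X \in \mathcal{H}$ because the entire line $\ell$ lies on $\mathcal{H}$; thus $\hat X$ lies on the intersection curve $\mathcal{H} \cap \mathcal{C}$ (and, by the discussion preceding the proposition, on the component not containing $X$). I would state the result for both sign choices $d_+$ and $d_-$, since Proposition \ref{Thm:NdIntersectionPoint} is phrased with $d_\pm$.

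The only point requiring care — the mild obstacle — is to ensure that $\hat X$ is genuinely a distinct second point. This needs the leading coefficient $\alpha a_\mp^2(X) + \beta c^2(X)$ to be nonzero, so that $p$ is an honest quadratic with exactly the two roots $0$ and $v(X)$; if it vanished, the generating line would point in an asymptotic direction of the cylinder and would meet $\mathcal{C}$ only at $X$ or lie inside it. It also needs $v(X) \neq 0$, equivalently $\alpha a_\mp(X) X_1 + \beta c(X) X_3 \neq 0$, for otherwise the generating line is tangent to $\mathcal{C}$ at $X$. Both are non-generic and do not occur for the non-degenerate, transverse configurations considered here, the genuinely degenerate cases being treated separately in Section \ref{Sec:birational}. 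Under this standing assumption the factorization of $p(v)$ is conclusive, so the essential content of the proof is that one-line factorization, the remainder being bookkeeping about the excluded degeneracies.
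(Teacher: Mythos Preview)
Your proof is correct and follows exactly the approach of the paper: substitute the generating line into the cylinder equation, use $X\in\mathcal{C}$ to kill the constant term, and read off the second root of the resulting quadratic in $v$. The paper's own proof is a one-sentence sketch of precisely this computation, so your version is simply a more detailed write-up of the same argument, with the added (and appropriate) discussion of the non-degeneracy conditions.
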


\begin{proof}
Since the generating lines through $X$ are given by $X+v d_\pm(X)$ for $v \in \mathbb{R}$, it suffices to substitute $X+vd_\pm(X)$ into $\mathcal{C}$ and solve for $v$, while using that $X$ already lies on $\mathcal{C}$.
\end{proof}
\begin{prop}
The maps \begin{align}
\iota_{\mathcal{H},\mathcal{C}}^\pm &\colon \mathcal{H}\cap \mathcal{C}\rightarrow \mathcal{H}\cap \mathcal{C},\quad x\mapsto x + v(x)d_\pm(x)
\label{Inv:iota}
\end{align} with $\mathcal{H},\mathcal{C},v(x)$ and $d_\pm(x)$ as given in Proposition \ref{Thm:NdIntersectionPoint} are involutions, i.e., $$(\iota_{\mathcal{H},\mathcal{C}}^\pm \circ \iota_{\mathcal{H},\mathcal{C}}^\pm) (x) = x.$$
\end{prop}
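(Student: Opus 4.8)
The plan is to show that $\iota_{\mathcal{H},\mathcal{C}}^\pm$ is an involution by exploiting the geometric description rather than grinding through the explicit formula for $v(x)$. The key observation is that $\iota_{\mathcal{H},\mathcal{C}}^\pm(X) = \hat X$ is the \emph{second} intersection point of one fixed generating line $\ell$ through $X$ with the cylinder $\mathcal{C}$: the line $\ell = \{X + v\,d_\pm(X) : v\in\mathbb{R}\}$ lies entirely on $\mathcal{H}$, and substituting it into the degree-two equation of $\mathcal{C}$ gives a quadratic in $v$ with the two roots $v=0$ (giving $X$) and $v = v(X)$ (giving $\hat X$). So the whole content of the claim is: the generating line of $\mathcal{H}$ in the same ruling through $\hat X$ is the \emph{same} line $\ell$, and hence applying the construction at $\hat X$ sends it back to the other root, which is $X$.

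Concretely I would proceed in the following steps. First, fix the sign, say $+$, and write $d_+(X)$ as in Proposition \ref{Thm:d(x)}; note $\hat X = X + v(X) d_+(X)$ lies on $\mathcal{H}\cap\mathcal{C}$ (on $\mathcal{C}$ by Proposition \ref{Thm:NdIntersectionPoint}, on $\mathcal{H}$ because the whole line $\ell$ does). Second, show that $d_+(\hat X)$ is parallel to $d_+(X)$, i.e. the direction vector of the $+$-ruling is constant along $\ell$. This is the crux: by equations (\ref{Eq:dx}), the direction vector at a point $Y\in\mathcal{H}$ is characterized (up to scalar, within the chosen ruling) by $A a^2 + B b^2 + C c^2 = 0$ and $A aY_1 + B bY_2 + C cY_3 = 0$; I would verify that $d_+(X)$ itself satisfies both equations with $Y = \hat X$ in place of $X$ — the first equation is independent of the point, and the second follows because $\langle d_+(X), \hat X\rangle_{\mathcal{H}\text{-linear}} = \langle d_+(X), X\rangle + v(X)\langle d_+(X), d_+(X)\rangle$, where the first term vanishes by the second equation of (\ref{Eq:dx}) at $X$ and the second vanishes by the first equation of (\ref{Eq:dx}). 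The only subtlety is checking that $d_+(X)$ lands in the $+$-ruling at $\hat X$ and not the $-$-ruling; this can be pinned down by a sign/continuity argument, or by direct inspection of one component of the formula (the third component $c(X) = -AB(AX_1^2+BX_2^2)$ and how it transforms). Third, once $d_+(\hat X) \parallel d_+(X)$, the generating line of the $+$-ruling through $\hat X$ equals $\ell$, so $\iota_{\mathcal{H},\mathcal{C}}^+(\hat X)$ is the second intersection of $\ell$ with $\mathcal{C}$ distinct from $\hat X$; since $\ell\cap\mathcal{C} = \{X, \hat X\}$ (the quadratic has exactly these two roots), this second point is $X$. Hence $(\iota_{\mathcal{H},\mathcal{C}}^+\circ\iota_{\mathcal{H},\mathcal{C}}^+)(X) = X$, and symmetrically for $-$.

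An alternative, purely computational route would be to substitute $\hat X = X + v(X)d_\pm(X)$ directly into the formulas for $v(\cdot)$ and $d_\pm(\cdot)$ and simplify to get $v(\hat X) d_\pm(\hat X) = -v(X) d_\pm(X)$; this works but is messy, and I would only fall back on it if the ruling-sign bookkeeping in the geometric argument proves delicate. I expect the main obstacle to be precisely that bookkeeping: making sure that the map $X\mapsto \hat X$ keeps us in a \emph{consistent} ruling so that iterating stays on line $\ell$, rather than hopping to the crossing line through $\hat X$ (which would generically not pass through $X$). Resolving this cleanly is just the statement that the two rulings of a one-sheeted hyperboloid are globally distinguished, and that the explicit $d_\pm$ respect this labelling — which one reads off from the $\pm\sqrt{ABCD}$ term in Proposition \ref{Thm:d(x)}.
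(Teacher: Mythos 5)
Your overall strategy coincides with the paper's: both proofs reduce the involution property to showing that $d_\pm(x)$ and $d_\pm(\hat x)$ (same sign choice) are linearly dependent, after which the conclusion follows because the generating line meets the cylinder in exactly the two points $x$ and $\hat x$. Where you differ is in how that dependency is established. The paper substitutes the explicit formula of Proposition \ref{Thm:d(x)} at $\hat x = x + v(x)d_\pm(x)$ and verifies by direct computation that $d_\pm(x)\times d_\pm(\hat x)=0$, using (\ref{Eq:dx}) to kill the intermediate terms. You instead observe that $d_\pm(x)$ already satisfies both defining equations (\ref{Eq:dx}) at the point $\hat x$, by bilinearity: the first equation does not involve the base point, and the second expands as $\langle d_\pm(x),\hat x\rangle = \langle d_\pm(x),x\rangle + v(x)\,\lVert d_\pm(x)\rVert^2 = 0$. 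This is genuinely cleaner and more conceptual than the cross--product computation. However, the step you defer --- that $d_+(x)$ is then proportional to $d_+(\hat x)$ rather than to $d_-(\hat x)$, both of which solve the system at $\hat x$ --- is exactly the nontrivial content, since landing in the wrong ruling would send $\hat x$ along the crossing generator, which does not pass through $x$. The paper's cross--product computation settles this automatically because it is carried out with a consistent sign; your argument needs the continuity step made explicit: along the segment $t\mapsto x+t\,v(x)d_\pm(x)$, $t\in[0,1]$, the vector $d_\pm(x)$ solves (\ref{Eq:dx}) at every point, and since the two rulings of a non-degenerate one-sheeted hyperboloid are transverse at every point where $d_+\neq d_-$ (which for $ABCD\neq 0$ fails only where $X_1=X_2=0$ and the formula degenerates), the ruling label cannot jump, so it stays equal to its value at $t=0$. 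With that sentence added your proof is complete and arguably preferable to the paper's; as written, the crux is only asserted.
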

\begin{proof}
Let $x \in \mathcal{H}\cap \mathcal{C}$ and $\hat{x}:= \iota^\pm_{\mathcal{H},\mathcal{C}}(x) =x + v(x)d_\pm(x)$ for a fixed choice of $\pm$. We want to show that the second iteration of $\iota^\pm_{\mathcal{H},\mathcal{C}}$ takes $\hat{x}$ back to $x$. This is the case iff $d_\pm(x)$ and $d_\pm(\hat{x})$ for the same choice of $\pm$ are linearly dependent. To prove their dependency we show that their cross product equals to zero. For simplicity we introduce the notation $v:= v(x)$ and  \begin{align*}
\begin{pmatrix}
a\\
b\\
c
 \end{pmatrix}:&=d_\pm (x)
=
\begin{pmatrix}
ABC x_1x_3 \mp \sqrt{ABCD} B x_2\\
ABC x_2x_3 \pm \sqrt{ABCD} Ax_1\\
-AB(Ax_1^2+Bx_2^2)
\end{pmatrix}\end{align*}
Then \vspace{-.2cm} $$
d_\pm (\hat{x})=d_\pm(x+v(x)d_\pm(x))=
\begin{pmatrix}
ABC (x_1+va)(x_3+vc) \mp \sqrt{ABCD} B (x_2+vb)\\
ABC (x_2+vb)(x_3+vc)\pm \sqrt{ABCD} A(x_1+va)\\
-AB(A(x_1+va)^2+B(x_2+vb)^2)
\end{pmatrix}$$
and \begin{align*}
(d_\pm(x)\times d_\pm(\hat{x}))_1 = &-AB b (A(x_1+va)^2+B(x_2+vb)^2)\\
&-ABC c(x_2+vb)(x_3+vc)\mp \sqrt{ABCD} A(x_1+va)\\
= &-AB b (Ax_1^2+Bx_2^2)-ABCcx_2x_3\mp \sqrt{ABCD}c x_1\\
&+ v(-2ABb(Aax_1+Bbx_2)-ABCc(cx_2+bx_3)\mp \sqrt{ABCD} A ac) \\
&+v^2(-ABb(Aa^2+Bb^2+Cc^2))\\
\stackrel{(\ref{Eq:dx})}{=}  &\ bc - cb + v c(ABCbx_3-ABCcx_2\mp  \sqrt{ABCD} A a)\\
= &\ v c A^2B^2 C x_2 (Ax_1^2 +Bx_2^2+Cx_3^2+D)
= 0 
\end{align*}
By similar calculation for the remaining two entries we can show that $$ d_\pm(x)\times d_\pm(\hat{x}) = \begin{pmatrix}
(d_\pm(x)\times d_\pm(\hat{x}))_1\\
(d_\pm(x)\times d_\pm(\hat{x}))_2\\
(d_\pm(x)\times d_\pm(\hat{x}))_3
\end{pmatrix} = 
\begin{pmatrix}
0\\0\\0
\end{pmatrix}$$
which proves the claim.
\end{proof}
\vspace{.4cm}

\subsection{The discrete time Euler top as the composition of real involutions}
\vspace{.4cm}
\label{Sec:real_inv}

As motivated in the beginning of Section \ref{Sec:real_inv} this section will give a general formulation of a pair of involutions representing the dEt. The desired involutions will be stated in terms of real functions omitting elliptic functions entirely. This formulation will hold for all generic, non-degenerate cases and the degenerate cases are explicitly investigated in Section \ref{Sec:birational}. \\ \\  
Let $\delta\in \mathbb{R}^3$ be given as the parameter of the dEt with the signs of its entries as introduced in (\ref{sign_alphs}). For a given initial condition $X\in \mathbb{R}^3$ Proposition \ref{Prop:nu} defines the elliptic time step $\nu>0$. Let $\nu_1$ and $\nu_2$ such that $$\nu_1+\nu_2=\nu \quad \text{ with} \quad \nu_i \in \tilde{I} := [-2K_{\sigma},2K_{\sigma}]\setminus \{0,\pm2K_{\sigma}\}.$$ 
Excluding the values $\{0,\pm2K_{\sigma}\}$ ensures that the ruled quadrics defined by $\nu_i$ are non-degenerate, i.e., one-sheeted hyperboloids.
\\
\\
In Section \ref{Sec:geo_background}
we defined the involutions \begin{align*}&\iota_{\mathcal{H},\mathcal{C}}^\pm \colon \mathcal{H}\cap \mathcal{C}\rightarrow \mathcal{H}\cap \mathcal{C},\quad x\mapsto x + v(x)d_\pm(x) \\ \\
\text { with }d_\pm (x)&:=
\begin{pmatrix}
a_\mp(x)\\
b_\pm(x)\\
c(x)
 \end{pmatrix}=
\begin{pmatrix}
ABC x_1x_3 \mp \sqrt{ABCD} Bx_2\\
ABC x_2x_3 \pm \sqrt{ABCD} Ax_1\\
-AB(Ax_1^2+Bx_2^2)
\end{pmatrix} \\
\text{and }
v(x) &:= -2 \frac{\alpha a_\mp(x)x_1 +\beta c(x) x_3}{\alpha a_\mp^2(x) +\beta c^2(x)}  \end{align*}
on the intersection curve $\mathcal{H}\cap\mathcal{C}$ of a hyperboloid $\mathcal{H}$ and a cylinder $\mathcal{C}$. Explicitly reformulating these maps for the quadrics introduced by the geometry of the dEt, i.e., $\mathcal{C}_2$ from ($\ref{C_i}$) and $\mathcal{H}_i$, given by ($\ref{H_i}$), leads to the involutions
 \begin{align*}
 &\iota_{\mathcal{H}_i,\mathcal{C}_2}^{\pm \delta} \colon \mathcal{H}_i\cap \mathcal{C}_2\rightarrow \mathcal{H}_i\cap \mathcal{C}_2,\quad x\mapsto x + v(x)d(x) \\ \\
\text { with }
 d(x) &:= \begin{pmatrix}
a(x)\\
b(x)\\
c(x)
\end{pmatrix}=
\begin{pmatrix}
\mathcal{A}_i
\mathcal{B}_i
\mathcal{C}_i x_1x_3 -
\mathcal{S}_i
\mathcal{B}_i x_2\\
\mathcal{A}_i
\mathcal{B}_i
\mathcal{C}_i x_2x_3 +
\mathcal{S}_i
\mathcal{A}_ix_1\\
-\mathcal{A}_i
\mathcal{B}_i(
\mathcal{A}_ix_1^2+
\mathcal{B}_ix_2^2)
\end{pmatrix}\\ \text{and }
v(x) &:= -2 \frac{F_2\delta_2\delta_3a(x)x_1 -\delta_1\delta_2 c(x) x_3}{F_2\delta_2\delta_3 a(x)^2 -\delta_1\delta_2 c(x)^2} \end{align*}
where $\mathcal{S}_i$, as defined in (\ref{S_i}), changes its sign by reversing all signs of $\delta_i$.
As $\mathcal{H}_i$ and $\mathcal{C}_2$ are derived from the initial parameters of the dEt, we can set \begin{align*}
\iota_{(\nu_i,\pm\delta)}:=\iota^{\pm\delta}_{\mathcal{H}_i,\mathcal{C}_2}
\end{align*} to reinforce that these involutions only depend on $\nu_i$ and $\pm\delta$.
\noindent
\begin{thm}
\label{Thm:real_involution}
For given parameter $\delta$ of the dEt let $\nu$ be the corresponding elliptic time step and $\nu_1+\nu_2=\nu$ with $\nu_i\in \tilde{I}$. Then the compositions of involutions 
\begin{align}\begin{cases}
\label{iotainversion}
\iota_{(\nu_2,-\sgn(\nu_2) \delta)}\circ \iota_{(\nu_1,\sgn(\nu_1)\delta)} & \text{ for case (a)}\\
\iota_{(\nu_2,\sgn(\nu_2) \delta)}\circ \iota_{(\nu_1,-\sgn(\nu_1)\delta)} & \text{ for case (b)}\end{cases}
\end{align} coincide with the birational map $f(\cdot,\delta)$ from (\ref{explicit_f}).
Inverting both signs leads to the inverse of $f$.

\end{thm}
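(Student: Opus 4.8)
The plan is to reduce Theorem \ref{Thm:real_involution} to the already-established complex-involution picture of Section \ref{Sec:complex} together with the geometric identification of the hyperboloids $\mathcal{H}_i$ in Theorem \ref{Thm:lamda_i}. The key observation is that both $\iota^{\pm}_{\nu_i}$ (the complex involution restricted to $\mathrm{Re}(\mathfrak{C})$) and $\iota^{\pm\delta}_{\mathcal{H}_i,\mathcal{C}_2}$ are, by construction, the \emph{same geometric map}: each interchanges the two points of $\mathcal{H}_i\cap\mathcal{C}_2$ lying on a common generating line of the one-sheeted hyperboloid $\mathcal{H}_i$, where $\mathcal{C}_2$ is the two-sheeted quadric carrying the two real components of the base curve. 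So the proof has two halves: first, match the maps pointwise; second, match the sign bookkeeping (which ruling, i.e. which of $d_+$ or $d_-$, and which sign of $\delta$ enters $\mathcal{S}_i$ and $v(x)$) so that the composition reproduces $f(\cdot,\delta)$ rather than $f(\cdot,-\delta)$ or some mixed map.

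For the first half I would argue as follows. Fix $X=v^{\sigma}_{\pm}(u_0)$ on one real component of $\mathcal{H}_i\cap\mathcal{C}_2$. By Proposition \ref{Thm:NdIntersectionPoint} the generating line $X+vd_{\pm}(X)$ meets $\mathcal{C}_2$ in exactly one further point, and by Theorem \ref{Thm:lamda_i} (the computation of $\lambda_i$) this line is precisely the ruling of $\mathcal{H}_i$ joining $v^{\sigma}_{\pm}(u_0)$ to $v^{\sigma}_{\mp}(-u_0\mp\nu_i)$; the two rulings through $X$ correspond to the two signs $\mp\nu_i$, matching the two branches $d_{\pm}$. Hence on the base curve $\iota^{\pm\delta}_{\mathcal{H}_i,\mathcal{C}_2}$ agrees with $\tau^{\sigma}_{\nu_i}$ for one choice of ruling. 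Since $\iota^{\pm\delta}_{\mathcal{H}_i,\mathcal{C}_2}$ is a rational map of $\mathbb{R}^3$ that restricts on the base curve to $\tau^{\sigma}_{\nu_i}$, and $\mathcal{H}_i,\mathcal{C}_2$ are determined by $\delta$ and $X$, the composition in (\ref{iotainversion}) restricted to the base curve equals $\tau^{\sigma}_{\nu_2}\circ\tau^{\sigma}_{\nu_1}$, which by Theorem \ref{Thm:lamda_i} is $f(\cdot,\delta)$ on that curve. Because both the composition and $f(\cdot,\delta)$ are rational maps of $\mathbb{R}^3$ and agree on a Zariski-dense subset (the base curves sweep out all of $\mathbb{R}^3$ as $X$ varies over generic initial data, i.e. as $(F_1,F_3)$ vary), they coincide identically.

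The second half, and the main obstacle, is the sign/orientation bookkeeping, which is exactly why the theorem statement carries the awkward factors $\sgn(\nu_i)$ and the case-dependent swap of $+\delta$ versus $-\delta$. Here I would work out, for a representative sign pattern $\delta_1<0,\delta_2>0,\delta_3<0$ and case (a), which branch $d_{+}$ or $d_{-}$ and which sign of $\mathcal{S}_i$ (recall from (\ref{S_i}) that $\mathcal{S}_i$ flips under $\delta\mapsto-\delta$) makes $X+v(X)d_{\pm}(X)$ land on $v^{\sigma}_{\mp}(-u_0-\nu_1)$ rather than $v^{\sigma}_{\mp}(-u_0+\nu_1)$ — i.e. pin down the generator that realizes the \emph{first} map $\iota^-_{\nu_1}$ of the complex picture. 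One tracks how $v(X)$ changes sign when $\nu_i<0$ versus $\nu_i>0$ (using $\sn^2(\nu_i/2)=\sn^2(-\nu_i/2)$ but the \emph{linear} shift reverses), and how this is compensated by flipping $\delta$; Lemma \ref{lemma:signs} guarantees $\mathcal{A}_i<0,\mathcal{C}_i>0$ so that $\sqrt{\mathcal{A}_i\mathcal{B}_i\mathcal{C}_i\mathcal{D}_i}$ is real and (\ref{S_i}) fixes its sign. Once the bookkeeping confirms that $\iota_{(\nu_1,\sgn(\nu_1)\delta)}$ sends $v^{\sigma}_+(u_0)$ to $v^{\sigma}_-(-u_0-\nu_1)$ and $\iota_{(\nu_2,-\sgn(\nu_2)\delta)}$ sends that to $v^{\sigma}_+(u_0+\nu)$ — matching $\iota^+_{\nu_2}\circ\iota^-_{\nu_1}$ from the theorem of Section \ref{Sec:complex} — case (a) is done; case (b) follows by the mirror symmetry (negation of $x_1$ instead of $x_3$) already recorded after Proposition \ref{Prop:nu}. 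The final remark, that reversing all signs of $\delta$ gives $f^{-1}=f(\cdot,-\delta)$, is then immediate from $f^{-1}(\cdot,\delta)=f(\cdot,-\delta)$ stated below (\ref{explicit_f}).
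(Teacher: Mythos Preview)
Your overall strategy matches the paper's: both recognize that the geometric construction of Section~\ref{Sec:geo_background} already guarantees that each $\iota_{(\nu_i,\pm\delta)}$ equals one of $\tau^{\sigma}_{\pm\nu_i}$, so the only content is determining \emph{which} ruling (equivalently, which sign of $\delta$) is selected. Where you diverge from the paper is in how this sign bookkeeping is settled, and here your proposal has a genuine gap.

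The paper does \emph{not} track the sign of $v(X)$ or of $\mathcal{S}_i$ analytically as $\nu_i$ varies. Instead it uses a continuity-plus-checkpoint argument: for $\nu_1>0$ it sets $\nu_1=\nu$, where $\lambda_1$ and all coefficients $\mathcal{A}_1,\mathcal{B}_1,\mathcal{C}_1,\mathcal{D}_1$ are explicitly computable in terms of $F_1,F_3$ (Section~\ref{The_case_nu}), and then verifies by computer algebra that $\iota_{(\nu,\delta)}(x)=(f(x,\delta)_1,f(x,\delta)_2,-f(x,\delta)_3)=\tau^a_\nu(x)$. For $\nu_1<0$ it repeats this at $\nu_1=\nu-2K_a$ (Section~\ref{Sec:The_case_2K_plus_nu}). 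Continuity in $\nu_1$ on each connected interval then propagates the identification to all of $\tilde{I}$. Your proposal instead promises to ``track how $v(X)$ changes sign when $\nu_i<0$ versus $\nu_i>0$ \ldots\ and how this is compensated by flipping $\delta$,'' but you never perform this computation; since $\lambda_i$ depends only on $\sn^2(\nu_i/2)$, the hyperboloid is insensitive to $\sgn(\nu_i)$, and the distinction between the two rulings enters only through the sign of $\mathcal{S}_i$ in $d_\pm$. Making this precise is exactly the nontrivial part, and without either an explicit computation or the paper's checkpoint-plus-continuity device your argument is incomplete.

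Two smaller points. First, you call $\iota^{\pm\delta}_{\mathcal{H}_i,\mathcal{C}_2}$ a ``rational map of $\mathbb{R}^3$''; it is not, because of the square root in $\mathcal{S}_i$ (only the special cases of Section~\ref{Sec:birational} are rational). Your Zariski-density step therefore needs to compare algebraic, not rational, maps --- still valid, but the phrasing is misleading. Second, the statement to be proved is an equality of maps \emph{on each base curve}; the paper never globalizes to $\mathbb{R}^3$ via density, it simply shows $\iota_{(\nu_1,\sgn(\nu_1)\delta)}=\tau^a_{\nu_1}$ and $\iota_{(\nu_2,-\sgn(\nu_2)\delta)}=\tau^a_{\nu_2}$ and then invokes Theorem~\ref{Thm:lamda_i}. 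Your density argument is an unnecessary detour.
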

\noindent For values of $\nu_i$ where the quadric (\ref{H_i}) degenerates, namely $\nu_i=0$ or $\nu_i=2K_{\sigma}$, the corresponding involutions are birational since $\mathcal{S}_i$ vanishes. The (trivial) maps for these cases are explicitly given in Section \ref{Sec:birational}. For all other values of $\nu_i$ the choice of the sign of the parameter $\delta$ of the involutions $\iota_{(\nu_i,\pm\delta)}$ is visualized in Figure \ref{Fig:sgn}.
\begin{proof}
By the geometric construction it is clear that two such involutions exist and give rise to $f$. It only remains to show that the compositions induce a walking along the correct family of generators of the hyperboloids. This is equivalent to the correct choice of the sings of the input parameters $\delta$. We give a proof why the pair of signs $(\sgn(\nu_1), -\sgn(\nu_2))$ will fulfill the claim for case (a). Case (b) follows in a similar way. \\ \\Step 1: $\iota_{(\nu_1,\sgn(\nu_1)\delta)} = \tau_{\nu_1}^{a}$\\
We know, that  the given involution $\iota_{(\nu_1,\pm \delta)}$ corresponds to either  $\tau_{\nu_1}^{a}$ or $\tau_{-\nu_1}^{a}$ from (\ref{tau1}).
We distinguish the cases $\nu_1 \gtrless 0$. Within thesese cases a continous deformation of the parameter $\nu_1$ will lead to a continous deformation of $\iota$ and $\tau$. Hence it suffices to show the claim for a fixed $\nu_1>0$ and a fixed $\nu_1<0$ which proves it, by continuity, for all $\nu_i\in \tilde{I}$.\\
Step 1.1. : $\nu_1>0$\\
We choose the value $\nu_1=\nu$ where we know that $$\tau^{a}_{\nu_1}(x)=\begin{pmatrix}
f(x,\delta)_1\\
f(x,\delta)_2\\
-f(x,\delta)_3\end{pmatrix}.$$
Since the value of $\sn^2\left(\frac{\nu}{2}\right)$ is explicitly known in terms of the conserved quantities, the parameter $\lambda_1$ and thus the hyperboloid $\mathcal{H}_1 = \mathcal{C}_1 +\lambda_1\mathcal{C}_3$, hence its coefficients, can be explicitly determined, see Section \ref{The_case_nu}.
With the help of mathematical calculation software it can be verified that $$\iota_{(\nu,\delta)}(x)=\begin{pmatrix}
f(x,\delta)_1\\
f(x,\delta)_2\\
-f(x,\delta)_3\end{pmatrix}$$
which proves the claim for all $0<\nu_i\in\tilde{I}.$\\
Step 1.2. : $\nu_1<0$\\
We follow the exact same procedure as in step 1.1., while setting $\nu_1=\nu-2K_a$. For this case the coefficients are explicitly given in \ref{Sec:The_case_2K_plus_nu}.
Again it can be verified that $$\iota_{(\nu-2K_a),-\delta} = -f(x,\delta) =\tau_{\nu-2K_a}(x)$$ which proves the claim also for $\nu_i<0$.\\ \\
Step 2: $\iota_{(\nu_2,-\sgn(\nu_2)\delta)} = \tau_{\nu_2}^{a}$\\ It follows directly from Step 1 that reversing the signs of $\delta$ gives rise to the maps $\tau_{\nu_2}^a$ defined in (\ref{tau2}).\\ \\
Together we find $$\iota_{(\nu_2,-\sgn(\nu_2) \delta)}\circ \iota_{(\nu_1,\sgn(\nu_1)\delta)} = \tau^{a}_{\nu_2}\circ\tau^{a}_{\nu_1} =f(\cdot,\delta)$$ which proves the statement with Theorem \ref{Thm:lamda_i}.
\end{proof}

\noindent In the following we will give explicit examples of the involutions $\iota_{(\nu_i,\pm\delta)}$ for different values of $\nu_i$ and see for which values these become birational mappings.
The corresponding underplaying computations have been done using mathematica.
\begin{figure}[t]
\begin{minipage}{0.3\linewidth}
\begin{overpic}[width=\linewidth]
{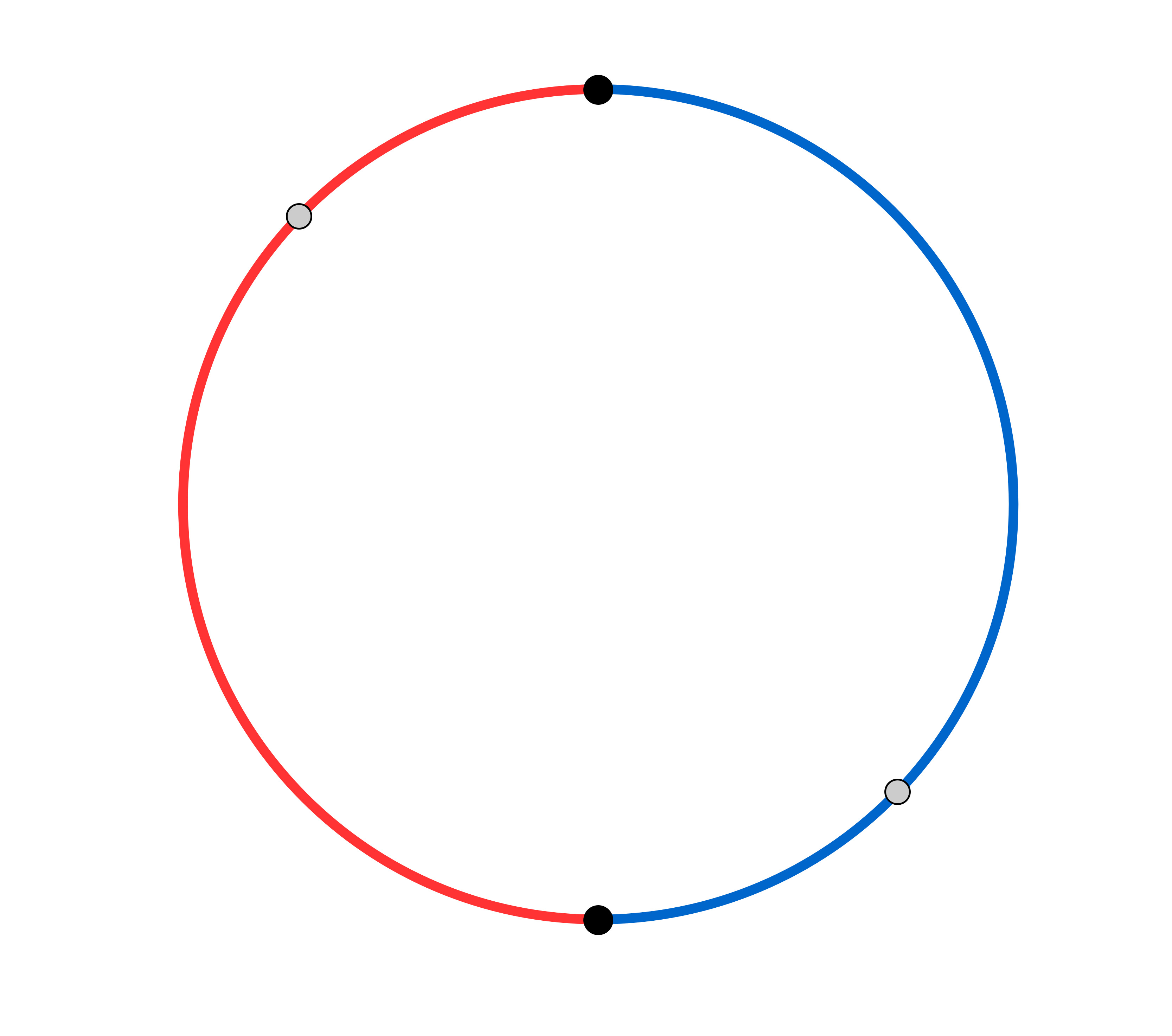}
\put(40,45){$\sgn(\nu_1)$}
\put(85,57){\textcolor{blue}{$+$}}
\put(9,30){\textcolor{red}{$-$}}
\put(49,0){0}
\put(80,18){$\nu$}
\put(-5,75){$-2K+\nu$}
\put(42,85){$\pm2K$}
\end{overpic}
\end{minipage}
\hfill
\begin{minipage}{0.3\linewidth}
\begin{overpic}[width=\linewidth]
{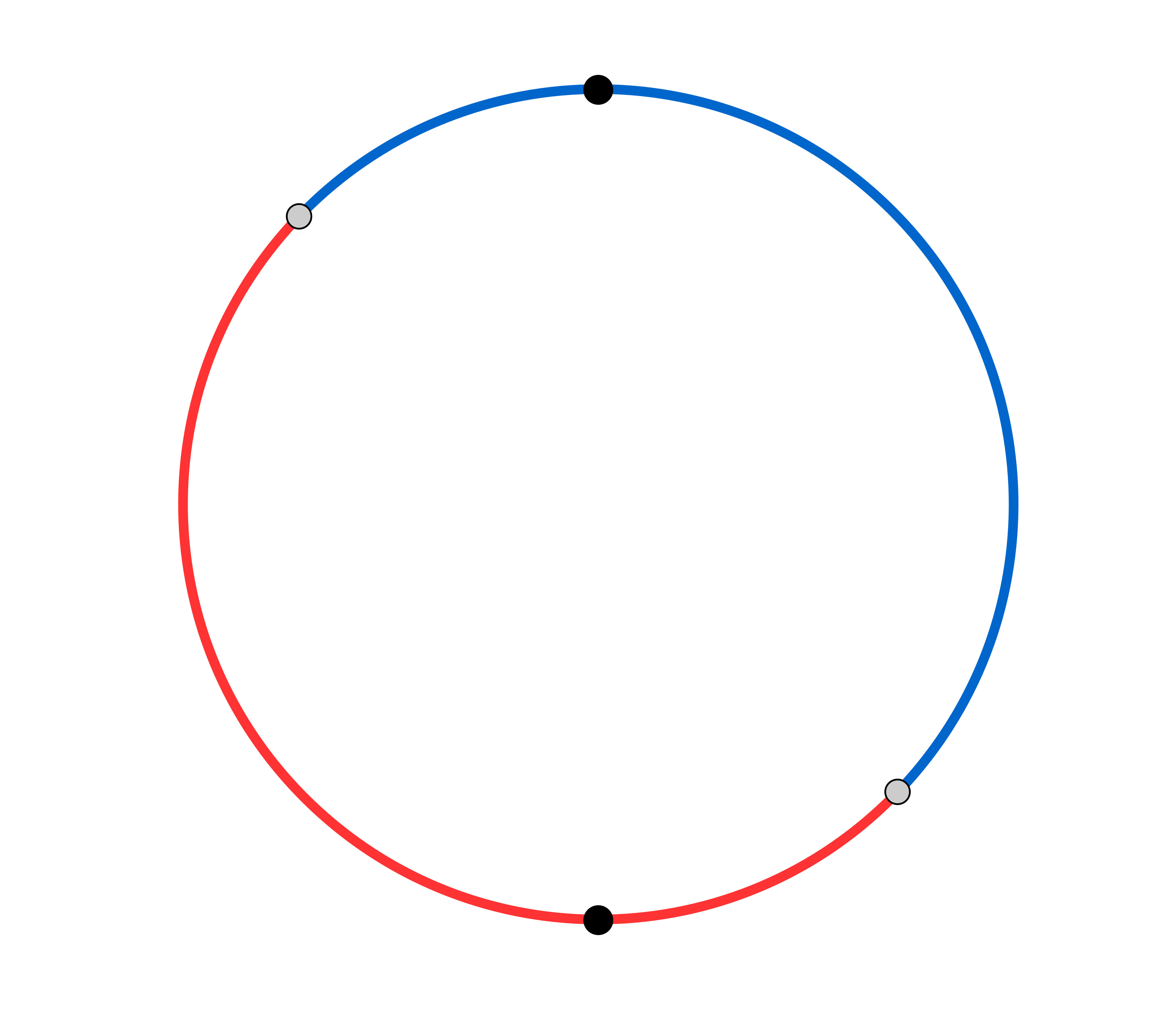}
\put(31,50){$-\sgn(\nu-\nu_1)$}
\put(22,40){$=-\sgn(\nu_2)$}
\put(80,70){\textcolor{blue}{$+$}}
\put(17,18){\textcolor{red}{$-$}}
\put(49,2){$\nu$}
\put(80,18){0}
\put(9,73){$\pm2K$}
\put(34,85){$-2K+\nu$}
\end{overpic}
\end{minipage}
\hfill
\begin{minipage}{0.3\linewidth}
\begin{overpic}[width=\linewidth]
{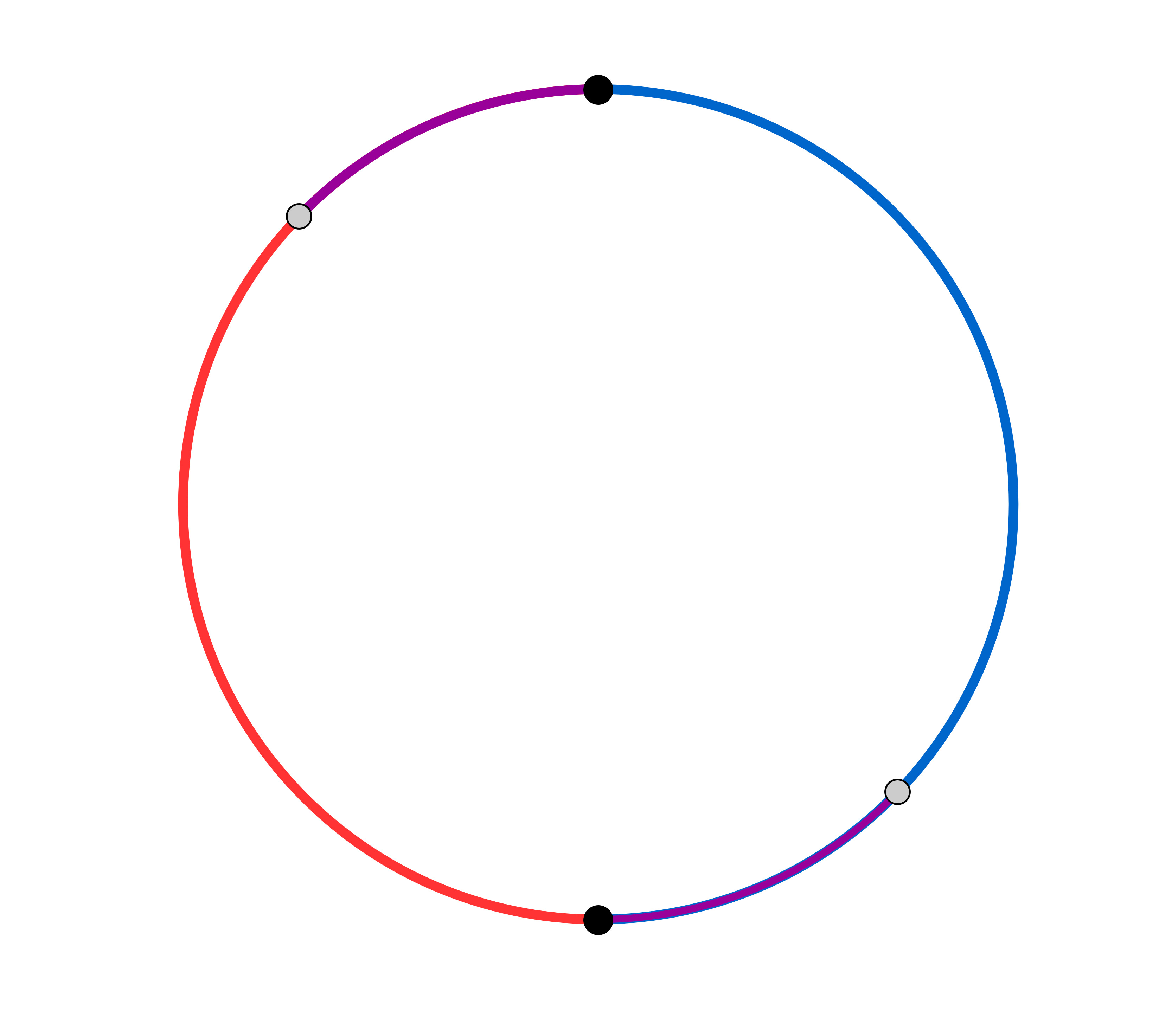}
\put(17,45){$\sgn(\nu_1){/} - \sgn(\nu_2)$}
\put(85,57){\textcolor{blue}{$+$}/\textcolor{blue}{$+$}}
\put(-1,30){\textcolor{red}{$-$}/\textcolor{red}{$-$}}

\put(63,6){\textcolor{blue}{$+$}/\textcolor{red}{$-$}}
\put(23,83){\textcolor{red}{$-$}/\textcolor{blue}{$+$}}
\end{overpic}
\end{minipage}
\caption{For case (a) the dEt is given by the composition $\iota_{(\nu_2,-\sgn(\nu_2) \delta)}\circ \iota_{(\nu_1,\sgn(\nu_1)\delta)}$. The figure can be interpreted in the following sense: Choosing a value for $\nu_1\in\tilde{I}$ corresponds to a unique point on the left circle. The point with the same position in the middle circle corresponds to $\nu_2$, while the colors correspond to the signs, respectively. The right circle then shows these signs that need to be chosen for the parameters $\delta$ for both inversions.}
\label{Fig:sgn}
\end{figure}
\subsection{The square root of the discrete time Euler top}
\label{Sec:sqrt}
Let $\nu_i =\frac{\nu}{2}$.
Then $$\sn^2\left(\frac{\nu_i}{2}\right) = \sn^2\left(\frac{\nu}{4}\right) = \frac{1-\cn\left(\frac{\nu}{2}\right)}{1+\dn\left(\frac{\nu}{2}\right)
}=
\begin{cases}
 \frac{1-\sqrt{F_1}}{1+\sqrt{F_3^{-1}}} &\text{for case (a),}\\
\frac{1-\sqrt{F_3^{-1}}}{1+\sqrt{F_1}} &\text{for case (b).}
\end{cases}$$
due to 
\begin{align*}
&\cn^2\left(\frac{\nu}{2}\right) = F_1, & &\dn^2\left(\frac{\nu}{2}\right)  = F_3^{-1} & &\text{for case (a)} \\
\text{and } 
&\cn^2\left(\frac{\nu}{2}\right) = F_3^{-1}, & &\dn^2\left(\frac{\nu}{2}\right)  = F_1 & &\text{for case (b)}
\end{align*} and half argument identities of $\sn^2$, see \cite[§22.6(iii)]{Nist}.
It follows 
\begin{align*}
\lambda &= -\frac{1-F_1}{1-F_3}\ns^2\left(\frac{\nu}{4}\right) = \frac{1-F_1}{1-F_3}\frac{\sqrt{F_3^{-1}}+1}{\sqrt{F_1}-1}=-\frac{1+\sqrt{F_1}}{\sqrt{F_3}(1-\sqrt{F_3})
} & &\text{ for case (a)}\\
\text{ and }\lambda &= -\frac{1-F_1}{1-F_3}\sn^2\left(\frac{\nu}{4}\right) = \frac{1-F_1}{1-F_3}\frac{\sqrt{F_3^{-1}}-1}{\sqrt{F_1}+1}=\frac{1+\sqrt{F_1}}{\sqrt{F_3}(1+\sqrt{F_3})} & &\text{ for case (b).}
\end{align*} With this the hyperboloid $\mathcal{H}:=\mathcal{C}_1+\lambda \mathcal{C}_3$, i.e., $$\mathcal{H}\colon
\mathcal{A}x_1^2  + \mathcal{B} x_2^2\\ + \mathcal{C} x_3^2 +\mathcal{D}= 0
$$
is defined by  \begin{equation*}
\begin{aligned}
\mathcal{A} &= -\frac{1+\sqrt{F_1}}{\sqrt{F_3}( 1-\sqrt{F_3})} \delta_2\delta_3, & \mathcal{B} &=\frac{1+\sqrt{F_2}}{\sqrt{F_2}(1-\sqrt{F_3})}\delta_1\delta_3,\\
\mathcal{C} &= -F_1\delta_1\delta_2, & \mathcal{D} &= \sqrt{F_1}(1+\sqrt{F_1})(1+\sqrt{F_2})\\
\end{aligned}
\end{equation*} for case (a) and 
\begin{equation*}
\begin{aligned}
\mathcal{A} &= \frac{1+\sqrt{F_1}}{\sqrt{F_3}( 1+\sqrt{F_3})} \delta_2\delta_3, & \mathcal{B} &=-\frac{1-\sqrt{F_2}}{\sqrt{F_2}(1+\sqrt{F_3})}\delta_1\delta_3,\\
\mathcal{C} &= -F_1\delta_1\delta_2 & \mathcal{D}, &= \sqrt{F_1}(1+\sqrt{F_1})(1-\sqrt{F_2})\\
\end{aligned}
\end{equation*}
for case (b). 
With this the involutions $\iota_{(\frac{\nu}{2},\pm\delta)}$ from  (\ref{iotainversion}) simplify to 
\begin{align}
\label{inv_nu_2_case_a}
\iota^{a}_{(\frac{\nu}{2},\pm\delta)}\colon \mathbb{R}^3 \rightarrow \mathbb{R}^3, \qquad x \mapsto \hat{x} \ \text{  with } \ \begin{cases}\hat{x}_1=
\ddfrac{x_1\pm\delta_1x_2x_3}{\sqrt{1-\delta_1\delta_3 x_2^2}\sqrt{1-\delta_1\delta_2 x_3^2}}\\
\hat{x}_2=\ddfrac{x_2\pm\delta_2x_1x_3}{\sqrt{1-\delta_2\delta_3 x_1^2}\sqrt{1-\delta_1\delta_2 x_3^2}}\\
\hat{x}_3=
\ddfrac{-(x_3\pm\delta_3x_1x_2)}{\sqrt{1-\delta_2\delta_3 x_1^2}\sqrt{1-\delta_1\delta_3 x_2^2}}
\end{cases}
\end{align}
for case (a) and 
\begin{align}
\label{inv_nu_2_case_b}
\iota^{b}_{(\frac{\nu}{2},\pm\delta)}\colon \mathbb{R}^3 \rightarrow \mathbb{R}^3, \qquad x \mapsto \hat{x} \ \text{  with } \ \begin{cases}\hat{x}_1=
\ddfrac{-(x_1\mp\delta_1x_2x_3)}{\sqrt{1-\delta_1\delta_3 x_2^2}\sqrt{1-\delta_1\delta_2 x_3^2}}\\
\hat{x}_2=\ddfrac{x_2\mp\delta_2x_1x_3}{\sqrt{1-\delta_2\delta_3 x_1^2}\sqrt{1-\delta_1\delta_2 x_3^2}}\\
\hat{x}_3=
\ddfrac{x_3\mp\delta_3x_1x_2}{\sqrt{1-\delta_2\delta_3 x_1^2}\sqrt{1-\delta_1\delta_3 x_2^2}}
\end{cases}
\end{align}
for case (b).
With Theorem \ref{Thm:real_involution} we find that 
\begin{align}\begin{cases}
\label{iotainversion}
\iota^{a}_{(\frac{\nu}{2},-\delta)} \circ \iota^{a}_{(\frac{\nu}{2},\delta)} = f(\cdot,\delta) & \text{ for case (a)}\\
\iota^{b}_{(\frac{\nu}{2},\delta)}\circ \iota^{b}_{(\frac{\nu}{2},-\delta)} =f(\cdot,\delta) & \text{ for case (b)}\end{cases}
\end{align}
with $f(\cdot,\delta)$ as given in  (\ref{explicit_f}), where for each case the hyperboloids $\mathcal{H}_1$ and $\mathcal{H}_2$ coincide, see Figure \ref{Fig:nu_halbe}.\\ \\
\vspace{-.2cm}
The map \vspace{-.2cm} $$\phi(\cdot,\delta)\colon \mathbb{R}^3 \rightarrow \mathbb{R}^3, \qquad x \mapsto \hat{x}  \ \text{  with } \ \begin{cases}\hat{x}_1=
\ddfrac{x_1+\delta_1x_2x_3}{\sqrt{1-\delta_1\delta_3 x_2^2}\sqrt{1-\delta_1\delta_2 x_3^2}}\\
\hat{x}_2=\ddfrac{x_2+\delta_2x_1x_3}{\sqrt{1-\delta_2\delta_3 x_1^2}\sqrt{1-\delta_1\delta_2 x_3^2}}\\
\hat{x}_3=
\ddfrac{x_3+\delta_3x_1x_2}{\sqrt{1-\delta_2\delta_3 x_1^2}\sqrt{1-\delta_1\delta_3 x_2^2}}
\end{cases}$$ 
can be considered as a discretization of the continuous time Euler top given by (\ref{contin_eq_of_motion}).
On the set where the second iteration is defined iterating it twice constitutes the birational map $f(\cdot,\delta)$ and therefore it is referred to as the \emph{square root} of the HK-type discretization of the Euler top.
This map has been shown to admit remarkable properties considering spherical geometry and related discrete integrable maps, see \cite{petrera2014spherical}.
Similar to  $f(\cdot,\delta)$ the change of the sign of $\delta$ leads to the solutions of the discretization of the Euler top with negative time step.
\begin{figure}[t]
\begin{minipage}{0.48\linewidth}
\begin{overpic}[width=.8\linewidth]
{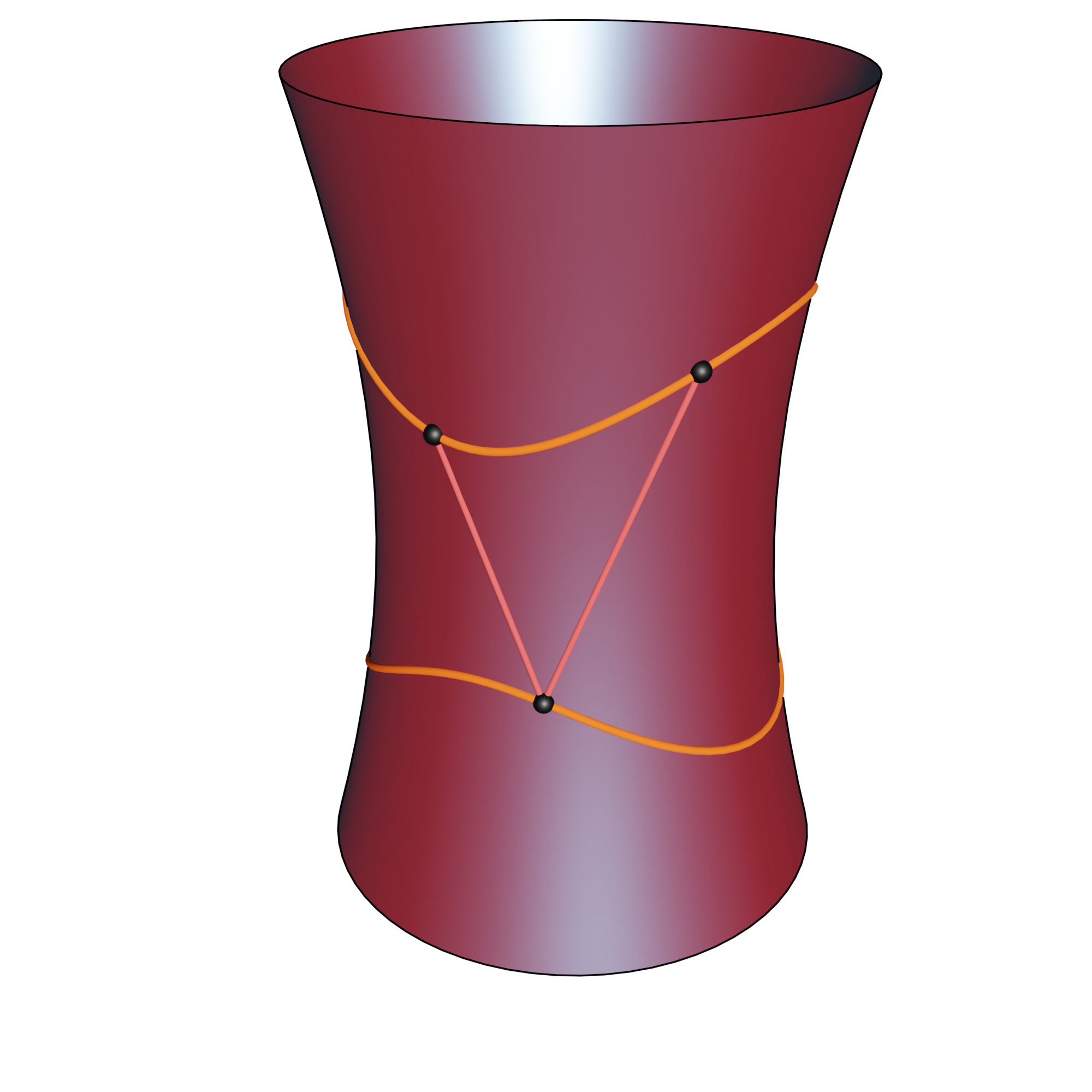}
\end{overpic}
\end{minipage}
\hfill
\begin{minipage}{0.48\linewidth}
\begin{overpic}[width=.8\linewidth]
{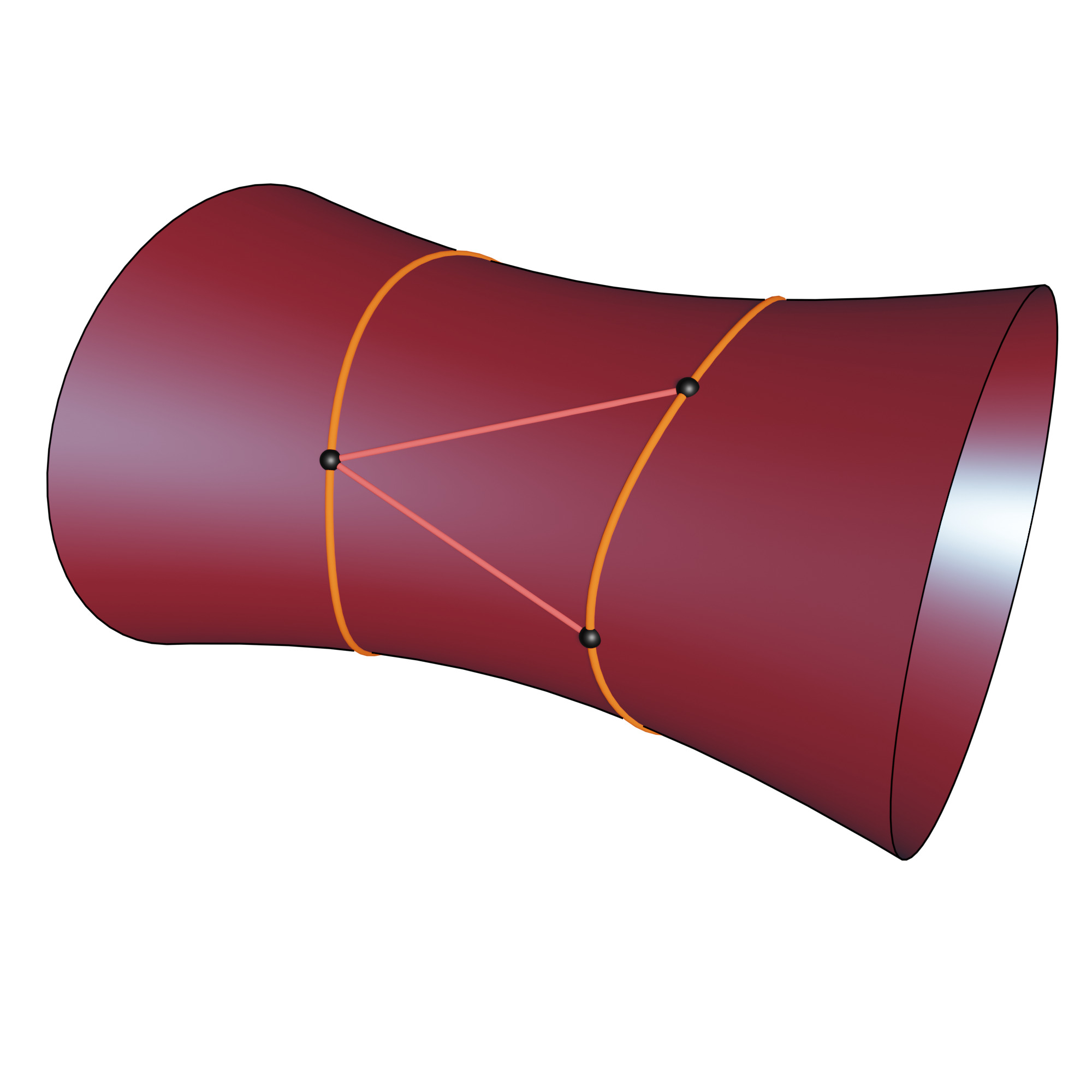} 
\end{overpic}
\end{minipage}
\caption{Case (a) and case (b): Setting $\nu_i=\frac{\nu}{2}$ constitutes the same hyperboloid twice. Both families of generators lie on this hyperboloid and the discrete time Euler top can be described as the composition of two involutions that only differ in the sign of the parameter $\delta$. This case is also shown on the titlepage.}
\label{Fig:nu_halbe}
\end{figure}
\subsection{The involutions $\iota_{(\nu_i,\pm \delta)}$ as birational 3-dimensional maps}
\label{Sec:birational}
We will investigate choices of $\nu_i$ such that these involutions become rational and then investigate the corresponding birational maps.
Let $$f(x,\pm \delta)=\begin{pmatrix}
f(x,\pm \delta)_1\\
f(x,\pm \delta)_2\\
f(x,\pm \delta)_3
\end{pmatrix} $$for the birational map $f$ given in (\ref{explicit_f}).\\
\subsubsection{The case $\nu_i = 0$}
Let $\nu_i =0$.
Then $$\sn^2\left(\frac{\nu_i}{2}\right) = \sn^2\left(0\right) = 0 .$$
It follows 
\begin{align*}
\lambda &= -\frac{1-F_1}{1-F_3}\ns^2\left(\frac{\nu}{2}\right) = \infty & &\text{ for case (a)}\\
\text{ and }\lambda &= -\frac{1-F_1}{1-F_3}\sn^2\left(\frac{\nu}{2}\right) = 0 & &\text{ for case (b).}
\end{align*} With this the quadric $\mathcal{H}=\mathcal{C}_1+\lambda \mathcal{C}_3$ degenerates to the cylinders 
\begin{align*}
\mathcal{H}&=\mathcal{C}_3   \text{ for case (a)}\\
\text{ and }\mathcal{H}&=\mathcal{C}_1 \text{ for case (b).}
\end{align*}
The involutions interchanging two points on $\mathcal{C}_1\cap \mathcal{C}_3$ along generating lines of $\mathcal{H}$ become the trivial birational maps
\begin{align*}
\iota^{a}_{(0,\pm \delta)}\colon \mathbb{R}^3 \rightarrow \mathbb{R}^3, \qquad x \mapsto \hat{x} \ \text{ with } \ \begin{cases}\hat{x}_1=
x_1\\
\hat{x}_2=x_2\\
\hat{x}_3=-
x_3
\end{cases}
\end{align*}
and \begin{align*}
\iota^{b}_{(0,\pm \delta)}\colon \mathbb{R}^3 \rightarrow \mathbb{R}^3, \qquad x \mapsto \hat{x} \ \text{ with } \ \begin{cases}\hat{x}_1=
-x_1\\
\hat{x}_2=x_2\\
\hat{x}_3=
x_3.
\end{cases}
\end{align*}
\subsubsection{The case $\nu_i = \nu$}
\label{The_case_nu}
Let $\nu_i =\nu$.
Then $$\sn^2\left(\frac{\nu_i}{2}\right) = \sn^2\left(\frac{\nu}{2}\right) =
\begin{cases}
 1-F_1 &\text{for case (a),}\\
1-F_3^{-1} &\text{for case (b).}
\end{cases}$$
due to (\ref{case_a_nu}) and (\ref{case_b_nu}).
It follows 
\begin{align*}
\lambda &= -\frac{1-F_1}{1-F_3}\ns^2\left(\frac{\nu}{2}\right) = -\frac{1}{1-F_3} & &\text{ for case (a)}\\
\text{ and }\lambda &= -\frac{1-F_1}{1-F_3}\sn^2\left(\frac{\nu}{2}\right) = F_3^{-1}(1-F_1) & &\text{ for case (b).}
\end{align*} With this the hyperboloid $\mathcal{H}=\mathcal{C}_1+\lambda \mathcal{C}_3$, i.e., $$\mathcal{H}\colon
\mathcal{A}x_1^2  + \mathcal{B} x_2^2\\ + \mathcal{C} x_3^2 +\mathcal{D}= 0
$$
is defined by  \begin{equation*}
\begin{aligned}
\mathcal{A} &= -\frac{1}{1-F_3} \delta_2\delta_3, & \mathcal{B} &= \frac{1}{1-F_3} \delta_1\delta_3,\\
\mathcal{C} &= -F_1\delta_1\delta_2, & \mathcal{D} &= F_1\\
\end{aligned}
\end{equation*} for case (a) and 
\begin{equation*}
\begin{aligned}
\mathcal{A} &= \frac{1-F_1}{F_3} \delta_2\delta_3, & \mathcal{B} &= F_1 \delta_1\delta_3,\\
\mathcal{C} &= -F_1\delta_1\delta_2, & \mathcal{D} &=-\frac{1-F_1}{F_3}\\
\end{aligned}
\end{equation*}
for case (b).
Explicit calculation shows that the involutions $\iota_{(\nu,\pm \delta)}$ from (\ref{iotainversion}) become the birational maps
\begin{align*}
\iota^{a}_{(\nu,\pm \delta)}\colon \mathbb{R}^3 \rightarrow \mathbb{R}^3, \qquad x \mapsto \hat{x} \ \text{ with } \ \begin{cases}\hat{x}_1=
f(x,\pm\delta)_1\\
\hat{x}_2=f(x,\pm\delta)_2\\
\hat{x}_3=-
f(x,\pm\delta)_3
\end{cases}
\end{align*}
and \begin{align*}
\iota^{b}_{(\nu,\pm \delta)}\colon \mathbb{R}^3 \rightarrow \mathbb{R}^3, \qquad x \mapsto \hat{x}\  \text{ with }\ \begin{cases}\hat{x_1}=-
f(x,\pm\delta)_1\\
\hat{x}_2=f(x,\pm\delta)_2\\
\hat{x_3}=
f(x,\pm\delta)_3.
\end{cases}
\end{align*}
The previous two cases have shown that the involutions corresponding to $\nu_i=0$ and $\nu_i=\nu$ are birational maps. With Theorem \ref{Thm:real_involution} we find  $$\begin{cases} \iota^{a}_{(\nu,-\delta)} \circ \iota^{a}_{(0,\pm\delta)} = \iota^{a}_{(0,\pm\delta)} \circ \iota^{a}_{(\nu,\delta)} =f(\cdot,\delta)\quad \text{ for case (a)} \\   \iota^{b}_{(\nu,\delta)} \circ \iota^{b}_{(0,\pm\delta)} = \iota^{b}_{(\nu\pm\delta,)}\circ \iota^{b}_{(\nu,-\delta)} =f(\cdot,\delta)\quad \text{ for case (b)}  \end{cases}$$ which represents the birational map $f(\cdot,\delta)$ as the composition of two birational involutions. The underlying geometric interpretation is shown in Figure \ref{Fig:nu_0}.\\ \\
\begin{figure}[h]
\begin{minipage}{0.48\linewidth}
\begin{overpic}[width=.8\linewidth]
{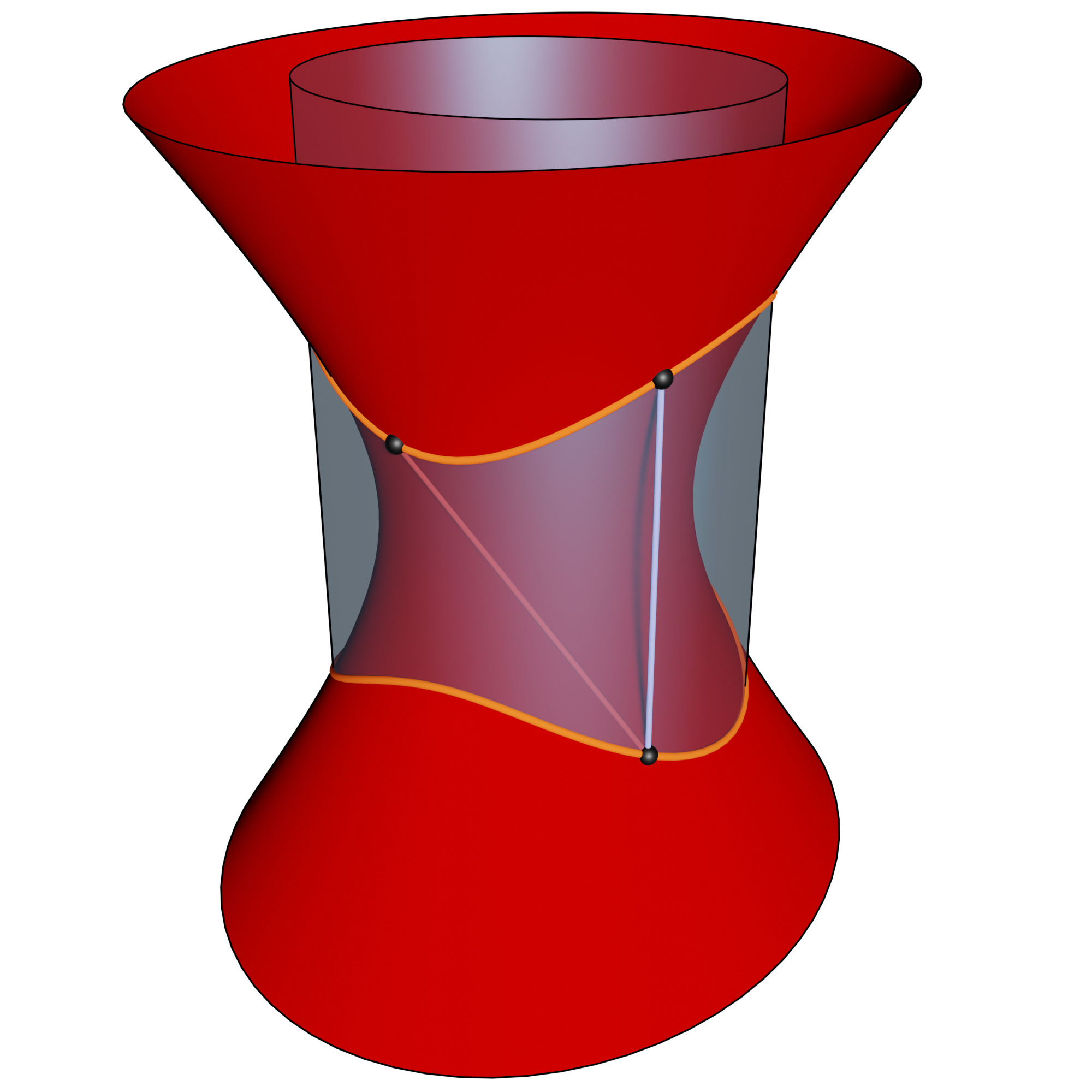}
\end{overpic}
\end{minipage}
\hfill
\begin{minipage}{0.48\linewidth}
\begin{overpic}[width=.8\linewidth]
{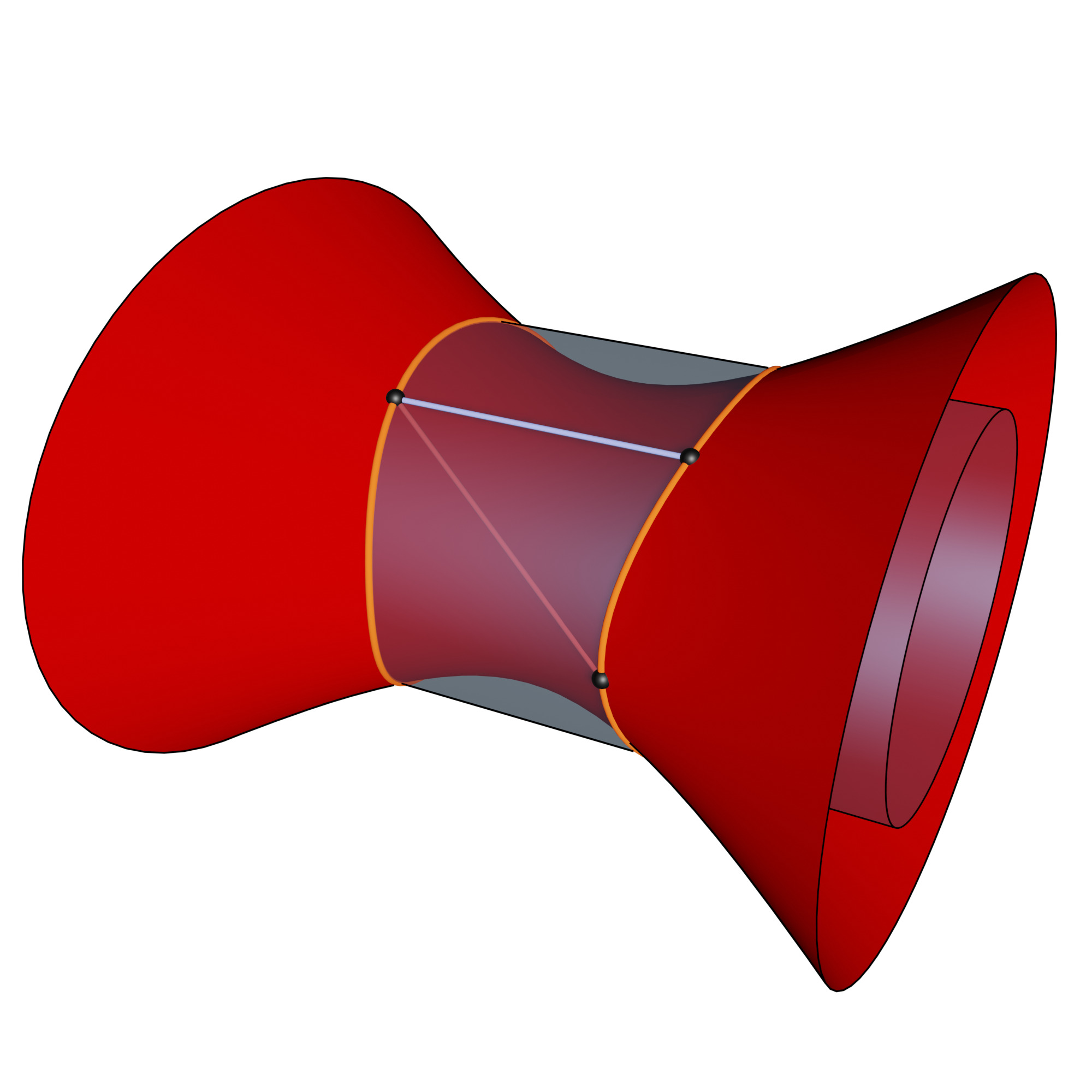} 
\end{overpic}
\end{minipage}
\caption{Case (a) and case (b): Setting $\nu_1=\nu$ and $\nu_2=0$ leads to $\mathcal{H}_2=\mathcal{C}_3$ for case (a) and $\mathcal{H}_2 = \mathcal{C}_1$ for case (b).}
\label{Fig:nu_0}
\end{figure}
\subsubsection{The case $\nu_i  = 2K_\sigma$}
For $\nu_i  = 2K_\sigma$ we have 
$$\sn^2\left(\frac{\nu_i}{2}\right) = \sn^2(K_\sigma) = 1.$$
Thus $$\lambda = -\frac{1-F_1}{1-F_3}$$ for both cases (a) and (b). With this we find that the ruled surface $\mathcal{H}=\mathcal{C}_1+\lambda \mathcal{C}_3$ degenerates to a cone, namely $$\mathcal{H}\colon
\mathcal{A}x_1^2  + \mathcal{B} x_2^2\\ + \mathcal{C} x_3^2 = 0
$$
with  \begin{equation*}
\begin{aligned}
\mathcal{A} &= -(1-F_1)\delta_2\delta_3, & \mathcal{B} &= (1-F_1F_3)\delta_1\delta_3 \ \ \text{ and } &
\mathcal{C}&=-F_1 (1-F_3)\delta_1\delta_2.
\end{aligned}
\end{equation*}
The involutions interchanging two points on $\mathcal{C}_1\cap \mathcal{C}_3$ along generating lines of $\mathcal{H}$ become the trivial birational maps
\begin{align*}
\iota^{\sigma}_{(2K_\sigma,\pm \delta)}\colon \mathbb{R}^3 \rightarrow \mathbb{R}^3, \qquad x \mapsto \hat{x}  \text{ with }  \begin{cases}
\hat{x}_1=-x_1\\
\hat{x}_2=-x_2\\
\hat{x}_3=-x_3
\end{cases}
\end{align*}
\subsubsection{The case $\nu_i=\nu -2K_\sigma$}
\label{Sec:The_case_2K_plus_nu}
Let  $\nu_i=\nu-2K_\sigma$.\\ Then  $$\sn^2\left(\frac{\nu_i}{2}\right)=\sn^2\left(-K_\sigma + \frac{\nu}{2}\right)=
\begin{cases}
\frac{1}{F_2} &\text{for case (a),}\\
F_2 &\text{for case (b).}
\end{cases}$$
due to (\ref{case_a_nu}) and (\ref{case_b_nu}) and addition theorems of Jacobi elliptic functions.\\
It follows 
\begin{align*}
\lambda &=-\frac{1-F_1}{1-F_3}F_2 
\end{align*} for both cases (a) and (b). With this the hyperboloid $\mathcal{H}=\mathcal{C}_1+\lambda \mathcal{C}_3$, i.e.,  $$\mathcal{H}\colon
\mathcal{A}x_1^2  + \mathcal{B} x_2^2\\ + \mathcal{C} x_3^2 +\mathcal{D}= 0
$$
is defined by  \begin{equation*}
\begin{aligned}
\mathcal{A} &=-\frac{1-F_1}{1-F_3}F_2 \delta_2\delta_3, & \mathcal{B}&=  - F_3 \frac{1-F_2}{1-F_3} \delta_1\delta_3,\\
\mathcal{C} &= -F_1\delta_1\delta_2, & \mathcal{D} &=-(1-F_1)(1-F_2).
\end{aligned}
\end{equation*}\\
Explicit calculation shows that the involutions $\iota^{\sigma}_{(\nu-2K_\sigma,\pm \delta)}$ from  (\ref{iotainversion}) become the birational maps
\begin{align*}
&\iota^{a}_{(\nu-2K_a,\pm \delta)}\colon \mathbb{R}^3 \rightarrow \mathbb{R}^3, \qquad x \mapsto \hat{x}=-f(x,\mp\delta)\\
\text{and} \qquad &\iota^{b}_{(\nu-2K_b,\pm \delta)}\colon \mathbb{R}^3 \rightarrow \mathbb{R}^3, \qquad x \mapsto \hat{x}=-f(x,\pm\delta).\\
\end{align*}
Again we can combine the previous two cases $\nu_i = 2K_\sigma$ and $\nu_i=\nu-2K_\sigma$
and with Theorem \ref{Thm:real_involution} we find that $$\begin{cases} \iota^{a}_{(2K_a,\pm\delta)} \circ \iota^{a}_{(\nu-2K_a,-\delta)} = \iota^{a}_{(\nu-2K_a,\delta)} \circ \iota^{a}_{(2K_a, \pm \delta)} =f(\cdot,\delta)\quad \text{ for case (a)} \\  \iota^{b}_{(2K_b,\pm\delta)} \circ \iota^{b}_{(\nu-2K_b,\delta)} = \iota^{b}_{(\nu-2K_b,-\delta)} \circ \iota^{b}_{(2K_b, \pm \delta)} =f(\cdot,\delta)\quad \text{ for case (b)}  \end{cases}$$ which represents the birational map $f(\cdot,\delta)$ as the composition of another two birational involutions.\newpage
\section{Concluding remarks}
We have shown that the solutions of the  Hirota-Kimura type discretization of the Euler top can be described as the composition of two involutions. These can be investigated in complex or real space and we found examples such that they can be expressed as birational maps in $\mathbb{R}^3$. In particular these examples are those where at least one of the involutions coincides with the birational map, given by the dEt, up to signs and factor. Further examples are not known yet.\\ \\  Beyond the description of the dEt as the composition of two involutions, the geometric background opens new questions and fields of study corresponding the discrete time Euler top. For example the geometry introduces a close correspondence to QRT maps and the theorem of Poncelet, which might be of great interest to study deeper. A correspondence between the geometry and QRT maps can be found in \cite{bobenko2020noneuclidean}.
\newpage
\bibliographystyle{plain}
\bibliography{Masterarbeit}
\end{document}